\newcommand{\ifabs}[2]{#2}
\renewcommand{\ifabs}[2]{#1}
\newcommand{\ifarxiv}[2]{#2}
\renewcommand{\ifarxiv}[2]{#1}
\newcommand{\todo}[1]{\typeout{TODO: \the\inputlineno: #1}\textbf{[[[ #1 ]]]}}
\newcommand{\concept}[1]{\emph{#1}}
\newtheorem{theorem}{Theorem}
\newtheorem{lemma}[theorem]{Lemma}
\newtheorem{definition}[theorem]{Definition}
\newtheorem{claim}[theorem]{Claim}
\newtheorem*{myclaim}{Claim}
\newtheorem*{remark*}{Remark}
\newcommand{\newloglike}[2]{\newcommand{#1}{\mathop{\rm #2}\nolimits}}
\newloglike{\sgn}{sgn}
\newcommand{\nul}[1]{{\it et al.\/}}
\newcommand{\unique}[1]{up-to-{#1} unique}
\begin{document}

\title{Correlation Decay  up to Uniqueness in Spin Systems}
\author{
Liang Li~\thanks{This work was done when this author visited Microsoft Research Asia.}\\
{Shandong University}\\
{\texttt{li.liang@sdu.edu.cn}}
\and
Pinyan Lu\\
{SHUFE}\\
{\texttt{lu.pinyan@mail.shufe.edu.cn}}\and
Yitong Yin\\
Nanjing University\\
\texttt{yinyt@nju.edu.cn}
}
\date{}

\maketitle

\begin{abstract}
We give a complete characterization of the two-state anti-ferromagnetic spin systems which are of strong spatial mixing on general graphs. We show that a two-state anti-ferromagnetic spin system is of strong spatial mixing on all graphs of maximum degree at most $\Delta$ if and only if the system has a unique Gibbs measure on infinite regular trees of degree up to $\Delta$, where $\Delta$ can be either bounded or unbounded. As a consequence, there exists an FPTAS for the partition function of a two-state anti-ferromagnetic spin system on graphs of maximum degree at most $\Delta$ when the uniqueness condition is satisfied on infinite regular trees of degree up to $\Delta$. In particular, an FPTAS exists for arbitrary graphs if the uniqueness is satisfied on all infinite regular trees. This covers as special cases all previous algorithmic results for two-state anti-ferromagnetic systems on general-structure graphs.

Combining with the FPRAS for two-state ferromagnetic spin systems of Jerrum-Sinclair and Goldberg-Jerrum-Paterson, and the very recent hardness results of Sly-Sun and independently of Galanis-\u{S}tefankovi\u{c}-Vigoda, this gives a complete classification, except at the phase transition boundary, of the approximability of all two-state spin systems, on either degree-bounded families of graphs or family of all graphs.

\end{abstract}

\ifabs{ \setcounter{page}{0} \thispagestyle{empty} \vfill
\pagebreak }{
}

\section{Introduction}
Spin systems are well studied in the areas of Statistical Physics, Applied Probability and Computer Science as a general framework to capture the essence of how local interactions and constrains affect the macroscopic properties of  particle systems. A system is usually described by a graph, with each vertex in one of a fixed number of states called spins, and edges specifying the neighborhood relation of the system.

Let $G(V,E)$ be a graph and $q$ be the number of spin states. A configuration of the system is one of the $q^{|V|}$ possible assignments $\sigma: V \rightarrow [q]$. Each configuration $\sigma$ has an energy $H(\sigma)$ as a sum over all edges and vertices, such that the contribution of an edge $(u,v)\in E$ is determined by a symmetric function of the spin states $\sigma(u)$ and $\sigma(v)$, and the contribution of a vertex $v\in V$ is determined by a function of its spin state $\sigma(v)$. The weight of a configuration $\sigma$ is $w(\sigma)=\exp(-\frac{H(\sigma)}{T})$, where $T$ is the temperature. 
We focus on the two-state spin systems.
Up to normalization, a two-state spin system is fully captured by three parameters $(\beta, \gamma, \lambda)$, where $\beta$ and $\gamma$ determine the symmetric function for edge contribution and $\lambda$, also known as the \emph{external field}, determines the function for vertex contribution. 
The Gibbs measure is a natural probability distribution over all configurations such  that the probability of a configuration $\sigma$ is $\frac{w(\sigma)}{Z}$, where the normalizing factor $Z=\sum w(\sigma)$ is called the partition function. The partition function encodes rich information regarding the macroscopic behavior of the spin system. However, for almost all nontrivial settings it is \#P-hard to compute the precise value of partition functions.

One of the most important properties of spin systems is the correlation decay, which says that the correlation between the marginal Gibbs distributions of two vertices decays rapidly with respect to the distance between the two vertices. This property is also called \concept{weak spatial mixing} \cite{Weitz06}. 
Of greater algorithmic significance is the strong spatial mixing, which says that the correlations decay in the presence of arbitrary fixed spins at other vertices. For two-state spin systems, the strong spatial mixing may imply efficient approximation algorithms for the partition function.
It is then an important question to characterize the systems which exhibit strong spatial mixing on arbitrary instances of graphs by the parameters of the systems.

A two-state spin system is ferromagnetic if adjacent vertices favor agreement of spins, and is anti-ferromagnetic if otherwise.  For all two-state ferromagnetic spin systems, the partition functions can be efficiently approximated \cite{JS93,GJP03}. In the anti-ferromagnetic region, the correlation decay plays a central role in the approximability of partition functions. It is believed (see \cite{inapp_MWW09}) that for such models the approximability of the partition function is characterized by the uniqueness of Gibbs measure on infinite regular trees, which is  equivalent to the weak spatial mixing on the infinite regular trees. This condition is called the \emph{uniqueness condition}. 



In a seminal work~\cite{Weitz06}, Weitz shows that for the hardcore model strong spatial mixing is characterized by the uniqueness condition, and in a recent work of Sinclair, Srivastava, and Thurley~\cite{SST} the same characterization is proved for the anti-ferromagnetic Ising model. Both models are important special two-state anti-ferromagnetic spin systems. On the hardness side, it is proved for the hardcore model by Sly~\cite{Sly10} and Galanis \emph{et al.}~\cite{GGSVY11}, and very recently for the general two-state anti-ferromagnetic spin systems by Sly and Sun~\cite{sly2012computational} and independently by Galanis, \u{S}tefankovi\u{c}, and Vigoda~\cite{galanis2012inapproximability} that violating the uniqueness condition implies the inapproximability of partition functions. Two questions remain open for our complete understanding of the correlation decay and computation in two-state spin systems: the characterizations of the strong spatial mixing and approximability of general two-state spin systems.

\subsection{Our results}
We characterize the two-state anti-ferromagnetic spin systems which exhibit strong spatial mixing on general degree-bounded graphs or arbitrary graphs by the uniqueness of Gibbs measure on infinite regular trees. 


\begin{theorem}\label{thm-correlation-decay}
For any finite $\Delta\ge 2$ or $\Delta=\infty$, a two-state anti-ferromagnetic spin system is of strong spatial mixing on graphs of maximum degree at most $\Delta$ if and only if the system exhibits uniqueness on infinite $d$-regular trees for all $d\le\Delta$.
\end{theorem}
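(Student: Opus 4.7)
The ``only if'' direction is routine. Strong spatial mixing on the class of all graphs of maximum degree at most $\Delta$ implies weak spatial mixing on every such graph, and in particular on every infinite $d$-regular tree with $d\le\Delta$. Weak spatial mixing on the infinite $d$-regular tree is a standard equivalent formulation of uniqueness of the Gibbs measure on that tree, so the forward implication is immediate.

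For the ``if'' direction, the plan is to proceed via a tree reduction followed by a potential-function-based contraction analysis. First, by Weitz's self-avoiding-walk tree construction, the marginal distribution at a vertex $v$ of an arbitrary graph $G$ of maximum degree $\le \Delta$ coincides with the marginal distribution at the root of an associated tree $T_{\mathrm{SAW}}(G,v)$, also of maximum degree $\le \Delta$, with boundary conditions inherited from an enumeration of walks out of $v$. Thus it suffices to establish SSM on trees of maximum degree $\le \Delta$ under arbitrary boundary conditions.

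On such trees, work with the log-ratio messages $R_v=\log(p_v(+)/p_v(-))$; at an internal vertex with children $u_1,\ldots,u_d$ these satisfy an explicit recursion $R_v = F_d(R_{u_1},\ldots,R_{u_d})$ determined by $(\beta,\gamma,\lambda)$. The next step is to design a potential function $\Phi$, equivalently a change of variable $y=\phi(R)$, such that in the transformed coordinates the recursion $\tilde F_d$ satisfies a uniform contraction $\sum_{i=1}^{d} |\partial_i \tilde F_d(y_1,\ldots,y_d)| \le 1-\varepsilon$ for some $\varepsilon>0$ depending only on the gap to the uniqueness boundary. Iterating this contraction down the $\mathrm{SAW}$ tree converts bounded differences in boundary conditions at depth $\ell$ into messages that differ by $O((1-\varepsilon)^{\ell})$ at the root, yielding SSM on $G$.

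The main obstacle, and the heart of the argument, is to verify this uniform contraction from the uniqueness hypothesis, which by itself is only a statement about the derivative of a one-dimensional recursion at its symmetric fixed point on a regular tree. The bridge between these two statements has two parts: (a) symmetrize the problem by choosing $\Phi$ so that the gradient norm on the left side of the contraction inequality is maximized, in the worst case over all message configurations, at an equal-argument point $y_1=\cdots=y_d$; (b) show that over all $d\le \Delta$, this worst-case value is bounded by what the uniqueness condition on the infinite $d$-regular tree directly guarantees. I expect part (a) to require a carefully tailored potential that accommodates the full three-parameter anti-ferromagnetic family $(\beta,\gamma,\lambda)$ with $\beta\gamma<1$, generalizing the ad hoc potentials used by Weitz for the hardcore case and by Sinclair--Srivastava--Thurley for Ising; part (b) will likely reduce to a monotonicity or rearrangement argument in the transformed coordinates, together with the observation that adding a pendant boundary leaf is equivalent to a modified external field, so that trees of mixed degree $\le \Delta$ reduce to the $\Delta$-regular worst case.
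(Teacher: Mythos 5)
The overall architecture of your proposal (SAW-tree reduction, potential change of variables, contraction of amortized messages) is the same as the paper's, and your ``only if'' direction and part~(a) of the ``if'' direction are on the right track. But your part~(b) contains a genuine conceptual error that the paper is explicitly designed to avoid, and fixing it is arguably the main contribution of the theorem.

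You propose to finish by observing that ``adding a pendant boundary leaf is equivalent to a modified external field, so that trees of mixed degree $\le\Delta$ reduce to the $\Delta$-regular worst case.'' That padding trick (fix a dummy child's spin distribution so it has no effect on its parent) requires the distribution $(p_0,p_1)$ to solve $p_0+p_1=1$ and $\beta p_0+p_1 = p_0+\gamma p_1$ with $p_0,p_1\in[0,1]$, which is possible only when $0\le\beta,\gamma\le 1$. For a general anti-ferromagnetic $(\beta,\gamma,\lambda)$ with $\gamma>1$ the trick fails, and moreover the monotonicity it is supposed to buy you genuinely fails: the paper exhibits (Section on non-monotonicity) systems with $\gamma>1$ that are in the uniqueness region for the infinite $\Delta$-regular tree yet in the non-uniqueness region for some smaller $d'$-regular tree with $d'<\Delta$. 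For such systems strong spatial mixing on the $\Delta$-regular tree does \emph{not} imply strong spatial mixing on all graphs of maximum degree $\le\Delta$, so no reduction to the $\Delta$-regular tree can work. This is precisely why the theorem's hypothesis is uniqueness on infinite $d$-regular trees for \emph{all} $d\le\Delta$, not just $d=\Delta$.

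The correct plan is to avoid the reduction entirely: analyze the SAW tree as an arbitrary irregular tree of maximum degree $\le\Delta$, and at a vertex with $d$ children bound the contraction rate $\alpha(d;\cdot)$ using the uniqueness hypothesis at that particular $d$. Concretely the paper first shows (Cauchy--Schwarz plus AM--GM, after the change of variable $z_i=(\beta x_i+1)/(x_i+\gamma)$) that the asymmetric $\alpha(d;x_1,\dots,x_d)$ is dominated by a symmetric $\alpha_d(\bar x)$, then maximizes $\alpha_d$ over $\bar x$, and shows that the maximum is bounded by $\sqrt{|f_d'(\hat x_d)|}$, which uniqueness on the $d$-regular tree makes $<1$. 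Your part (b)(1) already contains this idea; the pendant-leaf padding in (b)(2) is both wrong in general and unnecessary once (b)(1) is in place. Also a small point: the paper works with the ratio $R_v=p_v/(1-p_v)$ and potential $\Phi(R)=1/\sqrt{R(\beta R+1)(R+\gamma)}$ rather than log-ratios, and needs an additional base-case estimate (a uniform $O(1)$ bound on $\epsilon$ at the leaves of the recursion, which uses $\gamma>1$ when $\Delta=\infty$) that your sketch omits but would have to supply.
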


Due to Weitz's self-avoiding walk tree construction \cite{Weitz06},
the strong spatial mixing of a two-state spin system on degree-bounded graphs immediately implies an FPTAS for the partition function. Indeed, we show an even stronger notion of correlation decay introduced in a previous work~\cite{LLY}, namely the computationally efficient correlation decay, which gives FPTAS not only for the degree-bounded graphs but also for  arbitrary  graphs with unbounded degrees, when the corresponding uniqueness condition is satisfied. 

\begin{theorem}\label{thm-dichotomy-algorithms}
For any finite $\Delta\ge3$ or $\Delta=\infty$, there exists an FPTAS for the partition function of the  two-state anti-ferromagnetic spin system on graphs of maximum degree at most $\Delta$ if for all $d\le\Delta$ the system parameters lie in the interior of the uniqueness region of infinite $d$-regular tree.
\end{theorem}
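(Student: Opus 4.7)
My plan is to combine Theorem~\ref{thm-correlation-decay} with Weitz's self-avoiding walk (SAW) tree reduction and the computationally efficient correlation decay framework of~\cite{LLY}. Recall that Weitz's construction reduces computing the marginal probability of a vertex $v$ in a graph $G$ to computing the marginal at the root of a SAW tree $T_{\mathrm{SAW}}(G,v)$ with appropriate boundary conditions, and a standard self-reducibility argument expresses $Z(G)$ as a telescoping product of such conditional marginals, so that an $\varepsilon/n$-approximation to each marginal yields an FPTAS for $Z(G)$.

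For finite $\Delta$, the theorem is essentially immediate from Theorem~\ref{thm-correlation-decay}: strong spatial mixing on graphs of maximum degree at most $\Delta$ gives an exponential error decay $\alpha^L$ for some $\alpha<1$, and since the SAW tree has at most $O(\Delta^L)$ vertices of depth at most $L$, truncating the tree recursion at depth $L=O(\log(n/\varepsilon))$ yields an $\varepsilon$-approximation in time polynomial in $n$ and $1/\varepsilon$.

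The substantive case is $\Delta=\infty$, because the SAW tree can have unbounded branching and the bare $\alpha^L$ bound is swamped by the number of nodes at depth $L$. Here I would follow the strategy of~\cite{LLY} and work with a potential function: I expect to construct an increasing map $\Phi$ such that, after transforming the tree recursion for marginal ratios through $\Phi$, the sum of absolute values of the partial derivatives is bounded by some $\alpha<1$ \emph{uniformly in the degree} throughout the interior of the uniqueness region. This amortized per-child contraction induces a ``metric depth'' on the SAW tree that grows by a nonnegligible amount per branching step rather than per hop, and one then checks that within any logarithmic metric depth the tree contains only polynomially many vertices. Truncating at such a metric depth, together with the same self-reducibility, then yields an FPTAS of total cost polynomial in $n$ and $1/\varepsilon$.

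The main obstacle is constructing a potential function $\Phi$ that works \emph{simultaneously} across the entire interior of the uniqueness region, for every tree degree $d\le\Delta$ and every admissible parameter triple $(\beta,\gamma,\lambda)$. In the hardcore and Ising special cases this was achieved with ad hoc maps (logarithmic and hyperbolic, respectively) in~\cite{Weitz06,SST}, but here one must produce a single family of potentials that handles the full two-parameter family $(\beta,\gamma)$ of anti-ferromagnetic edge weights with arbitrary external field and delivers the one-step amortized contraction tightly up to the uniqueness threshold. Once such a $\Phi$ is in hand, the remaining error analysis for the truncated SAW-tree recursion and the assembly of the FPTAS via self-reducibility are straightforward extensions of the arguments in~\cite{Weitz06,LLY,SST}.
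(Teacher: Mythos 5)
Your proposal follows the same outline as the paper: self-reducibility to conditional marginals, Weitz's SAW tree, a potential-function contraction analysis, and, for $\Delta=\infty$, the computationally efficient correlation decay framework of~\cite{LLY}. For finite $\Delta$ the argument you sketch matches the paper's proof exactly. But there is a genuine gap in the $\Delta=\infty$ case. You ask for a potential $\Phi$ that makes the one-step contraction bounded by some $\alpha<1$ \emph{uniformly in the degree}; this is exactly what Lemma~\ref{lemma-asymmetric-alpha-bound} delivers, and it is enough for strong spatial mixing and for the finite-$\Delta$ algorithm, but it is not sufficient for the unbounded-degree FPTAS. When the SAW tree is truncated at $M$-based metric depth $\ell$ (so that the explored neighborhood has polynomial size $n^{O(1)}M^{\ell}$), a single node of degree $d\approx M^{\ell}$ places its children at ordinary depth $1$ but already at the truncation frontier. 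A uniform per-step bound of $\alpha$ then yields an error estimate of only $\alpha^{O(1)}$, not $\alpha^{\ell}$; the uniform contraction does not cash out against the branching factor.

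What is actually needed is a \emph{degree-dependent} contraction bound that improves with $d$ fast enough to offset the branching, namely $\alpha(d;\cdot)\le\alpha^{\lceil\log_M(d+1)\rceil}$. This is the paper's Lemma~\ref{lemma-alpha-unbounded}, proved by combining the uniform bound with the unconditional estimate $\alpha(d;\cdot)\le d\sqrt{\lambda/\gamma^{d+1}}$ from~\eqref{eq:alpha-unbounded}, which decays super-exponentially precisely because universal uniqueness forces $\gamma>1$ (Item~\ref{lemma-threshold-infty} of Lemma~\ref{lemma-threshold}). This extra quantitative leverage at high degree lies outside the potential-function machinery itself, and without making it explicit your metric-depth truncation has no error guarantee. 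Aside from this missing ingredient and the unconstructed potential $\Phi(R)=1/\sqrt{R(\beta R+1)(R+\gamma)}$, the rest of your plan (the $n^{2}M^{\ell}$ neighborhood bound, the telescoping product, the overall polynomial running time) coincides with the paper's argument.
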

In the above two theorems, the $\Delta=\infty$ case represents the graphs of unbounded degree.

Due to a very recent hardness results of Sly and Sun \cite{sly2012computational} for general two-state spin systems and an independent result of Galanis, \u{S}tefankovi\u{c} and Vigoda  \cite{galanis2012inapproximability} for a less general setting, violating the uniqueness condition implies inapproximability of the partition function.

\begin{theorem}[Due to \cite{sly2012computational} and \cite{galanis2012inapproximability}]\label{thm-dichotomy-hardness}
For any finite $\Delta\ge3$ or $\Delta=\infty$, unless $NP=RP$ there does not exist an FPRAS for the partition function of the  two-state anti-ferromagnetic spin system on graphs of maximum degree at most $\Delta$ if for some $d\le\Delta$ the system parameters lie in the interior of the non-uniqueness region of infinite $d$-regular tree.
\end{theorem}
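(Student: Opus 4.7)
The plan is to prove hardness by the random bipartite regular graph gadget paradigm pioneered by Sly for the hardcore model and extended in the two cited papers to general two-state antiferromagnetic systems. Concretely, I would reduce from an NP-hard problem such as approximately computing the maximum cut (or, as in Sly's framework, a promise problem about distinguishing near-bipartite $d$-regular graphs), and show that an FPRAS for $Z$ on $\Delta$-bounded instances would collapse $NP$ into $RP$. All the action is in designing a gadget whose partition function concentrates on a bimodal distribution whose two modes can be used to transmit a bit of information through edges of the reduction graph.

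The gadget is the uniformly random $(d,d)$-biregular bipartite graph $G$ for some $d\le\Delta$ witnessing non-uniqueness. On the $d$-regular tree the Gibbs recursion has two distinct stable fixed points in the interior of the non-uniqueness region; these fixed points predict two ``phases'' on $G$, corresponding to configurations in which the bias on one side of the bipartition is close to one fixed point and on the other side close to the other fixed point, and vice versa. The first step is therefore a variational analysis: write $\mathbb{E}[Z]$ (over the choice of $G$) as a sum over phase profiles $(\alpha,\alpha')\in[0,1]^2$ giving the fractions of $+$-spins on the two sides, and maximize the associated free-energy functional $\Phi(\alpha,\alpha';\beta,\gamma,\lambda,d)$. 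The key structural fact is that in the non-uniqueness regime $\Phi$ has exactly two symmetric global maxima at $(\alpha_*,\alpha'_*)$ and $(\alpha'_*,\alpha_*)$ corresponding to the two tree fixed points.

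The second step is the second moment method: show that for parameters in the interior of the non-uniqueness region one has $\mathbb{E}[Z^2]\le C\,\mathbb{E}[Z]^2$, so that by Paley--Zygmund $Z$ is within a constant of its mean with constant probability, and hence within a $(1+o(1))$ factor after the standard small-subgraph conditioning correction that removes the polylog gap caused by short cycles. The key computation is a two-dimensional Laplace-type analysis of the second-moment generating function around the symmetric maximizer, showing that in the non-uniqueness regime the Hessian is negative definite transverse to the symmetry direction. Combined, these give that the two ``phases'' each capture close to half of $Z$ with high probability, and in particular that a $(1\pm\epsilon)$-approximation of the partition function of a slightly perturbed gadget (in which one vertex has its external field tilted) reveals which of the two phases that vertex lies in.

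Third, compose gadgets: replace each vertex of a $\Delta$-regular instance of the hard source problem by a copy of the gadget and glue the edges to express the target combinatorial interaction through a polynomial number of queries to the hypothetical FPRAS, deducing, say, MAX-CUT or the relevant bisection-style problem. The main obstacle, and what sets \cite{sly2012computational,galanis2012inapproximability} apart from the hardcore case, is the second-moment estimate for general $(\beta,\gamma,\lambda)$: with an external field and with $\beta\ne\gamma^{-1}$ the $\pm$ symmetry is broken, and one must identify the correct symmetrization of the free-energy functional and verify negative-definiteness of the restricted Hessian uniformly throughout the interior of the non-uniqueness region. Executing this analysis, together with handling the boundary (small-subgraph conditioning and the random-graph-to-worst-case reduction), is where essentially all of the technical work lies.
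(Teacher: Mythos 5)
This theorem is explicitly attributed to Sly--Sun and Galanis--\u{S}tefankovi\u{c}--Vigoda, and the paper does not attempt to reprove it; the paper's entire ``proof'' is a short paragraph reducing the stated form to those references. Specifically, the paper observes that (i) the cited hardness results are for $d$-regular graphs with fixed $d\ge3$, which immediately implies hardness for graphs of maximum degree at most $\Delta$ and for arbitrary graphs, and (ii) every anti-ferromagnetic $(\beta,\gamma,\lambda)$ satisfies uniqueness on the infinite $2$-regular tree (the infinite path), so the ``for some $d\le\Delta$'' hypothesis here is equivalent to the $d\ge3$ non-uniqueness hypothesis of the cited works. Your proposal, by contrast, sketches the internal structure of the Sly--Sun argument itself: the random bipartite biregular gadget, the variational free-energy analysis locating the two fixed-point phases, the second-moment method with small-subgraph conditioning, and the final reduction from a MAX-CUT-type problem. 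As a high-level account of those papers' strategy this is reasonable, and you correctly identify the uniform negative-definiteness estimate for the asymmetric ($\beta\ne\gamma$, external field) case as the central technical obstacle; but you leave essentially all of that as an acknowledged gap, and more to the point, this is a fundamentally different undertaking from what the present paper does. You also omit the two short deductions above, which are the only content this paper actually contributes to the theorem and which are needed to bridge the cited $d$-regular, $d\ge3$ statements to the form asserted here.
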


The original theorem in \cite{sly2012computational} holds for $d$-regular graphs with fixed $d$, which immediately implies the hardness for degree-bounded graphs or arbitrary graphs. And the hardness condition in both \cite{sly2012computational} and \cite{galanis2012inapproximability} requires the non-uniqueness on a $d$-regular tree with $d\ge 3$. But the uniqueness on the infinite 2-regular tree (i.e.~the infinite path) always holds for any two-state anti-ferromagnetic spin system, thus the condition in Theorem \ref{thm-dichotomy-hardness} suffices.

For graphs of maximum degree 2 or less, the partition function can be computed exactly in polynomial time. 
Thus Theorem \ref{thm-dichotomy-algorithms} and Theorem \ref{thm-dichotomy-hardness} together with the FPRAS for two-state ferromagnetic spin system \cite{JS93, GJP03} 
give an almost complete (except at the phase transition boundary) classification of the approximability of the partition function of all two-state spin systems, on either degree-bounded families of graphs or family of all graphs.

\subsubsection{Regularity and monotonicity}

In Statistical Physics, the correlation decay is usually studied on regular graphs or even structurally symmetric graphs (e.g.~Bethe lattice). Although for hardness results considering only regular graphs will strengthen the lower bounds, from algorithmic perspective it is more general to consider spin systems on general graphs. The approximation algorithms in~\cite{Weitz06} for the hardcore model and~\cite{SST} for the anti-ferromagnetic Ising model are both for general graphs with bounded maximum degrees. As discussed in~\cite{SST,sly2012computational}, up to translation of parameters the hardcore and Ising models together are complete for general two-state spin systems on $d$-regular graphs with fixed $d$.
However, this does not cover the most general case, namely the general two-state spin system on general graphs, of either bounded or unbounded degree. A fundamental reason for this gap is the non-monotonicity of general spin systems.

The well studied hardcore and anti-ferromagnetic Ising models (along with all two-state spin systems with $\beta,\gamma\le 1$) are both monotone spin systems, in the sense that the uniqueness on infinite $d$-regular tree implies the uniqueness on all infinite regular trees with smaller degrees. This monotonicity does not necessarily hold in general two-state spin systems.

In~\cite{Weitz06}, Weitz established the following implication in the hardcore model:
\begin{myclaim}[Theorem 2.3 in~\cite{Weitz06}]
Strong spatial mixing on a $d$-regular tree implies the strong spatial mixing on all graphs of maximum degree at most $d$.
\end{myclaim}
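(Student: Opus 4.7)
My proof would use the now-standard two-step approach: first reduce from graphs to trees via the self-avoiding walk (SAW) tree construction, then reduce from general trees of bounded degree to the regular tree. Fix a graph $G=(V,E)$ of maximum degree at most $d$, a vertex $v \in V$, and a partial boundary configuration $\sigma$ on a set $\Lambda \subseteq V \setminus \{v\}$. I would construct the SAW tree $T = T_{\mathrm{SAW}}(G,v)$ rooted at $v$, whose vertices are self-avoiding walks in $G$ starting at $v$ with appropriate termination rules, and check that $\sigma$ induces a natural boundary $\sigma'$ on a subset $\Lambda' \subseteq V(T)$ such that the marginal of $v$ in $(G,\sigma)$ equals the marginal of the root of $T$ under $\sigma'$. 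This identity is the core content of the SAW tree construction and I would invoke it as a black box. Two further properties of this construction are essential: $T$ has maximum degree at most $d$, and the tree distance $d_T(\mathrm{root}, \Lambda')$ is at least the graph distance $d_G(v, \Lambda)$.

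Second, I would deduce SSM on any tree of maximum degree $\leq d$ from SSM on the infinite $d$-regular tree $\mathbb{T}_d$. Any such tree $T$ embeds isometrically as a subtree of $\mathbb{T}_d$; I would interpret the root marginal of $T$ as arising from a computation on $\mathbb{T}_d$ in which the phantom branches outside $T$ carry their free-boundary ratios. In the hardcore model the tree recursion $R(x_1,\ldots,x_k) = \lambda / \prod_i (1+x_i)$ is monotonically decreasing in each coordinate, so the sensitivity of the root marginal to a distant perturbation of the boundary of $T$ is dominated by the corresponding sensitivity in $\mathbb{T}_d$: every partial derivative of $R$ shrinks in magnitude when fewer children are present. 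Thus the contractive factor giving SSM on $\mathbb{T}_d$ suffices on $T$ as well.

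The main obstacle, and the step that requires care, is tracking the distance parameter through both reductions so that the exponential decay rate is preserved. Distances only expand when passing from $G$ to the SAW tree, and are preserved under the embedding $T \hookrightarrow \mathbb{T}_d$, so the rate $\alpha$ in the quantitative SSM bound $|p_v^\sigma - p_v^\tau| \leq C \exp(-\alpha \ell)$ transfers directly. The subtle monotonicity comparison at each internal vertex of $T$ is what makes the hardcore-specific structure of $R$ essential; in more general two-state systems this step may fail, which is precisely the gap that the main body of the paper must bridge. Combining the two reductions yields SSM on $G$ at essentially the rate guaranteed on $\mathbb{T}_d$.
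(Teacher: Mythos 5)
Your first step (SAW tree reduction) matches the paper's route, and your parting remark that the key step ``may fail'' for general two-state systems is exactly the point of Section~\ref{section-non-monotone}. The second step, however, has a genuine gap. You embed $T$ into $\mathbb{T}_d$ with the phantom branches carrying ``free-boundary ratios'' and then claim that ``every partial derivative of $R$ shrinks in magnitude when fewer children are present.'' This is backwards: for $R(x_1,\ldots,x_k)=\lambda/\prod_{i=1}^k(1+x_i)$ one has $\partial R/\partial x_j=-R/(1+x_j)$, and deleting a child multiplies $R$ by a factor $(1+x_i)\ge 1$, so $R$---and hence the magnitude of every partial derivative---can only \emph{grow} when children are removed. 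The very monotonicity you invoke (each extra factor decreases $R$) makes a low-degree vertex \emph{more}, not less, sensitive to each remaining child, so the product of derivatives along a path in $T$ can exceed the corresponding product in $\mathbb{T}_d$. Moreover, placing ``free-boundary ratios'' on the phantom branches is not an exact equivalence: for $\lambda>0$ a free phantom subtree contributes a strictly positive ratio to the product and changes the recursion, so you are comparing two genuinely different quantities without a valid domination.

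The correct second step---used by Weitz for the hardcore model and recalled in the paper's lemma in Section~\ref{section-non-monotone}---is a padding argument: at each vertex with $k<d-1$ children, attach $d-1-k$ dummy leaves \emph{fixed to the unoccupied state}. A fixed-unoccupied dummy has ratio $0$, so $(1+0)=1$ contributes trivially and the root's ratio is preserved exactly. This realizes $T$'s marginal computation as an instance of the $d$-regular tree's computation under a particular fixed boundary condition, so strong spatial mixing on $\mathbb{T}_d$ (which quantifies over arbitrary fixed boundaries) transfers verbatim, and the distances to the genuinely perturbed vertices are unchanged by the padding. In the paper's general formulation the dummy must carry a fixed distribution $(p_0,p_1)$ with $p_0+p_1=1$ and $\beta p_0+p_1=p_0+\gamma p_1$, which admits a valid probability solution precisely when $0\le\beta,\gamma\le1$; outside this monotone regime the claim genuinely fails, as you correctly suspect, but even inside the hardcore regime the derivative comparison you propose is not the working mechanism.
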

In~\cite{Weitz06}, Weitz also remarked without proof that this implication holds for all two-state spin systems (indeed this is rigorously proved for anti-ferromagnetic Ising model in~\cite{SST}). With this to be true, devising approximation algorithms for two-state spin systems on degree-bounded graphs is reduced to verifying the strong spatial mixing on $d$-regular trees.
This has been accepted as a fact about the two-state spin systems and has
become a building block for approximation algorithms for such systems (see Theorem 2.8 in~\cite{SST} and the discussions in~\cite{sly2008uniqueness,Sly10}). It was also raised as a conjecture in~\cite{sly2008uniqueness} whether the claim holds for general multi-state spin systems.

As a byproduct of our analysis (see Section~\ref{section-non-monotone}), we find that this well-believed implication between the strong spatial mixing on $d$-regular tree and on graphs of maximum degree at most $d$ holds only for monotone spin systems (including the hardcore and anti-ferromagnetic Ising models). For general two-state spin systems  the worst case for uniqueness as well as strong spatial mixing among all degree-bounded graphs is indeed a regular tree, but may no longer be the one of the highest degree. 
A bright side of this complication is that higher degrees may yield much faster correlation decay, making possible the FPTAS for graphs with unbounded degrees.

These new phenomena suggest that the general two-state spin systems have much richer structure than the well-studied monotone spin systems such as the hardcore and anti-ferromagnetic Ising models. 
The former approach via the strong spatial mixing on $d$-regular tree which succeeds in monotone spin systems on general graphs and general spin systems on regular graphs, meets a barrier when dealing with general spin systems on general graphs.
We give a unified approach to the correlation decay in general two-state spin systems, through the strong spatial mixing on arbitrary trees instead of $d$-regular trees. 
This approach was initiated in our previous work~\cite{LLY} dealing with graphs of unbounded degrees.
In this paper, we devise a unified potential-based analysis which adapts to both the irregularity of the arbitrary tree and the non-monotonicity of general two-state spin system and give tight correlation decay results for all two-state anti-ferromagnetic spin systems on degree-bounded families of graphs and family of all graphs.

\subsubsection{Implications of the main results}


By solving the uniqueness condition we can restate our main results in various threshold forms.
Theorem \ref{thm-dichotomy-algorithms} covers as special cases all previous algorithmic results for two-state anti-ferromagnetic spin systems on general-structure graphs  as well as clears up previously uncovered cases.


\paragraph{In terms of interactions.}
We can fix the external field $\lambda$ and discuss the tractable region of $(\beta,\gamma)$. Since the roles of $\beta$ and $\gamma$ are symmetric, we can further fix one of them and discuss the tractable range of the other. This formulation was used in \cite{GJP03, LLY}.

Our main result can be restated as follows: for any $\Delta$, there is a critical threshold $\gamma_c(\beta,\lambda,\Delta)$ for the uniqueness on infinite regular trees up to degree $\Delta$
such that  if $\gamma_c(\beta,\lambda,\Delta)<\gamma<\frac{1}{\beta}$ there is an FPTAS for graphs of maximum degree at most $\Delta$; and in particular, $\gamma_c=\gamma_c(\beta,\lambda,\infty)>1$ is an absolute constant such that if $\gamma\in(\gamma_c,\frac{1}{\beta})$ there is an FPTAS for arbitrary graphs.

This covers  as special cases all algorithmic results in \cite{GJP03} regarding the anti-ferromagnetic systems and all results in \cite{LLY}, extends the tractable regions in these previous works, and considers the degree-bounded graphs as well.



\paragraph{In terms of external field.}
Motivated by the studies of hardcore and anti-ferromagnetic Ising models, we can fix $(\beta,\gamma)$ and discuss the tractable range of external field. 

Due to the symmetric role of $\beta$ and $\gamma$, we may assume that $\beta\le\gamma$.
Our main results can be restated in specific settings as follows:
\begin{itemize}
\item Hard constraints (when $\beta = 0$): For any $\Delta$, $\lambda_c(\gamma,\Delta)=\min_{1<d< \Delta} \frac{\gamma^{d+1}d^d}{(d-1)^{d+1}}$ is a critical threshold for the uniqueness on infinite regular trees up to degree $\Delta$ 
such that if $\lambda<\lambda_c(\gamma,\Delta)$ there exists an FPTAS for graphs of maximum degree at most  $\Delta$. 

For $\gamma\le1$, the critical threshold equals $\lambda_c(\gamma,\Delta)=\frac{\gamma^{\Delta} (\Delta-1)^{\Delta-1}}{(\Delta-2)^{\Delta}}$. There is no external field $\lambda>0$ satisfying uniqueness on infinite regular trees of unbounded degrees. This is consistent with the hardness result for the hardcore model without degree bound~\cite{IS_DFJ02,Sly10}.
One particularly interesting case is when $\gamma=1$, in which case the model is the hardcore model with  fugacity~$\lambda$, and $\lambda_c(1,\Delta)=\frac{(\Delta-1)^{\Delta-1}}{(\Delta-2)^{\Delta}}$ is the critical threshold. This covers the result of~\cite{Weitz06}.

For $\gamma > 1$, in addition to the results for degree-bounded graphs, there exists an absolute positive constant $\lambda_c(\gamma)=\min_{d>1} \frac{\gamma^{d+1}d^d}{(d-1)^{d+1}}$ which lower bounds $\lambda_c(\gamma,\Delta)$ for all $\Delta$, such that if $\lambda<\lambda_c(\gamma)$ there exists an FPTAS for arbitrary graphs. 

\item 
{Soft constraints (when $\beta>0$): If $\sqrt{\beta \gamma} > \frac{\Delta-2}{\Delta}$ the uniqueness condition holds for any external field $\lambda>0$ and there always exists an FPTAS for graphs of maximum degree at most $\Delta$.

Now suppose $\sqrt{\beta \gamma} \le \frac{\Delta-2}{\Delta}$.
Let $\bar{\Delta}$ be the smallest $d$ satisfying $\sqrt{\beta \gamma} \le \frac{d-1}{d+1}$.
For every integer $\bar{\Delta}\le d<\Delta$, there exist two thresholds $\lambda_1(\beta,\gamma, d)$ and $\lambda_2(\beta,\gamma, d)$ for the uniqueness on the infinite $(d+1)$-regular tree, such that if $\lambda<\lambda_1(\beta,\gamma, d)$ or $\lambda>\lambda_2(\beta,\gamma, d)$ for all integers $\bar{\Delta}\le d< \Delta$, there is an FPTAS for graphs of maximum degree at most $\Delta$.

Furthermore, if $\gamma\le1$, the above uniqueness condition can be simplified as that $\lambda<\lambda_c$ or $\lambda>\bar{\lambda}_c$, where $\lambda_c=\lambda_c(\beta, \gamma, \Delta) \triangleq \min_{\overline{\Delta} \leqslant d < \Delta}\lambda_1(\beta,\gamma,d)$ and $\bar{\lambda}_c=\bar{\lambda}_c(\beta,\gamma,\Delta) \triangleq\max_{\overline{\Delta} \leqslant d<\Delta}\lambda_2(d)$.
In particular, for $\beta=\gamma$, the anti-ferromagnetic Ising model, we have $\bar{\lambda}_c=1/{\lambda}_c$, and the uniqueness holds when $|\log\lambda|>\log\bar{\lambda}_c$.
This covers the result of anti-ferro-Ising model in~\cite{SST}.

For unbounded maximum degree $\Delta$, if $\gamma\le 1$, there is no $\lambda>0$ satisfying the uniqueness on infinite regular trees of all degrees, which  is consistent with the hardness results  in \cite{GJP03};
and if $\gamma>1$, the uniqueness for all infinite regular trees holds when $\lambda<\lambda_1(\beta,\gamma, d)$ or $\lambda>\lambda_2(\beta,\gamma, d)$ for all $d\ge \bar{\Delta}$, under which condition there exists an FPTAS for arbitrary graphs.
}
\end{itemize}

\subsection{Related works }
The approximation of partition functions of spin systems has been extensively studied\cite{JS93, app_JSV04, app_GJ11, app_DJV01, col_Jerrum95, col_Vigoda99, IS_DFJ02, IS_DG00, IS_LV97}. In a seminal work~\cite{JS93}, Jerrum and Sinclair devised a fully polynomial-time randomized approximation scheme (FPRAS) for the ferromagnetic Ising model. Later in~\cite{GJP03}, the FPRAS was extended to all two-state ferromagnetic spin systems by translating the parameters to the ferromagnetic Ising model.
Also in~\cite{GJP03}, Goldberg, Jerrum, and Paterson gave an FPRAS and inapproximability results for two-state anti-ferromagnetic spin systems on arbitrary graphs. 
A gadget based on random regular bipartite graphs was proposed by Dyer, Frieze, and Jerrum in~\cite{IS_DFJ02} and was also analyzed by Mossel, Weitz, and Wormald in~\cite{inapp_MWW09} to study the inapproximability on degree-bounded graphs. 
It is widely believed that the transition of approximability of anti-ferromagnetic spin systems is captured by the phase transition of uniqueness on infinite trees. 
This was raised openly as a conjecture in~\cite{inapp_MWW09}. The conjecture was proved by Sly in~\cite{Sly10} for the hardcore model. This result was improved by Galanis \emph{et al.}~in~\cite{GGSVY11} to a wide range of parameters. Very recently, Sly and Sun~\cite{sly2012computational} proved the hardness of all two-state anti-ferromagnetic spin systems of non-uniqueness on infinite regular trees. A same result for anti-ferromagnetic Ising model without external field was independently proved by Galanis, \u{S}tefankovi\u{c}, and Vigoda in~\cite{galanis2012inapproximability}.

The correlation decay technique developed independently by Weitz~\cite{Weitz06} and Bandyopadhyay and Gamarnik~\cite{BG08} is a powerful tool for devising deterministic fully polynomial-time approximation schemes (FPTAS) for partition functions (other important examples include~\cite{GK07, BGKNT07}). In~\cite{Weitz06}, Weitz introduced the concept of strong spatial mixing and used it to devise FPTAS for the hardcore model up to the uniqueness threshold. The other most important two-state anti-ferromagnetic spin system, the anti-ferromagnetic Ising model, was studied recently by Sinclair, Srivastava, and Thurley in~\cite{SST}, where a more powerful message-decay method was introduced to analyze the strong spatial mixing and give FPTAS up to uniqueness threshold. 
A powerful technique was developed by Restrepo \emph{et al.}~in \cite{RSTVY11} which makes use of the specific structure of graphs for strong spatial mixing.
A broader tractable region than the region of uniqueness is achieved on grid lattice by exploiting the structure of the graph.
In a previous work~\cite{LLY}, we gave an FPTAS for two-state anti-ferromagnetic spin systems without external field on arbitrary graphs with unbounded degrees, up to a continuous relaxation of the uniqueness threshold. The approach used in the current paper was initiated in~\cite{LLY}, however the analysis in~\cite{LLY} cannot separate the uniqueness up to certain degree, thus fails in dealing with degree-bounded families of graphs.

\section{Definitions and preliminaries}\label{section-definition}
A two-state spin system is described by a graph $G=(V,E)$. A \concept{configuration} of the system is one of the $2^{|V|}$ possible
assignments $\sigma: V\rightarrow \{0, 1\}$ of states to vertices. We also use two colors \emph{blue} and \emph{green} to denote these two states. The \concept{weight} of a configuration can be described as a product of contributions of individual edges and vertices.
Let $A=\begin{bmatrix} A_{0,0} & A_{0,1} \\ A_{1,0} & A_{1,1} \end{bmatrix}$ and $b=(b_0,b_1)$. The weight of a configuration $\sigma: V\rightarrow \{0, 1\}$ is given by
\begin{align*}
w(\sigma)=\prod_{(u,v) \in E} A_{\sigma(u), \sigma(v)}\prod_{v\in V} b_{\sigma(v)}.
\end{align*}
The \concept{Gibbs measure} is a probability distribution over all configurations defined by $\rho(\sigma)=\frac{w(\sigma)}{Z(G) }$.
The normalization factor $Z(G)=\sum_{\sigma}w(\sigma)$
is called the \concept{partition function}.

We can normalize the contributions of a $\{\mathtt{blue}, \mathtt{green}\}$ edge and of a green vertex to be 1. So that $A=\begin{bmatrix} \beta & 1 \\ 1 & \gamma \end{bmatrix}$ for some $\beta, \gamma \ge 0$, and $b=(b_0,b_1)=(\lambda,1)$ for some $\lambda>0$. Since the roles of blue and green are symmetric, we can assume that $\beta\le \gamma$ without loss of generality.
The three parameters $(\beta,\gamma,\lambda)$ with $0\le\beta\le\gamma$ and $\lambda>0$ completely specify a two-state spin system. 
A two-state spin system with $\beta=\gamma$ is an \concept{Ising model} and a two-state spin system with $\beta=0, \gamma=1$ or symmetrically $\beta=1,\gamma=0$ is a \concept{hardcore model}.

A two-state spin system is called \concept{anti-ferromagnetic} if adjacent vertices favor disagreeing spins, i.e.~if $\beta\gamma<1$. Without loss of generality, we focus on the cases that $\beta\le\gamma$.
\begin{definition}
$(\beta,\gamma,\lambda)$ is \concept{anti-ferromagnetic} if $0\le\beta\le\gamma$, $\gamma>0$, $\beta\gamma<1$, and $\lambda>0$.
\end{definition}
By the symmetry of $\beta$ and $\gamma$ and the triviality of the case $\beta=\gamma=0$, this definition is complete for all nontrivial two-state anti-ferromagnetic systems

\subsection{Correlation decay}
The Gibbs measure defines a marginal distribution of state for each vertex.
Let $p_v$ denote the probability of vertex $v$ to be colored blue. 
Let $\sigma_{\Lambda}\in\{0,1\}^\Lambda$ be a configuration of $\Lambda\subset V$.  We call vertices $v\in\Lambda$ \concept{fixed} vertices, and $v\not\in\Lambda$ \concept{free} vertices. We use $p_v^{\sigma_{\Lambda}}$ to denote the marginal probability of $v$ to be colored blue conditioning on the configuration of $\Lambda$ being fixed as $\sigma_{\Lambda}$.

\begin{definition}\label{definition-correlation-decay}
A spin system on a family of graphs is said to be of \concept{strong spatial mixing} if for any graph $G=(V,E)$ in the family, any $v\in V,\Lambda\subset V$ and $\sigma_{\Lambda},\tau_{\Lambda}\in\{0,1\}^{\Lambda}$,
\begin{align*}
|p_v^{\sigma_{\Lambda}}-p_v^{\tau_{\Lambda}}|\leq \exp(-\Omega(\mathrm{dist}(v,S))),
\end{align*}
where $S\subset\Lambda$ is the subset on which $\sigma_{\Lambda}$ and $\tau_{\Lambda}$ differ, and $\mathrm{dist}(v,S)$ is the shortest distance from $v$ to any vertex in $S$.
\end{definition}

The \concept{weak spatial mixing} can be defined by measuring the decay with respect to $\mathrm{dist}(v,\Lambda)$ instead of $\mathrm{dist}(v,S)$. The spatial mixing property is also called correlation decay in Statistical Physics.

Let $T$ be a tree rooted by $v$. Define $R_T^{\sigma_\Lambda}=p_v^{\sigma_\Lambda}/(1-p_v^{\sigma_\Lambda})$ to be the ratio between the probabilities that the root $v$ is blue and green, when imposing the condition $\sigma_\Lambda$ (with the convention that $R_T^{\sigma_\Lambda}=\infty$ when $p_v^{\sigma_\Lambda}=1$). Suppose that $v$ has $d$ children and $T_i$ is the subtree rooted by the $i$-th child. Due to the independence of subtrees, we have an easy recursion for calculating $R_T^{\sigma_\Lambda}$:
\begin{align}
R^{\sigma_\Lambda}_T
&=
\lambda\prod_{i=1}^d\frac{\beta R_{T_i}^{\sigma_\Lambda}+1}{R_{T_i}^{\sigma_\Lambda}+\gamma}. \label{eq:recursion}
\end{align}
Let $G(V,E)$ be a graph. Similarly define that $R_{G,v}^{\sigma_\Lambda}=p_v^{\sigma_\Lambda}/(1-p_v^{\sigma_\Lambda})$. In contrast to the case of tree, there is no easy recursion for calculating $R_{G,v}^{\sigma_\Lambda}$ for a general graph $G$ because of the dependencies caused by cycles. In \cite{Weitz06}, a construction called the self-avoiding walk (SAW) tree was introduced which reduces the computing of marginal distribution in a general graph to that in a tree. Specifically, given a graph $G(V,E)$ and a vertex $v\in V$. The SAW tree $T_{\text{SAW}}(G,v)$ is a tree rooted at $v$ with a new vertex set $V_{\text{SAW}}$ (which effectively enumerates all paths originating from $v$ in $G$ and may include fixed leaves). Moreover, any vertex sets $S\subset \Lambda\subset V$ are mapped to respective $S_{\text{SAW}}\subset\Lambda_{\text{SAW}}\subset V_{\text{SAW}}$ and any configuration $\sigma_\Lambda\in\{0,1\}^\Lambda$ is mapped to a $\sigma_{\Lambda_{\text{SAW}}}\in\{0,1\}^{\Lambda_{\text{SAW}}}$. We abuse the notation and write $S=S_{\text{SAW}}$ and $\sigma_\Lambda=\sigma_{\Lambda_{\text{SAW}}}$ if no ambiguity is caused. Given a graph $G(V,E)$, $v\in V$ and $S\subset V$, let $\mathrm{dist}_G(v,S)$ be the shortest distance in $G$ from $v$ to any vertex in $S$.

\begin{theorem}[Theorem 3.1 of Weitz~\cite{Weitz06}]\label{theorem-T-saw}
Let $G(V,E)$ be a graph, $v\in V$, $\sigma_\Lambda\in\{0,1\}^\Lambda$ a configuration on $\Lambda\subset V$, and $S\subset V$. Let $T=T_{\mathrm{SAW}}(G,v)$. It holds that the maximum degree of $T$ equals the maximum degree of $G$, $\mathrm{dist}_G(v,S)=\mathrm{dist}_T(v,S)$, and $R_{G,v}^{\sigma_\Lambda}=R_T^{\sigma_\Lambda}$.
Moreover, any neighborhood of $v$ in $T$ can be constructed in time proportional to the size of the neighborhood. 
\end{theorem}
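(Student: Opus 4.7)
The plan is to construct $T_{\mathrm{SAW}}(G,v)$ explicitly as a tree of self-avoiding walks in $G$ rooted at $v$, equip each vertex of $G$ with an arbitrary total order on its incident edges, and use this ordering to decide the boundary condition at the leaves that correspond to cycle-closing walks. With the construction in hand, the degree, distance, and efficient-construction claims are essentially immediate; the real work is the marginal identity $R_{G,v}^{\sigma_\Lambda}=R_T^{\sigma_\Lambda}$.

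First I would spell out the construction. The internal nodes of $T$ correspond bijectively to self-avoiding walks $P = v\to u_1\to\cdots\to u_k$ in $G$; the children of such a node are obtained by listing the neighbors $y$ of $u_k$ in $G\setminus\{u_{k-1}\}$ in the prescribed order at $u_k$, inheriting any fixed labels from $\sigma_\Lambda$. For each neighbor $y$ that already lies on $P$ (so that the edge $(u_k,y)$ would close a cycle), we append a leaf labeled $y$ and assign it a \emph{fixed} color determined by comparing the position of $(u_k,y)$ in $y$'s ordering to the position of the edge on $P$ that leaves $y$ toward $u_k$. With this construction the structural claims are transparent: each node of $T$ has at most as many children as the degree in $G$ of the corresponding vertex, so the maximum degrees of $T$ and $G$ coincide; every shortest $v$-to-$S$ path in $G$ is a self-avoiding walk and hence appears as a path of the same length in $T$, giving $\mathrm{dist}_G(v,S)=\mathrm{dist}_T(v,S)$; and extending $T$ by one layer from a walk ending at $u_k$ reduces to scanning the adjacency list of $u_k$ and checking membership against the walk carried as state, which takes time proportional to the number of new children.

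For the identity $R_{G,v}^{\sigma_\Lambda}=R_T^{\sigma_\Lambda}$, I would induct on the number of free vertices of $G$. The base case, in which $v$ is fixed or has no free neighbor, is immediate. For the inductive step, let $u_1,\dots,u_d$ be the free neighbors of $v$ in the order chosen at $v$. Expanding the marginal at $v$ via the local Gibbs factor and peeling off one edge $(v,u_i)$ at a time, I would establish the telescoping identity
\[
R_{G,v}^{\sigma_\Lambda}\;=\;\lambda\prod_{i=1}^{d}\frac{\beta S_i+1}{S_i+\gamma},
\]
where $S_i$ is the marginal ratio at $u_i$ in the graph $G_i$ obtained from $G$ by deleting $v$ together with its incident edges and then \emph{pinning} each $u_j$, $j\ne i$, to the specific color dictated by the SAW rule at $v$. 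Each $G_i$ has strictly fewer free vertices than $G$, so the induction hypothesis yields $S_i=R_{T_i}^{\sigma_\Lambda}$, where $T_i$ is the subtree at the $i$-th child of the root in $T$, and comparison with \eqref{eq:recursion} closes the step.

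The hard part is establishing the telescoping identity and showing that the blue/green rule at cycle-closing leaves of $T$ exactly matches the pinnings produced by the peeling. Concretely, one must verify that whenever a walk in $T$ eventually loops back to a previously-visited vertex $y$, the color fixed at the corresponding leaf coincides with the color to which $y$ gets pinned inside the appropriate $G_i$ along the telescoping process. This is the only place in the argument where the edge orderings chosen at each vertex of $G$ are actually used, and calibrating them so that the induction hypothesis applies at each recursive call is where essentially all the combinatorial content of Weitz's theorem resides.
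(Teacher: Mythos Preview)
The paper does not prove this theorem at all: it is quoted verbatim as ``Theorem 3.1 of Weitz~\cite{Weitz06}'' and used as a black box. So there is no ``paper's own proof'' to compare against; what you have written is a sketch of Weitz's original argument, which is the appropriate thing to do if a proof is required.

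Your outline is broadly the right one (split $v$ into degree-one copies, telescope, recurse), but there is one concrete inaccuracy worth fixing. You write that $G_i$ is obtained by ``deleting $v$ together with its incident edges and then pinning each $u_j$, $j\ne i$, to the specific color dictated by the SAW rule at $v$.'' This is not what the telescoping produces. In the split graph, the copies $v_j$ for $j\ne i$ remain attached to $u_j$ as \emph{fixed pendant leaves}; the vertices $u_j$ themselves stay free. Pinning $u_j$ outright would be strictly stronger and would not match the SAW tree, where the cycle-closing leaf represents a fixed copy of $v$ adjacent to $u_j$, not a fixing of $u_j$. Equivalently, the effect on $u_j$ is a multiplicative shift of its external field by $\beta$ or $\gamma$, not a Dirac condition. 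With that correction, the induction hypothesis applies to each $G_i$ (which indeed has fewer free vertices once the pendants are treated as part of the boundary condition), and the matching between the telescoping colors $(j<i$ one color, $j>i$ the other$)$ and the edge-order rule at cycle-closing leaves goes through as you indicate.
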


\subsection{The uniqueness condition}
We consider the uniqueness of Gibbs measure on the infinite $(d+1)$-regular trees $\widehat{\mathbb{T}}^d$, in which the recursion is given by $f_d(x)\triangleq\lambda\left(\frac{\beta x+1}{x+\gamma}\right)^d$ due to the symmetric structure of $\widehat{\mathbb{T}}^d$.

Let $\hat{x}_d$ be the positive fixed point of $f_d(x)$, that is, $\hat{x}_d=f(\hat{x}_d)$. It is known \cite{kelly1985stochastic,MSW07} that the two-state anti-ferromagnetic spin system on $\widehat{\mathbb{T}}^d$ undergoes a phase transition at $|f_d'(\hat{x}_d)|=1$ with uniqueness when $|f_d'(\hat{x}_d)|\le 1$. This motivates the following definition.

\begin{definition}\label{definition-uniqueness}
Let $(\beta,\gamma,\lambda)$ be anti-ferromagnetic. Let $\hat{x}_d$ be the positive fixed point of function $f_d(x)=\lambda\left(\frac{\beta x+1}{x+\gamma}\right)^d$. We say that $(\beta,\gamma,\lambda)$ is \concept{\unique{$\Delta$}}, if for all integers $1\le d< \Delta$,
\begin{eqnarray*}
|f_d'(\hat{x}_d)|
=\frac{\lambda d(1-\beta\gamma)(\beta\hat{x}_d+1)^{d-1}}{(\hat{x}_d+\gamma)^{d+1}}
=\frac{d(1-\beta\gamma)\hat{x}_d}{(\beta\hat{x}_d+1)(\hat{x}_d+\gamma)}
<1.
\end{eqnarray*}
In particular, $(\beta,\gamma,\lambda)$ is \concept{universally unique} if it is \concept{\unique{$\infty$}}.
\end{definition}

Being \unique{$\Delta$} is equivalent to that the system is of weak spatial mixing on infinite regular trees up to degree $\Delta$. The uniqueness condition can be described in various threshold forms, which are given in Appendix \ref{app-threshold}.

The uniqueness is defined by requiring that $|f_d'(\hat{x}_d)|<1$. The following lemma 
states that $|f_d'(\hat{x}_d)|$ is bounded by an absolute constant as long as the uniqueness condition holds.

\begin{lemma}\label{lemma-absolute-uniqueness}
Let $(\beta,\gamma,\lambda)$ be anti-ferromagnetic.
If $(\beta,\gamma,\lambda)$ is \unique{$\Delta$} then there exists an absolute constant $c<1$ which depends only on $\beta$, $\gamma$, $\lambda$ and $\Delta$, such that $|f_d'(\hat{x}_d)|\le c$ for all $1\le d<\Delta$.
\end{lemma}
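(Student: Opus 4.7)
The plan splits on whether $\Delta$ is finite. If $\Delta<\infty$, the index set $\{d:1\le d<\Delta\}$ is finite, so $c:=\max_{1\le d<\Delta}|f_d'(\hat{x}_d)|$ is the maximum of finitely many quantities each strictly less than $1$ by uniqueness, hence $c<1$, and it depends only on $\beta,\gamma,\lambda,\Delta$ as required.

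For $\Delta=\infty$, it suffices to prove $|f_d'(\hat{x}_d)|\to 0$ as $d\to\infty$, since combining this with the finite-$\Delta$ argument applied to $\{1,\dots,D\}$ (for $D$ large enough that $|f_d'(\hat{x}_d)|\le 1/2$ beyond $D$) produces the required uniform $c<1$. I will work with two ingredients: the derivative identity, which I rewrite as $|f_d'(\hat{x}_d)|=d\cdot\psi(\hat{x}_d)$ with
\[
\psi(x)\;\triangleq\;\frac{(1-\beta\gamma)\,x}{(\beta x+1)(x+\gamma)},
\]
and the fixed-point equation $\hat{x}_d=\lambda\,h(\hat{x}_d)^d$ with $h(x)\triangleq\frac{\beta x+1}{x+\gamma}$. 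The map $h$ is strictly decreasing (because $\beta\gamma<1$) with $h(0)=1/\gamma$ and $h(\infty)=\beta$; and $\beta\le\gamma$ together with $\beta\gamma<1$ forces $\beta<1$, so a hypothetical limit $\hat{x}_d\to\infty$ would contradict the fixed-point equation (the right-hand side would instead tend to $0$).

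The heart of the argument is to show that universal uniqueness forces $\gamma>1$. If $\gamma<1$, then $h(0)>1>h(\infty)$, so $h(x)=1$ has a unique positive root $x^*=\frac{1-\gamma}{1-\beta}$; the function $\lambda h(x)^d$ converges pointwise to $+\infty$ on $(0,x^*)$ and to $0$ on $(x^*,\infty)$, which pins $\hat{x}_d\to x^*$, whence $\psi(\hat{x}_d)\to\psi(x^*)>0$ and $d\,\psi(\hat{x}_d)\to\infty$, contradicting uniqueness for large $d$. If $\gamma=1$, then $h(0)=1$; expanding $h(x)=1-(1-\beta)x+O(x^2)$ at the origin turns the fixed-point equation into $ye^y\sim d\lambda(1-\beta)$ with $y\triangleq d(1-\beta)\hat{x}_d$, yielding $y\sim\log(d\lambda(1-\beta))\to\infty$; and since $d\,\psi(\hat{x}_d)\sim y$ for small $\hat{x}_d$, uniqueness fails again for large $d$.

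Hence $\gamma>1$, which gives $h(\hat{x}_d)\le h(0)=1/\gamma<1$ and so $\hat{x}_d\le\lambda\gamma^{-d}$, an exponentially small quantity. Substituting back, $|f_d'(\hat{x}_d)|=d\,\psi(\hat{x}_d)=O(d\gamma^{-d})\to 0$ as $d\to\infty$, which completes the $\Delta=\infty$ case. The main obstacle is the $\gamma\le 1$ dichotomy in the previous paragraph: the borderline $\gamma=1$ case in particular requires a careful Taylor expansion of $h$ at the origin to observe that although $\hat{x}_d$ tends to $0$, the product $d\hat{x}_d$ diverges like $\log d$, driving the derivative to $\infty$ and thereby ruling out that parameter regime.
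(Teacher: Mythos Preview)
Your proposal is correct and follows the same overall scheme as the paper: the finite-$\Delta$ case is trivial, and for $\Delta=\infty$ one first argues that universal uniqueness forces $\gamma>1$, then bounds $\hat{x}_d\le\lambda\gamma^{-d}$ to obtain $|f_d'(\hat{x}_d)|\le d\lambda/\gamma^{d}\to 0$. The only difference is in how the implication ``universal uniqueness $\Rightarrow\gamma>1$'' is established. The paper simply cites Lemma~\ref{lemma-threshold}.\ref{lemma-threshold-infty}, whose appendix proof argues by contradiction via elementary inequalities on $\hat{x}_d$, splitting into the two cases $\hat{x}_d\ge\gamma$ and $\hat{x}_d<\gamma$ and treating $\gamma<1$ and $\gamma=1$ uniformly. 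You instead give a self-contained asymptotic analysis, tracking the limit of $\hat{x}_d$ (to $x^*=\tfrac{1-\gamma}{1-\beta}$ when $\gamma<1$, to $0$ when $\gamma=1$) and showing $d\,\psi(\hat{x}_d)\to\infty$ in each case. Your route is perhaps more illuminating about \emph{why} $\gamma\le 1$ fails, though the $\gamma=1$ Taylor step (the $ye^y\sim d\lambda(1-\beta)$ asymptotic) would need a line or two more to be fully rigorous; the paper's inequality-based argument is more elementary and avoids any asymptotic bookkeeping.
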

\begin{proof}
The lemma holds trivially for finite $\Delta$. It then remains to show that in case of universal uniqueness, $|f_d'(\hat{x}_d)|$ cannot be arbitrarily close to 1 as $d$ grows to infinity.
If $(\beta,\gamma,\lambda)$ is universally unique, due to Lemma \ref{lemma-threshold}.\ref{lemma-threshold-infty},  we must have $\gamma>1$.
For  anti-ferromagnetic $(\beta,\gamma,\lambda)$, $\beta\le\frac{1}{\gamma}$, thus the fixed point $\hat{x}_d=\lambda\left(\frac{\beta\hat{x}_d+1}{\hat{x}_d+\gamma}\right)^d\le\frac{\lambda}{\gamma^d}$, therefore $|f_d'(\hat{x}_d)|=\frac{d(1-\beta\gamma)\hat{x}_d}{(\beta\hat{x}_d+1)(\hat{x}_d+\gamma)}\le\frac{d\lambda}{\gamma^d}$. The lemma follows.
\end{proof}

\section{The strong spatial mixing on general graphs}\label{section-correlation-decay}
In this section we prove Theorem \ref{thm-correlation-decay}. The necessity of the uniqueness condition is trivial since strong spatial mixing on general graphs implies weak spatial mixing on regular trees. It then remains to prove the following theorem.

\begin{theorem}\label{theorem-correlation-decay}
Let  $(\beta,\gamma,\lambda)$ be anti-ferromagnetic. For any finite $\Delta\ge 2$ or $\Delta=\infty$,
if $(\beta,\gamma,\lambda)$ is \unique{$\Delta$}, then the two-state spin system of parameters $(\beta,\gamma,\lambda)$ is of strong spatial mixing on graphs of maximum degree at most $\Delta$.
\end{theorem}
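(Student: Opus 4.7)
My plan is to combine Weitz's self-avoiding walk tree construction (Theorem \ref{theorem-T-saw}) with a potential-function contraction argument on the tree recursion~(\ref{eq:recursion}). By Theorem \ref{theorem-T-saw}, $R_{G,v}^{\sigma_\Lambda} = R_T^{\sigma_\Lambda}$ where $T = T_{\mathrm{SAW}}(G,v)$ has maximum degree at most $\Delta$; in particular every internal non-root vertex of $T$ has at most $\Delta-1$ children, so the recursion driving the decay involves only arities $d \le \Delta - 1$. Since $p_v = R/(1+R)$, it suffices to prove that $|R_T^{\sigma_\Lambda} - R_T^{\tau_\Lambda}|$ decays exponentially in $\mathrm{dist}_T(v,S)$ at a rate depending only on $\beta,\gamma,\lambda,\Delta$.

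To obtain this decay I would introduce a smooth, strictly increasing potential $\Phi : (0,\infty) \to \mathbb{R}$ with derivative $\varphi = \Phi'$, and analyze the recursion in the transformed variables $y_i = \Phi(R_i)$, $y = \Phi(R)$. A direct differentiation of~(\ref{eq:recursion}) yields
\[
\left|\frac{\partial y}{\partial y_i}\right| \;=\; \frac{\varphi(R)}{\varphi(R_i)} \cdot \frac{(1 - \beta\gamma)\, R}{(\beta R_i + 1)(R_i + \gamma)},
\]
and the goal is to pick $\Phi$ so that $\sum_{i=1}^d |\partial y/\partial y_i| \le c$ for some constant $c < 1$ uniformly in $d \le \Delta - 1$ and in all feasible inputs $R_1,\ldots,R_d \in [0,\infty]$ (the endpoints $0$ and $\infty$ encoding fixed-spin children, which is what lifts weak mixing to \emph{strong} spatial mixing). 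Once this one-step contraction is in hand, a standard induction along $T$ gives $|\Phi(R_T^{\sigma_\Lambda}) - \Phi(R_T^{\tau_\Lambda})| = O(c^{\mathrm{dist}_T(v,S)})$, after which inverting $\Phi$ and using the Lipschitz relation $p_v = R/(1+R)$ yields the claimed exponential decay of $|p_v^{\sigma_\Lambda} - p_v^{\tau_\Lambda}|$.

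The main obstacle, and the technical heart of the argument, is designing a single $\Phi$ that certifies contraction at \emph{every} arity $d \le \Delta - 1$ simultaneously. At the symmetric fixed point $R_1 = \cdots = R_d = R = \hat{x}_d$ the sum above equals $|f_d'(\hat{x}_d)|$ irrespective of the choice of $\Phi$, so the $\Delta$-uniqueness hypothesis is exactly what is needed at the symmetric point, and Lemma~\ref{lemma-absolute-uniqueness} supplies a uniform $c < 1$ across all $d \le \Delta - 1$ (covering also $\Delta = \infty$). The real work is extending this to asymmetric, non-fixed-point inputs. For monotone systems ($\beta,\gamma \le 1$) the binding arity is $d = \Delta - 1$ and a potential tuned to that one $d$ suffices, as in~\cite{Weitz06, SST}; for general two-state anti-ferromagnetic systems the worst $d$ may lie strictly inside $[1,\Delta-1]$, so a potential optimized for a single arity can fail at others. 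Refining the approach of our earlier work~\cite{LLY}, I would therefore select $\varphi$ so that, after a suitable change of variables, the multivariate supremum of $\sum_i |\partial y/\partial y_i|$ at each $d$ is attained at the symmetric point and thus reduces to the one-dimensional inequality $|f_d'(\hat{x}_d)| < 1$; the uniqueness hypothesis then closes the contraction uniformly in $d$, and the strong spatial mixing statement of the theorem follows.
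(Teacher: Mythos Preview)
Your proposal is correct and follows essentially the same route as the paper: reduce to the SAW tree, introduce a potential $\Phi$ to measure one-step contraction of the recursion~\eqref{eq:recursion}, symmetrize the multivariate contraction ratio to a diagonal one, and bound the latter in terms of $|f_d'(\hat{x}_d)|$ uniformly over $d\le\Delta-1$ via Lemma~\ref{lemma-absolute-uniqueness}. The paper's specific choice is $\Phi'(R)=\big(R(\beta R+1)(R+\gamma)\big)^{-1/2}$, and its analysis is slightly more delicate than your sketch suggests: the symmetric maximizer is some $x_d\neq\hat{x}_d$, and a separate monotonicity argument (Lemma~\ref{lemma-CD-gamma}) shows the resulting bound is $\sqrt{|f_d'(\hat{x}_d)|}$ rather than $|f_d'(\hat{x}_d)|$; also the root of $T$ may have $\Delta$ (not $\Delta-1$) children, which the paper handles with the crude bound~\eqref{eq:alpha-unconditional} at that single step.
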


Our approach is to prove the strong spatial mixing on arbitrary trees of maximum degree at most $\Delta$, which by the self-avoiding walk tree construction implies the theorem. Note that unlike in~\cite{Weitz06} and~\cite{SST}, we analyze the decay on arbitrary tree instead of regular tree. This is because for general two-state anti-ferromagnetic spin systems the worst case for strong spatial mixing among all graphs of maximum degree at most $d$ may no longer be the $d$-regular tree. 
We will explain this in detail in Section \ref{section-non-monotone}.

Given any graph $G(V,E)$ of maximum degree at most $\Delta$, any configuration $\sigma_\Lambda\in\{0,1\}^\Lambda$ on $\Lambda\subset V$ and any $S\subset\Lambda$, fixing an arbitrary vertex $v\in V$, by Theorem \ref{theorem-T-saw}, a self-avoiding walk tree $T=T_{\text{SAW}}(G,v)$ can be constructed such that the maximum degree of $T$ is bounded by $\Delta$, 
$\mathrm{dist}_G(v,S)=\mathrm{dist}_T(v,S)$ and $R_{G,v}^{\sigma_\Lambda}=R_T^{\sigma_\Lambda}$. Recall that $R_{G,v}^{\sigma_\Lambda}=p_v^{\sigma_\Lambda}/(1-p_v^{\sigma_\Lambda})$ thus $p_v^{\sigma_\Lambda}=R_{G,v}^{\sigma_\Lambda}/(1+R_{G,v}^{\sigma_\Lambda})$.  For any $\sigma_\Lambda$ and $\tau_\Lambda$, 
\begin{align*}
\left|p_v^{\sigma_\Lambda}-p_v^{\tau_\Lambda}\right|=\left|\frac{R_{G,v}^{\sigma_\Lambda}}{1+R_{G,v}^{\sigma_\Lambda}}-\frac{R_{G,v}^{\tau_\Lambda}}{1+R_{G,v}^{\tau_\Lambda}}\right|\le\left|R_{G,v}^{\sigma_\Lambda}-R_{G,v}^{\tau_\Lambda}\right|=\left|R_T^{\sigma_\Lambda}-R_T^{\tau_\Lambda}\right|.
\end{align*}
This motivates the following definition.
\begin{definition}\label{definition-bounds}
Let $T$ be a tree rooted by vertex $v$, $\tau_\Lambda\in\{0,1\}^{\Lambda}$ be a configuration on $\Lambda\subset V$ and $S\subseteq\Lambda$ be a vertex set. Define $R_v$ and $\delta_v$ as that $R_v \le R_T^{\tau_\Lambda}\le R_v+\delta_v$ for all $\sigma_\Lambda\in\{0,1\}^{\Lambda}$ which differ from $\tau_\Lambda$ only on $S$.
\end{definition}
It is then sufficient to prove Theorem \ref{theorem-correlation-decay} by constructing such $R_v$ and $\delta_v$ for $T=T_{\text{SAW}}(G,v)$ and showing that $\delta_v\le \exp(-\Omega(\mathrm{dist}(v,S)))$.

Let $T$ be a tree rooted by $v$, who has $d$ children $v_1,\ldots,v_d$, and $T_i$ be the subtree rooted by $v_i$. It holds that
\begin{align*}
R^{\sigma_\Lambda}_T
&=
f\left(R_{T_1}^{\sigma_\Lambda},\ldots,R_{T_d}^{\sigma_\Lambda}\right)
\triangleq
\lambda\prod_{i=1}^d\frac{\beta R_{T_i}^{\sigma_\Lambda}+1}{R_{T_i}^{\sigma_\Lambda}+\gamma}.
\end{align*}
The lower and upper bounds $R_v$ and $R_v+\delta_v$ can be recursively constructed as follows. The base cases are: (1) $v\in S$, in which case $R_v=0$ and $\delta_v=\infty$; and (2) $v\in\Lambda\setminus S$, i.e.~$v$ is fixed to be the same color in all $\sigma_\Lambda$, in which case $\delta_v=0$ and $R_v=\infty$ (or $R_v=0$) if $v$ is fixed to be blue (or green).
For $v\not\in\Lambda$, since $f(R_1,\ldots,R_d)$ is monotonically decreasing for anti-ferromagnetic $(\beta,\gamma,\lambda)$,
\begin{align*}
R_v
&=
f(R_{v_1}+\delta_{v_1},...,R_{v_d}+\delta_{v_d}),\\
R_v+\delta_v
&=
f(R_{v_1},...,R_{v_d}),
\end{align*}
where $R_{v_i}$ and $R_{v_i}+\delta_{v_i}$ are the corresponding lower and upper bounds for $R_{T_i}^{\sigma_\Lambda}$, $1\le i\le d$. In particular, when $d=0$ the empty product equals 1 by convention, thus $R_v=R_v+\delta_v=\lambda$, which is consistent with the case that $v$ is a free vertex having no children.

By the monotonicity of $f(R_1,\ldots,R_d)$, it is easy to check that the $R_v$ and $R_v+\delta_v$ constructed above satisfy the requirement of Definition \ref{definition-bounds}.
Our goal is to show that $\delta_v$ decays exponentially in depth of recursion when the uniqueness holds.
A straightforward approach is trying to prove that $\delta$ contracts at a constant rate in each recursion step. But this does not have to be true to guarantee the exponential decay. Indeed there are cases that the error does not decay in single steps but decay in a long run. To overcome this, we use a \concept{potential} $\Phi$ to amortize the contraction and show that $\delta\cdot\Phi$ contracts at a constant rate.
%
We choose the potential function to be
\begin{align*}
\Phi(R)=\frac{1}{\sqrt{R(\beta R + 1)(R + \gamma)}}.
\end{align*}
We are analyzing the decay on an arbitrary tree with irregular degrees. In order to adapt this irregularity, the potential function cannot have $d$ as an input, but only caries the information regarding the distribution at the current vertex, yet it has to be able to provide correct compensation to the step-wise decay at any state of $R$ and for all spin systems satisfying sufficient uniqueness. 
A heuristic procedure which leads us to this good potential function is discussed in Appendix \ref{section-potential}. 

Let $\varphi(R)$ be a monotone function satisfying that $\varphi'(R)=\Phi(R)$. We define that
\begin{align*}
\epsilon_v
\triangleq\varphi(R_v+\delta_v)-\varphi(R_v),
\end{align*}
and accordingly, $\epsilon_{v_i}\triangleq\varphi(R_{v_i}+\delta_{v_i})-\varphi(R_{v_i})$, $1\le i\le d$.

We define a function $\alpha(d; x_1,...,x_d)$ as follows:
\begin{align*}
\alpha(d;x_1,...,x_d)
&\triangleq
\frac{(1-\beta\gamma)\left(\lambda\prod^d_{i=1}\frac{\beta x_i+1}{x_i+\gamma}\right)^{\frac{1}{2}}}{\left(\beta\lambda\prod^d_{i=1}\frac{\beta x_i+1}{x_i+\gamma}+1\right)^{\frac{1}{2}}\left(\lambda\prod^d_{i=1}\frac{\beta x_i+1}{x_i+\gamma}+\gamma\right)^{\frac{1}{2}}}\cdot\sum^d_{i=1}\frac{x_i^{\frac{1}{2}}}{(\beta x_i+1)^{\frac{1}{2}}(x_i+\gamma)^{\frac{1}{2}}}.
\end{align*}

The following lemma is obtained from applying the Mean Value Theorem.  Similar routines were previously used in~\cite{LLY,RSTVY11}. The proof of the lemma is in Appendix \ref{app-MVT}.

\begin{lemma}\label{lemma-epsilon}
The followings hold for $\epsilon_v$.
\begin{enumerate}
\item \label{lemma-epsilon-delta}{\em (relation to $\delta_v$)}
$\epsilon_v=\delta_v\cdot\Phi(\widetilde{R})$ for some $\widetilde{R}\in[R_v, R_v+\delta_v]$.

\item \label{lemma-epsilon-bound}{\em (absolute bound)}
Assuming that $\gamma>1$ or the maximum degree of $T$ is bounded by a constant,
if $v\not\in\Lambda$ then $R_v+\delta_v =O(1)$ and
if $v_i\not\in S$ for all $1\le i\le d$ then $\epsilon_v =O(1)$.

\item \label{lemma-epsilon-contract}{\em (stepwise contraction)}
There exist $\widetilde{R_i}\in[R_{v_i}, R_{v_i}+\delta_{v_i}]$, $1\le i\le d$, such that
\begin{align*}
\epsilon_v\le\alpha(d; \widetilde{R}_1,\ldots,\widetilde{R}_d)\cdot\max_{1\le i\le d}\{\epsilon_{v_i}\}.
\end{align*}
\end{enumerate}
\end{lemma}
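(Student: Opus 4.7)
The three parts of the lemma all follow from mean value theorem arguments, with part~\ref{lemma-epsilon-contract} requiring an extra change-of-variables to keep the evaluation points consistent.

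Part~\ref{lemma-epsilon-delta} is simply the one-variable MVT applied to $\varphi$ on $[R_v,R_v+\delta_v]$, using $\varphi'=\Phi$. For part~\ref{lemma-epsilon-bound}, the crucial anti-ferromagnetic fact is that $t\mapsto(\beta t+1)/(t+\gamma)$ is monotonically decreasing on $[0,\infty]$ with range $[\beta,1/\gamma]$ (since $\beta\gamma\le 1$ and $\beta\le 1/\gamma$). Applied factor-by-factor to the recursion $R_v+\delta_v=f(R_{v_1},\ldots,R_{v_d})$ for a free vertex $v$, this gives $R_v+\delta_v\le\lambda/\gamma^d$, which is $O(1)$ under either of the two stated hypotheses. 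Since $\Phi(t)\sim(\gamma t)^{-1/2}$ near $0$ is integrable, $\varphi$ is continuous on $[0,\infty)$ with $\varphi(0)=0$ and bounded on every compact interval, so $\epsilon_v\le\varphi(R_v+\delta_v)-\varphi(0)=O(1)$.

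The main obstacle is part~\ref{lemma-epsilon-contract}, where the statement demands a single consistent evaluation point $(\widetilde R_1,\ldots,\widetilde R_d)$ inside the coefficient $\alpha(d;\cdot)$. A direct multivariate MVT applied to $\varphi\circ f$ in the original $R$-coordinates followed by per-child MVTs (to convert each $\delta_{v_i}$ to $\epsilon_{v_i}$ via part~\ref{lemma-epsilon-delta}) would introduce distinct, mismatched intermediate points, destroying the clean form. The fix is to let the MVT act directly in the $\varphi$-coordinates: setting $u_i=\varphi(R_{v_i})$ and $g(\mathbf u):=\varphi\bigl(f(\varphi^{-1}(u_1),\ldots,\varphi^{-1}(u_d))\bigr)$, we have
\[
\epsilon_v = g(\varphi(R_{v_1}),\ldots,\varphi(R_{v_d})) - g(\varphi(R_{v_1}+\delta_{v_1}),\ldots,\varphi(R_{v_d}+\delta_{v_d})),
\]
where the $d$ coordinate displacements on the right are exactly $\epsilon_{v_1},\ldots,\epsilon_{v_d}$. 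The multivariate MVT on $g$ now produces a single intermediate point $(\widetilde u_1,\ldots,\widetilde u_d)$, and the corresponding $\widetilde R_i=\varphi^{-1}(\widetilde u_i)$ lie in $[R_{v_i},R_{v_i}+\delta_{v_i}]$ by monotonicity of $\varphi$. A routine chain-rule computation using $(\varphi^{-1})'=1/\Phi$ and $\partial f/\partial R_i=-(1-\beta\gamma)f/[(\beta R_i+1)(R_i+\gamma)]$ shows that the coefficient of $\epsilon_{v_i}$ simplifies, after substituting the definition of $\Phi$ and collecting the square roots, to exactly the $i$-th summand of $\alpha(d;\widetilde R_1,\ldots,\widetilde R_d)$; bounding each $\epsilon_{v_i}$ by $\max_j\epsilon_{v_j}$ finishes the proof.
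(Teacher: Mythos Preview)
Your proposal is correct, and parts~\ref{lemma-epsilon-delta} and~\ref{lemma-epsilon-contract} match the paper's proof exactly: the paper too does the one-variable MVT for part~\ref{lemma-epsilon-delta}, and for part~\ref{lemma-epsilon-contract} it performs precisely the change of variables $y_{v_i}=\varphi(R_{v_i})$ and applies the multivariate MVT to $\varphi\circ f\circ\varphi^{-1}$, for the same reason you identify (keeping a single consistent intermediate point).

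For part~\ref{lemma-epsilon-bound}, however, your argument is genuinely simpler than the paper's. The paper bounds $\epsilon_v$ by first applying the MVT to $\varphi\circ f$ in the original $R$-coordinates, obtaining a sum $\sum_i \delta_{v_i}/[(\beta\widetilde R_i+1)(\widetilde R_i+\gamma)]$, and then uses the hypothesis $v_i\notin S$ to ensure each $\delta_{v_i}$ is finite (bounded by $\lambda/\gamma^{d_i}$ if $v_i$ is free, and zero if $v_i\in\Lambda\setminus S$); it then sums these bounds and checks the total is $O(1)$ via the factor $d\gamma^{-(d+3)/2}$. You instead observe that once $R_v+\delta_v=O(1)$ is established for free $v$, the integrability of $\Phi$ near $0$ makes $\varphi$ continuous on $[0,\infty)$, so $\epsilon_v\le\varphi(R_v+\delta_v)-\varphi(0)=O(1)$ immediately. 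Your route avoids looking at the children altogether and in fact does not use the hypothesis $v_i\notin S$; it only needs $v\notin\Lambda$, which is implicit in the recursion being applied at $v$. This is a cleaner argument that slightly strengthens the stated claim.
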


\newcommand{\ProofLemmaEpsilon}{
\begin{enumerate}
\item Due to the Mean Value Theorem, there exists an $\widetilde{R}\in[R_v, R_v+\delta_v]$ such that
\begin{align*}
\epsilon_v
&=
\varphi(R_v+\delta_v)-\varphi(R_v)
=\varphi'(\widetilde{R})\cdot\delta_v
=\Phi(\widetilde{R})\cdot\delta_v.
\end{align*}
\item
Suppose that each vertex has at most $k$ children. Then due to the assumption either $k$ is bounded or $\gamma>1$.

If $v\not\in\Lambda$, then $\delta_v\le R_v+\delta_v= f(R_{v_1},\ldots,R_{v_d})\le f(0,\ldots,0)=\frac{\lambda}{\gamma^d}$, where $0\le d\le k$.  If $\gamma>1$, then $R_v+\delta_v\le \lambda=O(1)$; and if $k$ is finite, then $R_v+\delta_v\le\max\{\frac{\lambda}{\gamma^k},\lambda\}=O(1)$.

Due to the Mean Value Theorem, there exist $\widetilde{R_i}\in[R_{v_i}, R_{v_i}+\delta_{v_i}]$, $1\le i\le d$, such that
\begin{align*}
\epsilon_v
&=
\varphi\left(f(R_{v_1},\ldots,R_{v_d})\right)-\varphi\left(f(R_{v_1}+\delta_{v_1},\ldots,R_{v_d}+\delta_{v_d})\right) \\
&=
-\nabla \varphi\left(f(\widetilde{R_1},\ldots,\widetilde{R_d})\right)\cdot(\delta_{v_1},\ldots,\delta_{v_d})\\
&=
\frac{(1-\beta \gamma) \cdot\left(f(\widetilde{R_1},\ldots,\widetilde{R_d})\right)^{\frac{1}{2}}}{\left(\beta f(\widetilde{R_1},\ldots,\widetilde{R_d})+1\right)^{\frac{1}{2}}\left(f(\widetilde{R_1},\ldots,\widetilde{R_d})+\gamma\right)^{\frac{1}{2}}}
  \cdot \sum_{i=1}^d  \frac{\delta_{v_i}}{ (\beta \widetilde{R_i} +1)(\widetilde{R_i} +\gamma)} \\
&\le
\sqrt{\frac{f(0,\ldots,0)}{\gamma}}\cdot\sum_{i=1}^d\frac{\delta_{v_i}}{\gamma}.
\end{align*}
If $v_i\not\in S$ for all $v_i$, then due to the above argument, $\delta_{v_i}\le \frac{\lambda}{\gamma^{d_i}}$ for free $v_i$, where $0\le d_i\le k$ is the number of children of $v_i$; and trivially $\delta_{v_i}=0$ for fixed $v_i$. Therefore, $\epsilon_v\le\sqrt{\frac{\lambda}{\gamma^{d+1}}}\sum_{i=1}^d\frac{\lambda}{\gamma^{d_i+1}}\le\lambda^{\frac{3}{2}}d\gamma^{-\frac{d+3}{2}}\cdot\max\left\{\frac{1}{\gamma^k},1\right\}$. If $\gamma>1$, then it is easy to see that  $d\gamma^{-\frac{d+3}{2}}$ is bounded by a constant, thus it holds that $\epsilon_v=O(1)$; if $k$ is bounded, then $d\le k$ is also bounded, thus clearly $\epsilon_v=O(1)$.

\item
We then analyze the stepwise contraction of $\epsilon_v$. Define that $y_v=\varphi(R_v)$ and accordingly $y_{v_i}=\varphi(R_{v_i})$, $1\le i\le d$. Then $y_v+\epsilon_v=\varphi(R_v+\delta_v)$ and $y_{v_i}+\epsilon_{v_i}=\varphi(R_{v_i}+\delta_{v_i})$, $1\le i\le d$.  
We have
\begin{eqnarray*}
y_v&=&\varphi(f(\varphi^{-1}(y_{v_1}+\epsilon_{v_1}),...,\varphi^{-1}(y_{v_d}+\epsilon_{v_d}))),\\
y_v+\epsilon_v&=&\varphi(f(\varphi^{-1}(y_{v_1}),...,\varphi^{-1}(y_{v_d}))).
\end{eqnarray*}
Apply the Mean Value Theorem. There exist $\widetilde{y_i}\in[y_{v_i}, y_{v_i}+\epsilon_{v_i}]$ and corresponding $\widetilde{R_i}\in[R_{v_i}, R_{v_i}+\delta_{v_i}]$ satisfying $\widetilde{y_i}=\varphi(\widetilde{R_i}), 1\le i\le d$, such that
\begin{align*}
\epsilon_v
&=
\varphi(f(\varphi^{-1}(y_{v_1}),...,\varphi^{-1}(y_{v_d})))-\varphi(f(\Phi^{-1}(y_{v_1}+\epsilon_{v_1}),...,\varphi^{-1}(y_{v_d}+\epsilon_{v_d})))\\
&=
-\nabla\varphi(f(\varphi^{-1}(\widetilde{y_1}),...,\varphi^{-1}(\widetilde{y_d})))\cdot(\epsilon_{v_1},...,\epsilon_{v_d})\\
&=
-\Phi(f(\widetilde{R_1},...,\widetilde{R_d}))\cdot\sum_{i=1}^d\frac{\partial f}{\partial R_i}\frac{1}{\Phi(\widetilde{R_i})}\cdot \epsilon_i\\
&=
\frac{(1-\beta\gamma)\left(\lambda\prod^d_{i=1}\frac{\beta\widetilde{R_i}+1}{\widetilde{R_i}+\gamma}\right)^{\frac{1}{2}}}{\left(\beta\lambda\prod^d_{i=1}\frac{\beta\widetilde{R_i}+1}{\widetilde{R_i}+\gamma}+1\right)^{\frac{1}{2}}\left(\lambda\prod^d_{i=1}\frac{\beta\widetilde{R_i}+1}{\widetilde{R_i}+\gamma}+\gamma\right)^{\frac{1}{2}}}\cdot\sum^d_{i=1}\frac{\epsilon_{v_i}\widetilde{R_i}^{\frac{1}{2}}}{(\beta\widetilde{R_i}+1)^{\frac{1}{2}}(\widetilde{R_i}+\gamma)^{\frac{1}{2}}}\\
&\le
\max_{1\le i\le d}\{\epsilon_{v_i}\}\cdot\frac{(1-\beta\gamma)\left(\lambda\prod^d_{i=1}\frac{\beta\widetilde{R_i}+1}{\widetilde{R_i}+\gamma}\right)^{\frac{1}{2}}}{\left(\beta\lambda\prod^d_{i=1}\frac{\beta\widetilde{R_i}+1}{\widetilde{R_i}+\gamma}+1\right)^{\frac{1}{2}}\left(\lambda\prod^d_{i=1}\frac{\beta\widetilde{R_i}+1}{\widetilde{R_i}+\gamma}+\gamma\right)^{\frac{1}{2}}}\cdot\sum^d_{i=1}\frac{\widetilde{R_i}^{\frac{1}{2}}}{(\beta\widetilde{R_i}+1)^{\frac{1}{2}}(\widetilde{R_i}+\gamma)^{\frac{1}{2}}}\\
&=
\alpha(d; \widetilde{R}_1,\ldots,\widetilde{R}_d)\cdot\max_{1\le i\le d}\{\epsilon_{v_i}\}.
\end{align*}
\end{enumerate}
}


To prove the strong spatial mixing,
we first relate $\delta_v$ to $\epsilon_v$ by Item~\ref{lemma-epsilon-delta} of Lemma~\ref{lemma-epsilon}, and then apply induction on the depth in $T$, with Item~\ref{lemma-epsilon-bound} of Lemma~\ref{lemma-epsilon} as basis and Item~\ref{lemma-epsilon-contract} of Lemma~\ref{lemma-epsilon} as induction step. We then need to bound the contraction rate $\alpha(d;x_1,...,x_d)$.
Note that $\frac{z}{(\beta z+1)(z+\gamma)}\le\left(1+\sqrt{\beta\gamma}\right)^{-2}\le 1$ for $z\in[0,\infty)$, thus it holds unconditionally for all $x_i\in[0,\infty)$, $1\le i\le d$, that
\begin{align}
\alpha(d;x_1,...,x_d)
&\le d, \label{eq:alpha-unconditional}\\
\alpha(d;x_1,...,x_d)
&\le d\cdot\sqrt{\frac{\lambda}{\gamma^{d+1}}}. \label{eq:alpha-unbounded}
\end{align}
With the uniqueness,  the following much tighter contraction bound can be proved.
\begin{lemma}\label{lemma-asymmetric-alpha-bound}
Let  $(\beta,\gamma,\lambda)$ be anti-ferromagnetic.
If $(\beta,\gamma,\lambda)$ is \unique{$\Delta$}, then there exists a constant $\alpha<1$ such that for any integer $1\le d<\Delta$ and any $x_1,...,x_d\in[0,+\infty)$, $1\le i\le d$, it holds that $\alpha(d;x_1,\ldots,x_d)\le\alpha$.
\end{lemma}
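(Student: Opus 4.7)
The plan is to first reduce the problem to the symmetric case $x_1 = \cdots = x_d$, and then bound the symmetric case using the uniqueness condition.

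\emph{Symmetrization.} I would introduce the substitution $t_i = (\beta x_i + 1)/(x_i+\gamma)$ and $s_i = \log t_i$, so that $F = \lambda\prod_i t_i$ depends only on $\sum_i s_i$. A direct computation yields
\[
(1-\beta\gamma)^2 \cdot \frac{x_i}{(\beta x_i+1)(x_i+\gamma)} = 1 + \beta\gamma - \gamma e^{s_i} - \beta e^{-s_i},
\]
which is concave in $s_i$. Combining the Cauchy--Schwarz inequality $\bigl(\sum_i h(x_i)\bigr)^2 \le d \sum_i h(x_i)^2$, with $h(x)^2 = x/[(\beta x+1)(x+\gamma)]$, and Jensen's inequality applied to the above concave function under the linear constraint that $\sum_i s_i$ (equivalently $F$) is fixed, the supremum is attained when $s_1 = \cdots = s_d$, i.e., at $x_1 = \cdots = x_d = x^*$ with $f_d(x^*) = F$. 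This yields $\alpha(d; x_1,\ldots,x_d) \le \alpha(d; x^*,\ldots,x^*)$.

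\emph{Symmetric case.} In the symmetric case, one computes $\alpha(d; x,\ldots,x) = |f_d'(x)| \cdot \Phi(f_d(x))/\Phi(x)$, and squaring gives $\alpha(d; x,\ldots,x)^2 = B(x)\,B(f_d(x))$ where $B(w) := d(1-\beta\gamma)\, w / [(\beta w+1)(w+\gamma)]$. The key identity is $B(\hat x_d) = |f_d'(\hat x_d)|$, which by Lemma~\ref{lemma-absolute-uniqueness} is uniformly bounded by some $c < 1$ for all $d < \Delta$. It therefore suffices to show $\sup_{x \ge 0} B(x) B(f_d(x)) < 1$ uniformly for $d < \Delta$. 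For the $\Delta = \infty$ case, one additionally invokes the unconditional bound $\alpha(d; \cdot) \le d\sqrt{\lambda/\gamma^{d+1}}$ of~\eqref{eq:alpha-unbounded} for large $d$, which vanishes since $\gamma > 1$ in the universal-uniqueness regime (cf.\ the proof of Lemma~\ref{lemma-absolute-uniqueness}); only finitely many small $d$ then need the uniqueness-based argument.

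\emph{Main obstacle.} The hard part is establishing the strict inequality $\sup_x B(x) B(f_d(x)) < 1$ under strict uniqueness. The supremum is generally not attained at the fixed point $\hat x_d$ and may slightly exceed $|f_d'(\hat x_d)|^2$ in simple examples (e.g., the hardcore model at $d=2$), so one cannot just plug in the uniqueness value. The plan is to analyze the critical-point equation $L'(x^*) = -L'(f_d(x^*)) \cdot f_d'(x^*)$ with $L = \log B$, exploiting the anti-ferromagnetic monotonicity of $f_d$ and the sign pattern of $L'(w) = (\gamma - \beta w^2)/[w(\beta w+1)(w+\gamma)]$ to extract a quantitative gap from the strict uniqueness inequality $|f_d'(\hat x_d)| < 1$. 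Alternatively, a continuity argument suffices: at the uniqueness threshold $|f_d'(\hat x_d)| = 1$ the critical-point equation is solved by $x^* = \hat x_d$ and the supremum equals exactly $1$, so strict uniqueness yields strict $\sup < 1$ by continuous dependence on parameters, with Lemma~\ref{lemma-absolute-uniqueness} providing a uniform gap across $d < \Delta$.
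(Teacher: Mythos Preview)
Your symmetrization step is essentially the paper's Lemma~\ref{lemma-asymmetric-less-than-symmetric}: the substitution $t_i=(\beta x_i+1)/(x_i+\gamma)$, Cauchy--Schwarz, and then Jensen on the concave function $s\mapsto 1+\beta\gamma-\gamma e^s-\beta e^{-s}$ is exactly equivalent to the paper's Cauchy--Schwarz followed by AM--GM on $\sum_i(\gamma z_i+\beta z_i^{-1})$. Your reduction to $\alpha_d(x)^2=B(x)B(f_d(x))$ is also correct, and your critical-point equation $L'(x)=-L'(f_d(x))f_d'(x)$ unwinds precisely to the paper's equation~\eqref{eq:x_d}. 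So the skeleton matches.

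The genuine gap is in the ``main obstacle'' paragraph: you have not actually bounded $\sup_x B(x)B(f_d(x))$. Two concrete issues. First, the continuity argument as stated does not close. You appeal to Lemma~\ref{lemma-absolute-uniqueness} for a uniform gap, but that lemma controls $|f_d'(\hat x_d)|$, not $\sup_x\alpha_d(x)$; continuity in the parameters gives, for each fixed $d$, some $\alpha_d<1$, but no quantitative relation between $\sup_x\alpha_d(x)$ and $|f_d'(\hat x_d)|$. You would also need to verify that at the boundary the \emph{global} maximum of $\alpha_d$ is attained at $\hat x_d$ (it is, but this already requires knowing that $\alpha_d$ has a unique critical point), and that the supremum varies continuously---neither of which you establish. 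Second, your critical-point approach has the right ingredients (the equation and the sign of $\gamma-\beta w^2$) but is missing the key move. The paper's Lemma~\ref{lemma-CD-gamma} proves the explicit bound $\alpha_d(x_d)\le\sqrt{|f_d'(\hat x_d)|}$, not merely $<1$. The trick is to use~\eqref{eq:x_d} to \emph{substitute} for $(\beta f_d(x_d)+1)(f_d(x_d)+\gamma)$ inside $\alpha_d(x_d)$, obtaining
\[
\tilde\alpha_d(x)^2=d(1-\beta\gamma)\cdot\frac{\gamma-\beta x^2}{(\beta x+1)(x+\gamma)}\cdot\frac{f_d(x)}{\gamma-\beta f_d(x)^2},
\]
and then to show that each factor is monotone with the correct sign on the interval between $x_d$ and $\hat x_d$. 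This requires a short case analysis on $\hat x_d\lessgtr x_d$, using that uniqueness forces $\gamma-\beta\hat x_d^2$, $\gamma-\beta x_d^2$, and $\gamma-\beta f_d(x_d)^2$ to share a sign. That substitution-plus-monotonicity step is the content you are missing; without it the argument remains a plan rather than a proof.
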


This lemma is the technical core of our analysis. It crucially relies on the choice of potential function.
Before delving into the formal proof of Lemma \ref{lemma-asymmetric-alpha-bound}, we note that Theorem \ref{theorem-correlation-decay} can be implied by this lemma.

\subsection*{Proof of Theorem \ref{theorem-correlation-decay}.}
Let $T=T_{\text{SAW}}(G,v)$ for a $G$ whose maximum degree is at most $\Delta$. Then the maximum degree of $T$ is at most $\Delta$, thus the root $v$ has at most $\Delta$ children in $T$, and every other vertex in $T$ has less than $\Delta$ children. We recursively construct $R_u$, $\delta_u$ and $\epsilon_u$ for every subtree in $T$.

Let $t=\mathrm{dist}(v,S)$. By repeatedly applying Item \ref{lemma-epsilon-contract} of Lemma \ref{lemma-epsilon}, without loss of generality, we have a path $u_1u_2\cdots u_{t-2}$ in $T$ with $u_1=v$ such that $\epsilon_{u_j}\le\alpha(d_j;x_1,\ldots,x_{d_j})\cdot\epsilon_{u_{j+1}}$ for  $j=1,2,\ldots,t-3$, where $d_j$ is the number of children of $u_j$ and $x_i\in[0,\infty)$, $1\le i\le d_j$.

Note that $d_1\le\Delta$, and $d_j<\Delta$ for all other $j$. Assume that $(\beta,\gamma,\lambda)$ is \unique{$\Delta$}. If $\Delta$ is bounded, then by Lemma \ref{lemma-asymmetric-alpha-bound} there exists a constant $\alpha<1$, such that
$\epsilon_{u_j}\le \alpha\cdot \epsilon_{u_{j+1}}$ for $2\le j\le t-3$, and $\epsilon_{v}\le d_1\cdot\epsilon_{u_2}\le \Delta\cdot\epsilon_2$ due to \eqref{eq:alpha-unconditional}; and if $\Delta=\infty$, then by Lemma \ref{lemma-asymmetric-alpha-bound}, $\epsilon_{u_j}\le \alpha\cdot \epsilon_{u_{j+1}}$ for all $1\le j\le t-3$.
In both cases we have $\epsilon_v=O(\alpha^t\cdot\epsilon_{u_{t-2}})$.

Due to Item \ref{lemma-epsilon-delta} of Lemma \ref{lemma-epsilon},
$\delta_v=\frac{\epsilon_v}{\Phi(\widetilde{R})}=O\left(\frac{1}{\Phi(\widetilde{R})}\cdot\alpha^t\epsilon_{u_{t-2}}\right)$ for some $\widetilde{R}\in[R_v,R_v+\delta_v]$. We then bound $\Phi(\widetilde{R})$ and $\epsilon_{u_{t-2}}$.
Due to Item \ref{lemma-threshold-infty} of Lemma \ref{lemma-threshold}, the fact that $(\beta,\gamma,\lambda)$ is \unique{$\Delta$} implies that either $\Delta$ is bounded or $\gamma>1$. Note that $v$ must be free or the theorem is trivial to prove, and none of $u_{t-2}$'s children is in $S$ because $\mathrm{dist}(v,S)=t$. Thus by Item \ref{lemma-epsilon-bound} of Lemma \ref{lemma-epsilon}, $\epsilon_{u_{t-2}}=O(1)$ and $\widetilde{R}\le R_v+\delta_v=O(1)$, which implies that  $\Phi(\widetilde{R})=\frac{1}{\sqrt{\widetilde{R}(\beta \widetilde{R}+1)(\widetilde{R}+\gamma)}}=\Omega(1)$. 

In conclusion, if $(\beta,\gamma,\lambda)$ is \unique{$\Delta$}, there exists a constant $\alpha<1$, such that $\delta_v=O\left(\alpha^{t}\right)$.
As discussed in the beginning of this section, this proves Theorem \ref{theorem-correlation-decay}.

\subsection*{Proof of Lemma \ref{lemma-asymmetric-alpha-bound}.}
The rest of this section is devoted to the proof of Lemma \ref{lemma-asymmetric-alpha-bound}. Given that $(\beta,\gamma,\lambda)$ is \unique{$\Delta$}, there exists an absolute constant $\alpha<1$ such that $\alpha(d;x_1,\ldots,x_d)\le \alpha$ for any $1\le d<\Delta$ and $x_1,\ldots,x_d\ge 0$.

We define the symmetric version of $\alpha(d;x_1,\ldots,x_d)$:
\begin{align*}
\alpha_d(x)
&\triangleq
\alpha(d;\underbrace{x,\ldots,x}_{d})
=
\frac{d(1-\beta\gamma)\left({x}\cdot\lambda\left(\frac{\beta{x}+1}{{x}+\gamma}\right)^d\right)^{\frac{1}{2}}}{(\beta{x}+1)^{\frac{1}{2}}({x}+\gamma)^{\frac{1}{2}}\left(\beta\lambda\left(\frac{\beta{x}+1}{{x}+\gamma}\right)^d+1\right)^{\frac{1}{2}}\left(\lambda\left(\frac{\beta{x}+1}{{x}+\gamma}\right)^d+\gamma\right)^{\frac{1}{2}}}.
\end{align*}
The following lemma shows that the symmetric case dominates the maximum of $\alpha(d; x_1,...,x_d)$ by using the inequalities of Cauchy-Schwarz and arithmetic and geometric means.

\begin{lemma}\label{lemma-asymmetric-less-than-symmetric}
Let  $(\beta,\gamma,\lambda)$ be anti-ferromagnetic.
Then for any integer $d$ and any $x_1,...,x_d\in[0,+\infty)$, there exists an $\bar{x}\in[0,+\infty)$ such that $\alpha(d; x_1,...,x_d)\le \alpha(d,\bar{x})$.
\end{lemma}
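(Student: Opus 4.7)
I would prove the lemma by a change of variables that reduces the claim to Jensen's inequality. Set $a_i := \frac{\beta x_i + 1}{x_i + \gamma}$; since $(\beta,\gamma,\lambda)$ is anti-ferromagnetic ($\beta\gamma<1$), the map $x \mapsto a$ is a continuous decreasing bijection from $[0,\infty)$ onto $(\beta,\,1/\gamma]$. The geometric mean $\bar a := \bigl(\prod_{i=1}^{d} a_i\bigr)^{1/d}$ again lies in $(\beta,\,1/\gamma]$, so there is a unique $\bar x \in [0,\infty)$ with $\frac{\beta \bar x + 1}{\bar x + \gamma} = \bar a$. I take this $\bar x$ as the witness. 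Because $\lambda \prod_{i=1}^{d} a_i = \lambda \bar a^{d}$, the product-only prefactor in $\alpha(d;x_1,\ldots,x_d)$ coincides with its counterpart in $\alpha_d(\bar x)$, so it suffices to prove
\[
\sum_{i=1}^{d} g(x_i) \;\le\; d\,g(\bar x),
\qquad
g(x) \;:=\; \sqrt{\frac{x}{(\beta x+1)(x+\gamma)}}.
\]

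Using the inverse identities $x = \tfrac{1 - \gamma a}{a - \beta}$, $\beta x + 1 = \tfrac{a(1-\beta\gamma)}{a-\beta}$, $x + \gamma = \tfrac{1-\beta\gamma}{a-\beta}$, a direct calculation gives
\[
g(x) \;=\; \frac{H(\ln a)}{1 - \beta\gamma},
\qquad
H(t) \;:=\; \sqrt{\,1 + \beta\gamma - \gamma e^{t} - \beta e^{-t}\,}.
\]
With $t_i := \ln a_i$ and $\bar t := \ln \bar a = \tfrac{1}{d}\sum_{i} t_i$, the target inequality becomes $\tfrac{1}{d}\sum_{i} H(t_i) \le H(\bar t)$. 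This is Jensen's inequality, provided $H$ is concave on the image of $t_i$, namely $(\ln\beta,\,-\ln\gamma]$.

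The crux is therefore the concavity of $H$ on this interval. I would establish it by the identity
\[
H(t)^{2} \;=\; (1+\beta\gamma) \;-\; 2\sqrt{\beta\gamma}\,\cosh(t - t_{0}),
\qquad t_{0} := \tfrac{1}{2}\ln(\beta/\gamma),
\]
together with a direct computation: writing $F := H^{2}$, the inequality $H'' \le 0$ is equivalent to $2FF'' \le (F')^{2}$, which after substituting $p := \sqrt{\beta\gamma}$ and $c := \cosh(t - t_{0}) \ge 1$ simplifies to
\[
(pc - 1)(c - p) \;\le\; 0.
\]
Since $0 < p < 1 \le c$, the second factor is positive, so concavity reduces to $c \le 1/p$. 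On the relevant range $a \in (\beta,\,1/\gamma]$ one has $|t - t_{0}| \le \tfrac{1}{2}\ln\bigl(1/(\beta\gamma)\bigr) = \ln(1/p)$, hence $c \le \tfrac{p + 1/p}{2} = \tfrac{1+p^{2}}{2p} < \tfrac{1}{p}$, so the inequality holds strictly. The edge case $\beta=0$ is handled by the same computation (or by a limit), since there $H(t)^{2} = 1 - \gamma e^{t}$ is easily seen to have concave square root on $(-\infty,-\ln\gamma]$.

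The main obstacle is the concavity step. Although $H^{2}$ is concave in $t$, this does not automatically make $H$ concave, and in fact $H$ fails to be concave on all of $\mathbb{R}$. The argument relies crucially on two things: the specific algebraic form $A - B\cosh(\cdot - t_{0})$ of $H^{2}$, which yields the clean factorization $(pc-1)(c-p)$; and the precise match between the admissible range of $t$ and the interval on which $\cosh(t - t_{0}) \le 1/\sqrt{\beta\gamma}$. This alignment is exactly what the potential $\Phi$ was engineered to produce, so the success of the reduction is genuinely due to the choice of $\Phi$.
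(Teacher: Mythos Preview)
Your proof is correct and uses the same substitution and the same witness $\bar x$ (the preimage of the geometric mean $\bar a=(\prod a_i)^{1/d}$) as the paper. The only difference is how you establish the remaining sum inequality $\sum_i g(x_i)\le d\,g(\bar x)$. You reparameterize by $t=\ln a$ and apply Jensen to the single function $H(t)=\sqrt{1+\beta\gamma-\gamma e^t-\beta e^{-t}}$, which forces you to verify concavity of $H$ on $(\ln\beta,-\ln\gamma]$; your $\cosh$ rewriting and the factorization $2FF''-(F')^2=4p(pc-1)(c-p)$ together with the range bound $c\le(1+p^2)/(2p)<1/p$ are clean and correct (and the $\beta=0$ case is indeed a straightforward separate check). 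The paper instead avoids the concavity computation entirely by splitting the step in two: first Cauchy--Schwarz (i.e.\ concavity of $\sqrt{\cdot}$) gives $\sum_i\sqrt{(z_i^{-1}-\gamma)(z_i-\beta)}\le d\sqrt{\tfrac1d\sum_i(z_i^{-1}-\gamma)(z_i-\beta)}$, and then AM--GM applied separately to $\tfrac1d\sum z_i\ge\bar z$ and $\tfrac1d\sum z_i^{-1}\ge\bar z^{-1}$ bounds the inside of the square root by $1+\beta\gamma-\gamma\bar z-\beta\bar z^{-1}$. So the paper's route is a two-line argument from standard inequalities with no calculus, while yours is a single Jensen step that buys conceptual unity at the cost of the concavity verification; both land on exactly the same bound.
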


\newcommand{\ProofLemmaSymmetry}{
Let $z_i=\frac{\beta x_i+1}{x_i+\gamma}$. Then $z_i\in(\beta,\frac{1}{\gamma}]$ and $x_i= \frac{1-\gamma z_i}{z_i-\beta}$.
Express $\alpha(d;x_1,...,x_d)$ in terms of $z_i$:
\begin{align*}
\alpha(d;x_1,...,x_d)
&=
\frac{\left(\lambda\prod^d_{i=1}z_i\right)^{\frac{1}{2}}}{\left(\beta\lambda\prod^d_{i=1}z_i+1\right)^{\frac{1}{2}}\left(\lambda\prod^d_{i=1}z_i+\gamma\right)^{\frac{1}{2}}} \cdot\sum^d_{i=1}(z_i^{-1}-\gamma)^{\frac{1}{2}}(z_i-\beta)^{\frac{1}{2}}.
\end{align*}
Due to Cauchy-Schwarz inequality,
\begin{align*}
\sum^d_{i=1}(z_i^{-1}-\gamma)^{\frac{1}{2}}(z_i-\beta)^{\frac{1}{2}}
\le
d\left(\frac{1}{d}\sum^d_{i=1}(z_i^{-1}-\gamma)(z_i-\beta)\right)^{\frac{1}{2}}
=
d\left(1+\beta\gamma-\frac{1}{d}\sum^d_{i=1}(z_i\gamma+\beta z_i^{-1})\right)^{\frac{1}{2}}.
\end{align*}
Due to the inequality of arithmetic and geometric means,
\begin{align*}
d\left(1+\beta\gamma-\frac{1}{d}\sum^d_{i=1}(z_i\gamma+\beta z_i^{-1})\right)^{\frac{1}{2}}
&\le
d\left(1+\beta\gamma-\gamma\left(\prod^d_{i=1}z_i\right)^{\frac{1}{d}}-\beta\left(\prod^d_{i=1}z_i\right)^{-\frac{1}{d}} \right)^{\frac{1}{2}}.
\end{align*}
Let $\bar{z}=\left(\prod^d_{i=1}z_i\right)^{\frac{1}{d}}$. Then combining the above calculations,
\begin{align*}
\alpha(d;x_1,...,x_d)
&\le
\frac{(\lambda\bar{z}^d)^{\frac{1}{2}}\cdot d(1+\beta\gamma-\gamma\bar{z}-\beta\bar{z}^{-1})^{\frac{1}{2}}}{(\beta\lambda\bar{z}^d+1)^{\frac{1}{2}}(\lambda\bar{z}^d+\gamma)^{\frac{1}{2}}}
=
d\cdot \sqrt{\frac{\lambda\bar{z}^d(\bar{z}^{-1}-\gamma)(\bar{z}-\beta)}{(\beta\lambda\bar{z}^d+1)(\lambda\bar{z}^d+\gamma)}}.
\end{align*}

Let $\bar{x}$ be such that $\frac{\beta\bar{x}+1}{\bar{x}+\gamma}=\bar{z}$. Then $\bar{x}\in[0,+\infty)$ and by substituting $\frac{\beta\bar{x}+1}{\bar{x}+\gamma}$ for $\bar{z}$, we have
\begin{align*}
\alpha(d; x_1,...,x_d)
&\le
\frac{d(1-\beta\gamma)\left(\bar{x}\cdot\lambda\left(\frac{\beta\bar{x}+1}{\bar{x}+\gamma}\right)^d\right)^{\frac{1}{2}}}{(\beta\bar{x}+1)^{\frac{1}{2}}(\bar{x}+\gamma)^{\frac{1}{2}}\left(\beta\lambda\left(\frac{\beta\bar{x}+1}{\bar{x}+\gamma}\right)^d+1\right)^{\frac{1}{2}}\left(\lambda\left(\frac{\beta\bar{x}+1}{\bar{x}+\gamma}\right)^d+\gamma\right)^{\frac{1}{2}}}
=\alpha_d(\bar{x}).
\end{align*}
}
\begin{proof}
\ProofLemmaSymmetry
\end{proof}

\begin{lemma}\label{lemma-CD-gamma}
Let  $(\beta,\gamma,\lambda)$ be anti-ferromagnetic.
If $(\beta,\gamma,\lambda)$ is \unique{$\Delta$}, then there exists a constant $\alpha<1$ such that for any integer $1\le d<\Delta$, it holds that $\alpha_d(x)\le\alpha$ for all $x\ge 0$.
\end{lemma}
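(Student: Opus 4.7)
}

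My first move will be to rewrite $\alpha_d(x)$ in a form that exposes the role of the potential and makes the relationship to the uniqueness condition transparent. A direct calculation shows the identity
\begin{align*}
\alpha_d(x) = |f_d'(x)|\cdot\frac{\Phi(f_d(x))}{\Phi(x)},
\end{align*}
which in particular gives $\alpha_d(\hat{x}_d)=|f_d'(\hat{x}_d)|<1$ at the fixed point by the hypothesis of \unique{$\Delta$}. To handle general $x$, I will use the substitution $r=(\beta x+1)/(x+\gamma)\in(\beta,1/\gamma)$, under which $y=f_d(x)=\lambda r^d$ and the expressions $(\beta x+1)(x+\gamma)$ and $x$ both rationalize in $r$; this reduces the bound to showing that the one-variable function
\begin{align*}
\alpha_d^2(r)=\frac{d^2\lambda r^{d-1}(1-\gamma r)(r-\beta)}{(\beta\lambda r^d+1)(\lambda r^d+\gamma)}
\end{align*}
is strictly less than $1$ on $(\beta,1/\gamma)$ whenever $1\le d<\Delta$.

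The second step is to prove this pointwise inequality for each fixed $d$. Because $\alpha_d^2(r)$ vanishes at the two endpoints $r=\beta$ and $r=1/\gamma$ and is continuous on the closed interval, it attains its supremum at some interior critical point $r_d^\ast$. Setting $\tfrac{d}{dr}\log\alpha_d^2(r)=0$ yields an algebraic relation between $r_d^\ast$ and $y_d^\ast=\lambda (r_d^\ast)^d$; this relation, combined with the uniqueness inequality $d(1-\beta\gamma)\hat{x}_d<(\beta\hat{x}_d+1)(\hat{x}_d+\gamma)$ at the fixed point, will let me algebraically majorize $\alpha_d^2(r_d^\ast)$ by $|f_d'(\hat{x}_d)|^2$ up to a factor that stays strictly below $1$. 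The key obstacle here is precisely that, as a direct hardcore ($\beta=0$, $\gamma=1$) check reveals, the maximizer $r_d^\ast$ generally does not coincide with the fixed point; so the bound cannot be obtained by evaluating at $\hat{x}_d$ and must instead be extracted by manipulating the critical-point equation together with the uniqueness constraint. I expect this to be the main technical work in the proof.

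Finally I need a uniform bound $\alpha_d(x)\le\alpha<1$ independent of $d$. When $\Delta<\infty$ this is immediate: I take the supremum over finitely many $d$ of continuous functions that are each strictly less than $1$ on a compact interval, obtaining $\alpha<1$. When $\Delta=\infty$, Item~\ref{lemma-threshold-infty} of Lemma~\ref{lemma-threshold} forces $\gamma>1$, and Lemma~\ref{lemma-absolute-uniqueness} supplies an absolute constant $c<1$ with $|f_d'(\hat{x}_d)|\le c$ for all $d$; I also have the unconditional tail bound $\alpha_d(x)\le d\sqrt{\lambda/\gamma^{d+1}}$ from~\eqref{eq:alpha-unbounded}, which goes to $0$ as $d\to\infty$ when $\gamma>1$. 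Hence $\alpha_d$ is already uniformly small for all large $d$, and the remaining small values of $d$ are handled as in the finite-$\Delta$ case; combining the two regimes gives the required uniform constant $\alpha<1$.
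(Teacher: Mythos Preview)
Your plan follows the same architecture as the paper's proof: reduce to the one-variable function $\alpha_d$, locate its maximizer via a critical-point equation, and compare the maximal value to the fixed-point quantity $|f_d'(\hat{x}_d)|$ controlled by uniqueness. The $r$-substitution and the identity $\alpha_d(x)=|f_d'(x)|\,\Phi(f_d(x))/\Phi(x)$ are correct and are essentially the paper's $G(x)$ written differently; the uniformity argument via Lemma~\ref{lemma-absolute-uniqueness} and the tail bound~\eqref{eq:alpha-unbounded} is also fine (the paper avoids the large-$d$/small-$d$ split by getting $\alpha_d(x_d)\le\sqrt{|f_d'(\hat{x}_d)|}$ directly, but your route works too).

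There is, however, a concrete error in the target you set for the critical-point step. Because $r_d^\ast$ (equivalently $x_d$) is the \emph{maximizer} of $\alpha_d$, and because your own identity gives $\alpha_d(\hat{x}_d)=|f_d'(\hat{x}_d)|$, you necessarily have $\alpha_d^2(r_d^\ast)\ge |f_d'(\hat{x}_d)|^2$. So ``majorizing $\alpha_d^2(r_d^\ast)$ by $|f_d'(\hat{x}_d)|^2$ up to a factor strictly below~$1$'' is impossible. The bound the paper actually proves is $\alpha_d(x_d)\le \sqrt{|f_d'(\hat{x}_d)|}$, i.e.\ $\alpha_d^2(x_d)\le |f_d'(\hat{x}_d)|$, which is weaker than what you wrote but exactly strong enough. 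The mechanism is not a direct algebraic majorization: the paper uses the critical-point equation \eqref{eq:x_d} to rewrite $\alpha_d(x_d)$ as an auxiliary expression $\tilde{\alpha}_d(x_d)$ that equals $\sqrt{|f_d'(\hat{x}_d)|}$ when evaluated at $\hat{x}_d$, and then shows $\tilde{\alpha}_d$ is monotone between $x_d$ and $\hat{x}_d$ by a case analysis on the sign of $\gamma-\beta x^2$ (which, crucially, is fixed throughout the interval by combining \eqref{eq:sign-G'(x)} with the uniqueness inequality). Your plan correctly anticipates that the maximizer differs from the fixed point and that bridging the two is the main work, but you should redirect the target to $\sqrt{|f_d'(\hat{x}_d)|}$ and be prepared for this sign-based monotonicity argument rather than a purely algebraic comparison.
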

\begin{proof}
Fix $d$ to be any positive integer. We characterize the value of $x$ at which $\alpha_d(x)$ achieves its maximum.
We denote that $f_d(x)=\lambda\left(\frac{\beta x+1}{x+\gamma}\right)^d$. Taking derivative of $\alpha_d(x)$ with respect to $x$, we get that
\begin{align*}
\alpha_d'(x)=d(1-\beta\gamma)\cdot\frac{G'(x)}{2\sqrt{G(x)}},
\end{align*}
where $G(x)
=\frac{xf_d(x)}{(\beta f_d(x)+1)(f_d(x)+\gamma)(\beta x+1)(x+\gamma)}$, whose derivative is
\begin{align*}
G'(x)
&=
\frac{f_d(x)\cdot d(1-\beta\gamma)x}{(\beta f_d(x)+1)(f_d(x)+\gamma)(\beta x+1)^2(x+\gamma)^2}\cdot\left(\frac{\gamma-\beta x^2}{d(1-\beta\gamma)x}-\frac{\gamma-\beta f_d(x)^2}{(\beta f_d(x)+1)(f_d(x)+\gamma)}\right).
\end{align*}
As  $x$ ranges over $[0,\infty)$, the function $\frac{\gamma-\beta x^2}{d(1-\beta\gamma)x}$ is strictly decreasing in $x$ and ranges from $+\infty$ to $-\infty$, and the function $\frac{\gamma-\beta f_d(x)^2}{\left(\beta f_d(x)+1\right)\left(f_d(x)+\gamma\right)}$ is strictly increasing in $x$ and has a bounded range. Thus, the equation
\begin{align}
\frac{\gamma-\beta x^2}{d(1-\beta\gamma)x}
&=
\frac{\gamma-\beta f_d(x)^2}{\left(\beta f_d(x)+1\right)\left(f_d(x)+\gamma\right)}\label{eq:x_d}.
\end{align}
has unique solution in $(0,\infty)$, denoted by $x_d$.
Moreover, it holds that
\begin{align}
G'(x)
\begin{cases}
>0 & \text{if }0\le x<x_d,\\
=0 & \text{if }x=x_d,\\
<0 & \text{if }x>x_d.
\end{cases}\label{eq:sign-G'(x)}
\end{align}
The same also holds for $\alpha'_d(x)$. Thus, for any fixed $d$, $\alpha_d(x)$ achieves its maximum when $x=x_d$.

Therefore,  for all $x\ge 0$,
\begin{align}
\alpha_d(x)
\le
\alpha_d(x_d)
&=
d(1-\beta\gamma)\left(\frac{x_df_d(x_d)}{(\beta{x}_d+1)({x}_d+\gamma)\left(\beta f_d(x_d)+1\right)\left(f_d(x_d)+\gamma\right)}\right)^{\frac{1}{2}}\notag\\
&=
\left(d(1-\beta\gamma)\cdot\frac{(\gamma-\beta x_d^2)}{(\beta{x}_d+1)({x}_d+\gamma)}\cdot\frac{f_d(x_d)}{\left(\gamma-\beta f_d(x_d)^2\right)}\right)^{\frac{1}{2}}\label{eq:alpha-x_d}\\
&\triangleq
\tilde{\alpha}_d(x_d).\notag
\end{align}
Equation \eqref{eq:alpha-x_d} is obtained by substituting 
$\left(\beta f_d(x_d)+1\right)\left(f_d(x_d)+\gamma\right)$ according to \eqref{eq:x_d}.

Let $\hat{x}_d$ be the positive fixed point of $f_d(x)$, that is, $\hat{x}_d=f_d(\hat{x}_d)$.
We then claim that if $(\beta,\gamma,\lambda)$ is \unique{$\Delta$}, then $\tilde{\alpha}_d(x_d)\le\tilde{\alpha}_d(\hat{x}_d)$ for any integral $1\le d<\Delta$,
To see that this claim is sufficient to imply the lemma, note that after substituting $\hat{x}_d=f_d(\hat{x}_d)$, we have $\tilde{\alpha}_d(\hat{x}_d)=\sqrt{\frac{d(1-\beta\gamma)\hat{x}_d}{(\beta\hat{x}_d+1)(\hat{x}_d+\gamma)}}=\sqrt{\left|f_d'(\hat{x}_d)\right|}$.
And due to Lemma \ref{lemma-absolute-uniqueness}, if $(\beta,\gamma,\lambda)$ is \unique{$\Delta$} then there exists a constant $c<1$ such that $|f_d'(\hat{x}_d)|<c$ for all integer $1\le d<\Delta$.

We then prove the claim. Assume that $(\beta,\gamma,\lambda)$ is \unique{$\Delta$} and $1\le d< \Delta$. It is then sufficient to show that $\tilde{\alpha}_d(x)$ is decreasing if $\hat{x}_d\le x_d$ and is increasing if $\hat{x}_d> x_d$.

\noindent Case 1: $\hat{x}_d\le x_d$. Due to \eqref{eq:sign-G'(x)}, $G'(\hat{x}_d)\ge 0$. Note that
\begin{align*}
G'(\hat{x}_d)
=
\frac{d(1-\beta\gamma)(\gamma-\beta\hat{x}_d^2)\hat{x}_d^2}{(\beta\hat{x}_d+1)^3(\hat{x}_d+\gamma)^3}\cdot\left(\frac{1}{d(1-\beta\gamma)\hat{x}_d}-\frac{1}{(\beta\hat{x}_d+1)(\hat{x}_d+\gamma)}\right).
\end{align*}
Due to the uniqueness, $|f_d'(\hat{x}_d)|=\frac{d(1-\beta\gamma)\hat{x}_d}{(\beta\hat{x}_d+1)(\hat{x}_d+\gamma)}<1$, thus $\frac{1}{d(1-\beta\gamma)\hat{x}_d}-\frac{1}{(\beta\hat{x}_d+1)(\hat{x}_d+\gamma)}>0$. Combining with that $G'(x)\ge 0$, we have $\gamma-\beta\hat{x}_d^2\ge 0$. Since the function $f_d(x)$ is monotonically decreasing and $\hat{x}_d$ is its fixed point,
$\gamma-\beta f_d({x}_d)^2
\ge
\gamma-\beta f_d(\hat{x}_d)^2
=
\gamma-\beta\hat{x}_d^2
\ge
0$. Since $x_d$ satisfies \eqref{eq:x_d}, $\gamma-\beta{x}_d^2$ and $\gamma-\beta f_d({x}_d)^2$ must be simultaneously positive or negative, thus it also holds that $\gamma-\beta{x}_d^2\ge 0$. Then 
both $\frac{(\gamma-\beta x^2)}{(\beta{x}+1)({x}+\gamma)}$ and $\frac{f_d(x)}{\left(\gamma-\beta f_d(x)^2\right)}$ are positive and monotonically decreasing in $x\in[\hat{x}_d,x_d]$. Therefore, $\tilde{\alpha}_d(x_d)\le \tilde{\alpha}_d(\hat{x}_d)$.

\noindent Case 2: $\hat{x}_d> x_d$. By the same argument as above, it holds that $\gamma-\beta f_d(\hat{x}_d)^2=\gamma-\beta\hat{x}_d^2<0$, $\gamma-\beta f_d({x}_d)^2<0$, and $\gamma-\beta{x}_d^2<0$. Thus both $\frac{(\gamma-\beta x^2)}{(\beta{x}+1)({x}+\gamma)}$ and $\frac{f_d(x)}{\left(\gamma-\beta f_d(x)^2\right)}$ are negative and monotonically decreasing in $x\in[x_d,\hat{x}_d]$, hence their product is positive and increasing in $x\in[x_d,\hat{x}_d]$.  Therefore, $\tilde{\alpha}_d(x_d)\le \tilde{\alpha}_d(\hat{x}_d)$.
\end{proof}

Lemma \ref{lemma-asymmetric-alpha-bound} is proved by combining Lemma \ref{lemma-asymmetric-less-than-symmetric} and  \ref{lemma-CD-gamma}. This completes the proof of Theorem~\ref{theorem-correlation-decay}.

\subsection*{Strong spatial mixing on regular trees.} 
As a byproduct of our analysis, we prove a strong spatial mixing theorem for regular trees.
When the graphs $G$ itself is a regular tree. All vertices (except the root) has the same arity. And all $d$s (excerpt the one of the root) that appear in the proof are the same and equal that arity. Then the condition that the uniqueness holds on all infinite regular trees of degree up to $\Delta$ can be replaced by the uniqueness on infinite $\Delta$-regular tree. 

\begin{theorem}\label{theorem-regular-decay}
For two-state anti-ferromagnetic spin systems, on any infinite $\Delta$-regular tree the uniqueness implies the strong spatial mixing.
\end{theorem}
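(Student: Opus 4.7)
My plan is to specialize the proof of Theorem \ref{theorem-correlation-decay} to the case that the input graph $G$ is itself an infinite $\Delta$-regular tree. No self-avoiding walk construction is required, so I can apply the recursive bounds $R_v$, $\delta_v$ from Definition \ref{definition-bounds} and the amortized quantity $\epsilon_v=\varphi(R_v+\delta_v)-\varphi(R_v)$ directly to the subtrees of $G$ along a path from the root $v$ toward the differing set $S$. The three assertions of Lemma \ref{lemma-epsilon} carry over verbatim.

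The decisive simplification is that in $G$ every non-root vertex has exactly $d=\Delta-1$ children. Consequently, in the iterated stepwise contraction furnished by Item \ref{lemma-epsilon-contract} of Lemma \ref{lemma-epsilon}, every non-root factor is of the form $\alpha(\Delta-1;\widetilde R_1,\ldots,\widetilde R_{\Delta-1})$; the root contributes at most $\alpha(\Delta;\cdot)\le\Delta$ by the unconditional bound \eqref{eq:alpha-unconditional}, which is absorbed into a multiplicative constant. Thus it suffices to establish a single stepwise bound $\alpha(\Delta-1;x_1,\ldots,x_{\Delta-1})\le\alpha$ for some constant $\alpha<1$, rather than the uniform bound over all arities $d<\Delta$ that was needed for general graphs.

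By Lemma \ref{lemma-asymmetric-less-than-symmetric}, this asymmetric bound is dominated by the symmetric $\alpha_{\Delta-1}(\bar x)$ for some $\bar x\ge 0$. Replaying the argument of Lemma \ref{lemma-CD-gamma} at the single value $d=\Delta-1$ then yields $\alpha_{\Delta-1}(x)\le\tilde\alpha_{\Delta-1}(\hat x_{\Delta-1})=\sqrt{|f_{\Delta-1}'(\hat x_{\Delta-1})|}$, where $\hat x_{\Delta-1}$ is the positive fixed point of $f_{\Delta-1}$. The hypothesis of uniqueness on the infinite $\Delta$-regular tree (which is $\widehat{\mathbb{T}}^{\Delta-1}$ in the paper's convention) is precisely $|f_{\Delta-1}'(\hat x_{\Delta-1})|<1$, delivering $\alpha<1$.

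The remaining base-case ingredients, namely the finiteness of $\epsilon_{u_{t-2}}$ and a positive lower bound on $\Phi(\widetilde R)$ at the root, follow verbatim from Item \ref{lemma-epsilon-bound} of Lemma \ref{lemma-epsilon}, whose hypothesis is satisfied because the maximum degree of $G$ is the finite constant $\Delta$. I do not anticipate any real obstacle: the statement is strictly weaker than Theorem \ref{theorem-correlation-decay} because only the single arity $\Delta-1$ ever appears in the recursion on a $\Delta$-regular tree, so the uniqueness hypothesis is only invoked at that one value of $d$. The only mild care needed is to verify that the monotonicity argument for $\tilde\alpha_d(x)$ in the proof of Lemma \ref{lemma-CD-gamma} is carried out separately for each fixed $d$ and therefore applies at $d=\Delta-1$ alone; inspecting that proof confirms this is the case.
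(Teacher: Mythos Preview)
Your proposal is correct and follows essentially the same approach as the paper: specialize the proof of Theorem~\ref{theorem-correlation-decay} to a $\Delta$-regular tree, observe that every non-root arity is $d=\Delta-1$, and invoke the single-$d$ case of Lemmas~\ref{lemma-asymmetric-less-than-symmetric} and~\ref{lemma-CD-gamma} together with the uniqueness hypothesis $|f_{\Delta-1}'(\hat x_{\Delta-1})|<1$. The paper's own justification is the brief sketch immediately preceding the theorem statement, and your write-up simply fills in the details it leaves implicit.
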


The same result can be obtained by combining the same theorem for the hardcore model~\cite{Weitz06} and anti-ferromagnetic Ising model~\cite{SST} and translating the parameters of general two-state anti-ferromagnetic spin systems to these  models (as discussed in\cite{sly2012computational, SST}). However, unlike the hardcore and the anti-ferromagnetic Ising models, for general two-state anti-ferromagnetic spin systems Theorem \ref{theorem-regular-decay} itself is not sufficient to imply the strong spatial mixing on graphs of maximum degree at most $\Delta$. This is discussed in Section \ref{section-non-monotone}.

\section{Algorithmic implications}
In this section we prove Theorem \ref{thm-dichotomy-algorithms}. That is, if an anti-ferromagnetic $(\beta,\gamma,\lambda)$ is \unique{$\Delta$} then there exists an FPTAS for the partition function $Z(G)$ for any graph $G$ of maximum degree at most $\Delta$, and in particular the universal uniqueness implies an FPTAS for arbitrary graph $G$.

It is well-known that $Z(G)$ can be computed from $p_v^{\sigma_\Lambda}$ by the following standard procedure. Let $v_1,\ldots,v_n$ enumerate the vertices in $G$. For $0\le i\le n$, let $\sigma_i$ be the configuration fixing the first $i$ vertices $v_1,\ldots,v_i$ as follows: $\sigma_i(v_j)=\sigma_{i-1}(v_j)$ for $1\le j\le i-1$ and $\sigma_i(v_i)$ is fixed so that $p_{i}\triangleq\Pr[\sigma_i(v_i)\mid \sigma_{i-1}]\ge 1/3$. In particular, $\sigma_n\in\{0,1\}^V$ is a configuration of $V$. It holds for the Gibbs measure of $\sigma_n$ that $\rho(\sigma_n)=p_1 p_2\cdots p_n$ as well as that $\rho(\sigma_n)=\frac{w(\sigma_n)}{Z(G)}$, thus $Z(G)=\frac{w(\sigma_n)}{p_1p_2\cdots p_n}$, where the weight $w(\sigma_n)=\prod_{(u,v) \in E} A_{\sigma_n(u), \sigma_n(v)}\prod_{v\in V} b_{\sigma_n(v)}$ can be computed precisely for any particular $\sigma_n$ in time polynomial in $n$.
Note that $p_i$ equals to either $p_{v_i}^{\sigma_{i-1}}$ or $1-p_{v_i}^{\sigma_{i-1}}$.
Therefore, if $p_v^{\sigma_\Lambda}$ can be approximated within an additive error $\epsilon$ in time polynomial in $n$ and $\frac{1}{\epsilon}$, then the configurations $\sigma_i$ can be efficiently constructed such that all $p_i$ are bounded away from 0, thus the product $p_1p_2\cdots p_n$ can be approximated within a factor of $(1\pm n\epsilon)$ in time polynomial in $n$ and $\frac{1}{\epsilon}$, which implies an FPTAS  for $Z(G)$.

\paragraph{Bounded degree graphs.}
Let $G$ be a graph whose maximum degree is at most $\Delta$ and $v$ be any vertex. A self-avoiding walk tree $T=T_{\text{SAW}}(G,v)$ can be constructed so that $R_{G,v}^{\sigma_\Lambda}=R_T^{\sigma_\Lambda}$. We can use the recursive procedure described in Section \ref{section-correlation-decay} to compute the upper and lower bounds of $R_T^{\sigma_\Lambda}$, with the setting that for all the vertices more than $t$ steps away from the root $v$, the trivial bounds $0\le R_T^{\sigma_\Lambda}\le\infty$ is used. Then the proof of Theorem \ref{theorem-correlation-decay} shows that the recursive procedure returns $R_0$ and $R_1$ such that $R_0\le R_T^{\sigma_\Lambda}\le R_1$, and $R_1-R_0=O(\alpha^t)$ for some constant $\alpha<1$ assuming that $(\beta,\gamma,\lambda)$ is \unique{$\Delta$}. Note that $R_T^{\sigma_\Lambda}=R_{G,v}^{\sigma_\Lambda}=\frac{p_v^{\sigma_\Lambda}}{1-p_v^{\sigma_\Lambda}}$.
Let $p_0=\frac{R_0}{R_0+1}$ and $p_1=\frac{R_1}{R_1+1}$. Then $p_0\le p_v^{\sigma_\Lambda}\le p_1$ and
\begin{align}
p_1-p_0=\frac{R_1}{R_1+1}-\frac{R_0}{R_0+1}\le R_1-R_0=O(\alpha^t). \label{eq:p-bound}
\end{align}
The recursive procedure runs in time $O(\Delta^t)$ since it only needs to construct the first $t$ levels of the self-avoiding walk tree. If $\Delta$ is bounded, by setting $t=\log_{\alpha}\epsilon$, this gives an algorithm which approximates $p_v^{\sigma_\Lambda}$ within an additive error $O(\epsilon)$ in time polynomial in $n$ and $\frac{1}{\epsilon}$, which implies an FPTAS for $Z(G)$.

\paragraph{Arbitrary graphs.}
Let $G$ be an arbitrary graph and $v$ be any vertex. Let $T=T_{\text{SAW}}(G,v)$.
We use the method of \concept{Computationally Efficient Correlation Decay} introduced in~\cite{LLY} to deal with the unbounded degrees. Intuitively, using this method we observe correlation decay in a refined metric instead of graph distance such that in this new metric a neighborhood of moderate size is sufficient to guarantee desirable correlation decay.

Similarly, we use the recursive procedure described in Section \ref{section-correlation-decay} to compute the upper and lower bounds of $R_T^{\sigma_\Lambda}$, but this time the termination condition relies on a new depth defined as follows.

\begin{definition}
Let $T$ be a rooted tree and $M>1$ be a constant. For any vertex $v$ in $T$, define the \concept{$M$-based depth} of $v$, denoted $\ell_M(v)$, as such: $\ell_M(v)=0$ if $v$ is the root, and $\ell_M(v)=\ell_M(u)+\lceil\log_M(d+1)\rceil$ if $v$ is one of the $d$ children of $u$.
\end{definition}

Let $M>1$ to be fixed. Denote by $B(\ell)$ the set of all vertices with $M$-based depth $<\ell$ along with their children and grandchildren in $T$.  It can be verified by induction that $|B(\ell)|\le n^2M^\ell$. The recursion is applied to estimate the $R_T^{\sigma_\Lambda}$ when the current $v\in B(\ell)$ until $v$ is no longer in $B(\ell)$ in which case the trivial bounds $0\le R_T^{\sigma_\Lambda}\le \infty$ is used.

Let $\epsilon_v$ be defined as in Section \ref{section-correlation-decay}.
Repeatedly applying Item \ref{lemma-epsilon-contract} of Lemma \ref{lemma-epsilon}, without loss of generality, we have a path $u_1u_2\cdots u_{k}$ in $T$ from the root $u_1=v$ to a $u_k$ with $\ell_M(u_k)\ge\ell$ and $\ell_M(u_{k-1})<\ell$, such that $\epsilon_{u_j}\le\alpha(d_j;x_1,\ldots,x_{d_j})\cdot\epsilon_{u_{j+1}}$ for $j=1,2,\ldots,k$, where $d_j$ is the number of children of $u_j$ and $x_i\in[0,\infty)$, $1\le i\le d_j$. The key to overcome the explosion caused by unbounded degrees is to observe that the contraction $\alpha(d;x_1,\ldots,x_{d})$ decreases dramatically as the degree $d$ grows.

\begin{lemma}\label{lemma-alpha-unbounded}
Let  $(\beta,\gamma,\lambda)$ be anti-ferromagnetic.
If $(\beta,\gamma,\lambda)$ satisfies the universal uniqueness, then there exist constants $\alpha<1$ and $M>1$ such that for any integer $d\ge 1$,  and any $x_i\in[0,\infty)$, $1\le i\le d$, it holds that $\alpha(d;x_1,...,x_d)\le\alpha^{\lceil\log_M (d+1)\rceil}$.
\end{lemma}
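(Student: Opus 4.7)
\noindent\textbf{Proof proposal for Lemma \ref{lemma-alpha-unbounded}.}
The plan is to combine two already-available bounds on $\alpha(d;x_1,\ldots,x_d)$: the uniform (degree-oblivious) contraction coming from universal uniqueness via Lemma~\ref{lemma-asymmetric-alpha-bound}, and the degree-sensitive unconditional bound in \eqref{eq:alpha-unbounded}. The first bound will cover small $d$ (where $\lceil\log_M(d+1)\rceil=1$), while the second will cover large $d$ (where $\lceil\log_M(d+1)\rceil\ge 2$) because, crucially, universal uniqueness forces $\gamma>1$ and hence makes the second bound super-exponentially small in $d$.

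More precisely, I would first invoke Lemma~\ref{lemma-absolute-uniqueness} together with Lemma~\ref{lemma-threshold}.\ref{lemma-threshold-infty} to conclude that universal uniqueness implies $\gamma>1$. Then, applying Lemma~\ref{lemma-asymmetric-alpha-bound} with $\Delta=\infty$, I obtain an absolute constant $\alpha_0<1$, depending only on $(\beta,\gamma,\lambda)$, such that $\alpha(d;x_1,\ldots,x_d)\le \alpha_0$ for every $d\ge 1$ and every $x_1,\ldots,x_d\ge 0$. From \eqref{eq:alpha-unbounded} I also record the bound $\alpha(d;x_1,\ldots,x_d)\le d\sqrt{\lambda/\gamma^{d+1}}$, and I observe (by a straightforward calculus argument) that the function $h(d)=d\sqrt{\lambda/\gamma^{d+1}}$ is strictly decreasing in $d$ once $d\ge 2/\log\gamma$, since $\gamma>1$.

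I then set $\alpha:=\alpha_0$ and choose $M>1$ to be a sufficiently large integer, depending only on $(\beta,\gamma,\lambda)$, so that the following two requirements hold: (i) $M>2/\log\gamma$, so that $h$ is decreasing on $[M,\infty)$; and (ii) for every integer $k\ge 2$, $h(\lceil M^{k-1}\rceil)\le \alpha^{k}$. Requirement~(ii) amounts, after taking logarithms and rearranging, to the inequality $\tfrac{1}{2}M^{k-1}\log\gamma \ge (k-1)\log M + k|\log\alpha| + \tfrac{1}{2}|\log\lambda| + O(1)$, whose right-hand side grows linearly in $k$ while the left-hand side grows doubly exponentially in $k$ (since $\gamma>1$ and $M\ge 2$); therefore (ii) is satisfied for every $k\ge 2$ once the $k=2$ case is verified, and the $k=2$ case reduces to making $M\log\gamma$ dominate a constant depending only on $\alpha_0,\lambda,\gamma$, which is possible since $\log\gamma>0$.

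Finally, to conclude, I would split on the value of $k=\lceil\log_M(d+1)\rceil$. If $k=1$, i.e.\ $d+1\le M$, the uniform bound gives $\alpha(d;x_1,\ldots,x_d)\le \alpha_0=\alpha=\alpha^k$. If $k\ge 2$, then $d+1>M^{k-1}$, so in particular $d\ge \lceil M^{k-1}\rceil > 2/\log\gamma$, and the monotonicity of $h$ together with requirement~(ii) give $\alpha(d;x_1,\ldots,x_d)\le h(d)\le h(\lceil M^{k-1}\rceil)\le \alpha^k$, as desired. The only nontrivial step is the choice of $M$ in requirement~(ii); this is the main obstacle, but it is a purely quantitative calibration made possible by the super-exponential decay of $h(d)$, which is exactly where the assumption $\gamma>1$ (equivalently, universal uniqueness) is essential.
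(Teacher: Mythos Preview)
Your proposal is correct and follows essentially the same approach as the paper: split into small $d$ (handled by the uniform contraction from Lemma~\ref{lemma-asymmetric-alpha-bound}) and large $d$ (handled by the unconditional bound \eqref{eq:alpha-unbounded}, which decays since universal uniqueness forces $\gamma>1$ via Lemma~\ref{lemma-threshold}.\ref{lemma-threshold-infty}). One minor slip: the left-hand side $\tfrac{1}{2}M^{k-1}\log\gamma$ grows exponentially in $k$, not doubly exponentially, but this is still ample to dominate the linear right-hand side and does not affect the argument.
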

\begin{proof}
Assume the universal uniqueness of $(\beta,\gamma,\lambda)$. Due to Lemma \ref{lemma-asymmetric-alpha-bound}, there exists a constant $\alpha<1$ such that $\alpha(d;x_1,\ldots,x_{d})\le\alpha$. By Item \ref{lemma-threshold-infty} of Lemma \ref{lemma-threshold}, the universal uniqueness implies that $\gamma>1$, thus there exists a constant $M>1$ such that $d\cdot\sqrt{\frac{\lambda}{\gamma^{d+1}}}\le\alpha^{\lceil\log_M (d+1)\rceil}$ for all $d\ge M$. Due to \eqref{eq:alpha-unbounded}, it holds that  $\alpha(d;x_1,\ldots,x_{d})\le d\cdot \sqrt{\frac{\lambda}{\gamma^{d+1}}}\le\alpha^{\lceil\log_M (d+1)\rceil}$ for all $d\ge M$. Note that $\alpha(d;x_1,\ldots,x_{d})\le\alpha$ means that $\alpha(d;x_1,...,x_d)\le\alpha^{\lceil\log_M (d+1)\rceil}$ for $d<M$. Therefore, $\alpha(d;x_1,...,x_d)\le\alpha^{\lceil\log_M (d+1)\rceil}$ for all $d$.
\end{proof}

By Lemma \ref{lemma-alpha-unbounded}, there exist constants $\alpha<1$ and $M>1$ such that
\begin{align*}
\epsilon_v
&\le
\epsilon_{u_k}\cdot\prod_{j=1}^k\alpha^{\lceil\log_M (d_j+1)\rceil}\le \epsilon_{u_k}\cdot\alpha^{\sum_{j=1}^k\lceil\log_M (d_j+1)\rceil}=\epsilon_{u_k}\cdot\alpha^{\ell_M(u_k)}\le\epsilon_{u_k}\cdot\alpha^\ell.
\end{align*}
With the notation used in Section \ref{section-correlation-decay}, $S$ is the complement of $B(\ell)$.
Note that all $u_{k}$'s children are in $B(\ell)$ thus none of them are in $S$, and by Item \ref{lemma-threshold-infty} of Lemma \ref{lemma-threshold} the universal uniqueness implies that $\gamma>1$. Thus by Item \ref{lemma-epsilon-bound} of Lemma \ref{lemma-epsilon} it holds that $\epsilon_{u_{k}}=O(1)$. Therefore, $\epsilon_v\le\epsilon_{u_k}\cdot\alpha^\ell=O(\alpha^\ell)$.

Let $\delta_v=R_1-R_0$, where $R_0$ and $R_1$ are the bounds returned by the recursive procedure such that  $R_0\le R_T^{\sigma_\Lambda}\le R_1$.
By the same analysis as in Section \ref{section-correlation-decay}, $\delta_v=\frac{\epsilon_v}{\Phi(\widetilde{R})}=O(\epsilon_v)=O(\alpha^\ell)$.
Then by~\eqref{eq:p-bound}, the marginal probability $p_v^{\sigma_\Lambda}$ is approximated within an additive error of $O(\alpha^\ell)$. The running time of the recursion is $O(nB(\ell))=O(n^3M^\ell)$. By setting $\ell=\log_{\alpha}\epsilon$, we have an algorithm which approximates $p_v^{\sigma_\Lambda}$ within an additive error of $O(\epsilon)$ in time polynomial in $n$ and $\frac{1}{\epsilon}$, which implies an FPTAS for $Z(G)$ for arbitrary graph $G$.


\paragraph{Heterogeneous spin systems.} Our analysis in last and this sections actually holds for heterogeneous spin systems which allow that each vertex $v$ has a distinct constant external field $\lambda_v>0$. 

\begin{theorem}
For a two-state anti-ferromagnetic heterogeneous spin system with parameters $\beta$, $\gamma$, and external field $\lambda_v$ at each vertex $v$, for any finite $\Delta\ge 2$ or $\Delta=\infty$, if for all $v$ the $(\beta,\gamma,\lambda_v)$ is \unique{$\Delta$} then the spin system is of strong spatial mixing and has FPTAS on graphs of maximum degree at most $\Delta$.
\end{theorem}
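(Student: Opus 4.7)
The plan is to transcribe the proof of Theorem~\ref{theorem-correlation-decay} and the FPTAS construction of Section~4 almost verbatim, with the sole change that the scalar $\lambda$ is replaced by vertex-specific $\lambda_v$ throughout. A key convenience is that the potential function $\Phi(R)=1/\sqrt{R(\beta R+1)(R+\gamma)}$ contains no $\lambda$, so the same choice of potential remains valid at every vertex regardless of its external field.

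First I would check that the SAW tree construction (Theorem~\ref{theorem-T-saw}) extends to heterogeneous spin systems: the reduction is local at each vertex, and one simply carries along the external field of each original vertex to its copies in the SAW tree. The tree recursion at a vertex $v$ becomes
\[
R^{\sigma_\Lambda}_T = \lambda_v \prod_{i=1}^d \frac{\beta R_{T_i}^{\sigma_\Lambda}+1}{R_{T_i}^{\sigma_\Lambda}+\gamma},
\]
and the definitions of $R_v$, $\delta_v$, and $\epsilon_v$ carry over unchanged. Lemma~\ref{lemma-epsilon} goes through with $\lambda$ replaced by $\lambda_v$ in the step-wise contraction, yielding a vertex-indexed contraction function whose $\lambda$-parameter is the local $\lambda_v$. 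Crucially, the Cauchy--Schwarz and AM--GM manipulations in Lemma~\ref{lemma-asymmetric-less-than-symmetric} are $\lambda$-free, so they reduce the vertex-wise contraction to the symmetric form $\alpha_d(\bar{x})$ computed with the same $\lambda_v$.

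Applying Lemma~\ref{lemma-CD-gamma} with the local $\lambda_v$ then bounds the symmetric contraction by $\sqrt{|f_d'(\hat{x}_d)|}$, where the recursion $f_d$ and its fixed point $\hat{x}_d$ are those of $(\beta,\gamma,\lambda_v)$. Since $(\beta,\gamma,\lambda_v)$ is \unique{$\Delta$} by hypothesis, Lemma~\ref{lemma-absolute-uniqueness} produces a constant $c_v<1$ with $|f_d'(\hat{x}_d)|\le c_v$ for all $1\le d<\Delta$. The remainder of the argument (induction on tree depth, the absolute bound via Item~\ref{lemma-epsilon-bound} of Lemma~\ref{lemma-epsilon}, and the conversion from $\epsilon_v$ back to $\delta_v$) is entirely insensitive to whether $\lambda$ is homogeneous.

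The main obstacle is obtaining a \emph{uniform} contraction constant $\alpha<1$ that does not degrade across the possibly many distinct $\lambda_v$ values appearing along a path in the SAW tree. For a single input instance only finitely many values $\lambda_{v_1},\ldots,\lambda_{v_n}$ occur, so taking $\alpha=\max_i \sqrt{c_{v_i}}<1$ suffices for correlation decay on that instance, and this constant is allowed to enter the FPTAS running time (polynomial in $n$ and $1/\epsilon$). For the unbounded-degree case one additionally needs a single $M>1$ valid for all vertices in Lemma~\ref{lemma-alpha-unbounded}; this follows because $\gamma>1$ is forced at every vertex by universal uniqueness (Lemma~\ref{lemma-threshold}.\ref{lemma-threshold-infty}), so $d\sqrt{\lambda_v/\gamma^{d+1}}$ decays and one may take $M$ from the maximum $\lambda_v$ in the instance. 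A standard compactness argument over any fixed admissible range of external fields delivers a single universal $\alpha$ and $M$ when one wants a uniform guarantee across a whole family of instances.
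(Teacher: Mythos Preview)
Your proposal is correct and is exactly the approach the paper intends: the paper's own ``proof'' of this theorem is the single sentence that the preceding analysis (Sections~\ref{section-correlation-decay} and the FPTAS construction) ``actually holds for heterogeneous spin systems which allow that each vertex $v$ has a distinct constant external field $\lambda_v>0$,'' and you have spelled out in detail why each step survives the replacement of $\lambda$ by $\lambda_v$. Your observations that the potential $\Phi$ is $\lambda$-free and that the symmetrization inequalities in Lemma~\ref{lemma-asymmetric-less-than-symmetric} act only on the $z_i$-sum (with $\lambda$ passing through untouched) are precisely the reasons the argument localizes, and your handling of the uniform contraction constant by taking the maximum over the finitely many $\lambda_v$ in a given instance is the natural and correct way to close the argument.
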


\section{Non-monotonicity of general two-state spin system}\label{section-non-monotone}
In this section, we prove the following theorem.
\begin{theorem}
There exist two-state anti-ferromagnetic spin systems which exhibit strong spatial mixing on infinite $d$-regular tree but does not exhibit strong spatial mixing on all graphs of maximum degree at most $d$.
\end{theorem}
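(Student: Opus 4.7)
My plan is to exhibit explicit anti-ferromagnetic parameters $(\beta,\gamma,\lambda)$ together with two degrees $d'<d$ such that the uniqueness condition on the infinite $d$-regular tree $\widehat{\mathbb{T}}^{d}$ holds, but fails on the infinite $d'$-regular tree $\widehat{\mathbb{T}}^{d'}$. Strong spatial mixing on $\widehat{\mathbb{T}}^{d}$ then follows from Theorem~\ref{theorem-regular-decay}. Conversely, non-uniqueness on $\widehat{\mathbb{T}}^{d'}$ means there are two distinct extremal Gibbs measures, so boundary conditions at depth $\ell$ (say, all-blue versus all-green) produce root marginals that stay bounded apart as $\ell\to\infty$; this is a direct violation of weak, and therefore strong, spatial mixing on $\widehat{\mathbb{T}}^{d'}$. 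Since $\widehat{\mathbb{T}}^{d'}$ has maximum degree $d'\le d$, it lies in the family of graphs of maximum degree at most $d$, so strong spatial mixing fails on this family.

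To find such parameters, I would work in the hard-constraint slice $\beta=0$ with $\gamma>1$, where the uniqueness threshold on $\widehat{\mathbb{T}}^{d}$ admits a clean closed form. Setting $|f_d'(\hat x_d)|=1$ with $\beta=0$ reduces to $d\hat x_d/(\hat x_d+\gamma)=1$, giving $\hat x_d=\gamma/(d-1)$, and substituting into $\hat x_d=\lambda/(\hat x_d+\gamma)^d$ yields
\[
\lambda_d \;=\; \frac{\gamma^{d+1}\,d^{d}}{(d-1)^{d+1}}.
\]
Since $\hat x_d$ is strictly increasing in $\lambda$ and $d\hat x/(\hat x+\gamma)$ is strictly increasing in $\hat x$, the uniqueness condition on $\widehat{\mathbb{T}}^{d}$ is equivalent to $\lambda<\lambda_d$. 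For $\gamma=2$ a direct computation gives $\lambda_2=32$, $\lambda_3=27$, and $\lambda_4=8192/243\approx 33.71$; moreover the ratio $\lambda_{d+1}/\lambda_d=\gamma\,(1-1/d^2)^{d+1}$ is at least $1$ for every $d\ge 3$ when $\gamma=2$, so $\lambda_d\ge\lambda_4>33$ for every $d\ge 4$. Hence the minimum of $\{\lambda_d\}_{d\ge 2}$ is attained uniquely at $d=3$.

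Taking $(\beta,\gamma,\lambda)=(0,2,30)$, which is anti-ferromagnetic since $\beta\gamma=0<1$, I obtain $\lambda_3=27<30$ together with $30<\lambda_d$ for every $d\in\{2,4,5,6,\dots\}$. So the system has uniqueness on $\widehat{\mathbb{T}}^{d}$ for every $d\ne 3$ and fails uniqueness on $\widehat{\mathbb{T}}^{3}$. By Theorem~\ref{theorem-regular-decay} it then exhibits strong spatial mixing on $\widehat{\mathbb{T}}^{4}$, while the first paragraph shows that it fails to exhibit strong spatial mixing on $\widehat{\mathbb{T}}^{3}$, a graph of maximum degree $3\le 4$. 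This witnesses the theorem with $d=4$.

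There is no substantial obstacle; the content of the argument is purely the non-monotonicity of $\lambda_d$ in $d$ when $\gamma>1$, a phenomenon already flagged in the paper's hard-constraint discussion. The only items that require care are the closed-form derivation of $\lambda_d$ at $\beta=0$ and the elementary estimate $\lambda_d>30$ for every $d\ge 5$; both are routine.
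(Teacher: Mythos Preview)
Your approach is essentially the paper's: exploit the non-monotonicity of the uniqueness condition in the degree when $\gamma>1$, invoke Theorem~\ref{theorem-regular-decay} to obtain strong spatial mixing on the larger-degree regular tree, and use non-uniqueness on a smaller-degree regular tree to preclude even weak spatial mixing there. The paper argues abstractly via Lemma~\ref{lemma-threshold} (Items~\ref{lemma-threshold-non-monotone} and~\ref{lemma-threshold-infty-exist}, with $\beta>0$, $\gamma>1$, $\lambda>\lambda_c(\beta,\gamma)$), whereas you instantiate the same phenomenon with the explicit numerical witness $(\beta,\gamma,\lambda)=(0,2,30)$ using the closed-form threshold in the $\beta=0$ slice; this is a pleasant concretization but not a different idea.

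One indexing slip to fix: in the paper's convention $\widehat{\mathbb{T}}^d$ denotes the $(d{+}1)$-regular tree (the superscript is the arity appearing in $f_d$), so your $\widehat{\mathbb{T}}^3$ has maximum degree $4$ and $\widehat{\mathbb{T}}^4$ has maximum degree $5$; the theorem is therefore witnessed with $d=5$, not $d=4$. The mathematics is unaffected.
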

In a seminal work~\cite{Weitz06}, Weitz proved that for the hardcore model the strong spatial mixing on an infinite $d$-regular tree implies the strong spatial mixing on graphs of maximum degree at most $d$ (Theorem 2.3 in~\cite{Weitz06}). He further remarked that the same implication holds for all two-state spin systems.  That is,
\begin{quote}
for any two-state spin system, strong spatial mixing on an infinite $d$-regular tree implies the strong spatial mixing on graphs of maximum degree at most $d$. 
\end{quote}
This claim played important roles in current understanding of correlation decay in two-state spin systems as well as devising FPTAS for such systems.
An algorithmic form of this claim was cited in~\cite{SST} as a theorem for all two-state spin systems (Theorem 2.8 in~\cite{SST}) and was proved for the anti-ferromagnetic Ising model. It was raised as a conjecture in~\cite{sly2008uniqueness} whether this claim holds for multi-state spin systems.

Here we clarify that this claim holds only for the two-state spin systems under limited settings but does not hold for all general two-state spin systems.  This disproves the conjecture in~\cite{sly2008uniqueness} and shows that the common belief that the $d$-regular tree represents the worst case for strong spatial mixing among all graphs of maximum degree at most $d$ cannot be generalized to general two-state spin systems.
We first describe a region that the claim is true.

\begin{lemma}
For $0\le\beta,\gamma\le1$, the strong spatial mixing on infinite $d$-regular tree implies the strong spatial mixing on trees of maximum degree at most $d$.
\end{lemma}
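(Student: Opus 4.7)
The plan is to reduce strong spatial mixing on a tree $T$ of maximum degree at most $d$ to strong spatial mixing on the infinite $d$-regular tree via a \emph{padding} construction. The key observation is that when $0\le\beta,\gamma\le 1$, there exists a nonnegative real $R^{*}=\frac{1-\gamma}{1-\beta}$ (interpreted appropriately at the boundary cases $\beta=1$ or $\gamma=1$) that satisfies $\frac{\beta R^{*}+1}{R^{*}+\gamma}=1$. A phantom child with ratio $R^{*}$ contributes the multiplicative factor $1$ to the parent in the recursion \eqref{eq:recursion}, so attaching such phantom children does not alter the ratio $R_{T}^{\sigma_\Lambda}$ at the root. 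This is precisely the feature that fails outside the monotone regime.

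First I would handle the boundary cases. If $\gamma=1$ (forcing $\beta<1$ by anti-ferromagnetism) then $R^{*}=0$, realized by a single phantom leaf fixed to green; this recovers Weitz's hardcore padding. If $\beta=1$ (forcing $\gamma<1$) then $R^{*}=\infty$, realized by a single phantom leaf fixed to blue. For the generic case $\beta,\gamma\in(0,1)$ we have $R^{*}\in(0,\infty)$, and I would build a small phantom subtree whose root attains the ratio $R^{*}$ under a fixed-color leaf boundary; iterating the child-to-parent map $R\mapsto\lambda\frac{\beta R+1}{R+\gamma}$ starting from $R=0$ or $R=\infty$ yields a set of reachable ratios from which $R^{*}$ can be obtained (exactly or to arbitrary precision).

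Next, given any tree $T$ of maximum degree at most $d$ and boundary configuration $\sigma_\Lambda$, I would pad each internal vertex of degree $d'<d$ by $d-d'$ copies of the phantom subtree above, producing a $d$-regular tree $T^{*}$. By the defining property of $R^{*}$, each phantom child contributes factor $1$ to the recursion, so the ratio at every vertex of $T$ is equal to the ratio at the corresponding vertex of $T^{*}$. The vertices of the original boundary $\sigma_\Lambda$ sit inside $T^{*}$ at the same graph distance from the root as in $T$. Applying the SSM hypothesis on the $d$-regular tree $T^{*}$, with the phantom boundary held fixed, to changes supported on $\sigma_\Lambda$ then yields the desired exponential decay on $T$.

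The main obstacle I anticipate is realizing $R^{*}$ exactly by a finite phantom subtree in the generic irrational case, since the set of ratios reachable from finite fixed boundaries is countable. I expect this to be resolved by a continuity/limit argument: approximate $R^{*}$ by a sequence of reachable ratios, apply the SSM on $T^{*}$ uniformly, and pass to the limit, with the approximation error absorbed by the SSM itself. As a cleaner alternative avoiding padding, one can prove directly that in the regime $\beta,\gamma\le 1$ the quantity $|f_{d'}'(\hat{x}_{d'})|$ is monotonically nondecreasing in $d'$, so that uniqueness on the $d$-regular tree propagates downward to all $d'\le d$, and then invoke Theorem~\ref{theorem-regular-decay} together with the asymmetric contraction analysis of Lemma~\ref{lemma-asymmetric-alpha-bound} to conclude SSM on arbitrary trees of maximum degree at most $d$.
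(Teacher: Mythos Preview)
Your core idea---pad each deficient vertex with phantom children carrying the neutral ratio $R^{*}=\frac{1-\gamma}{1-\beta}$ so that the recursion is unchanged---is exactly the paper's approach. The difference is in how $R^{*}$ is realized. You try to build a finite phantom subtree with deterministic leaf colors whose root ratio equals $R^{*}$, run into the (genuine) obstacle that $R^{*}$ need not be exactly reachable, and then propose either a limiting argument or a detour through monotonicity of $|f_{d'}'(\hat x_{d'})|$ plus Lemma~\ref{lemma-asymmetric-alpha-bound}.

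The paper sidesteps this entirely: it attaches a single dummy \emph{leaf} with a fixed \emph{distributional} spin $(p_0,p_1)$ solving $p_0+p_1=1$ and $\beta p_0+p_1=p_0+\gamma p_1$, which has a solution in $[0,1]$ precisely when $0\le\beta,\gamma\le1$; this leaf has ratio exactly $R^{*}$. Since the marginal $p_v^{\sigma_\Lambda}$ is affine in the boundary at each fixed vertex, SSM for deterministic boundaries immediately implies SSM when some fixed vertices (outside $S$) carry distributions, so the hypothesis on the $d$-regular tree applies directly. This one-line device makes the realizability issue and the approximation argument unnecessary. Your alternative route via monotonicity would also succeed, but it is much heavier machinery than the lemma warrants.
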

\begin{proof}
Given a rooted tree of maximum degree at most $d$, for each vertex of $k$ children with $k<d-1$, we can attach $d-1-k$ dummy children with fixed (distributions of) spin states. This is the method used in~\cite{Weitz06} and~\cite{SST}: for the hardcore model the dummy children are fixed to be unoccupied and for the anti-ferromagnetic Ising model the dummy children are of uniform distributions over spin states. In both cases, the dummy children have no effect on their parent. In general, we fix the distribution to be $(p_0,p_1)$ at each dummy child satisfying 
\begin{align*}
p_0+p_1 &=1,\\
\beta p_0+p_1 &=p_0+\gamma p_1.
\end{align*}
When $0\le \beta,\gamma\le 1$, this system has solutions in $0\le p_0,p_1\le 1$.
With the ratio $R_i$ at the $i$-th child, $1\le i\le k$, and $R_i=p_0/p_1$ for the dummy children $k<i\le d-1$, the ratio at the parent is given by the recursion
\begin{align*}
\lambda\prod_{i=1}^{d-1}\frac{\beta R_i+1}{R_i+\gamma}
=
\lambda\prod_{i=1}^{k}\frac{\beta R_i+1}{R_i+\gamma},
\end{align*} 
which is identical to the original quantity.
\end{proof}

Due to the self-avoiding walk tree construction (Theorem \ref{theorem-T-saw}), it holds that for $0\le\beta,\gamma\le1$,  strong spatial mixing on infinite $d$-regular tree implies the strong spatial mixing and the FPTAS on graphs of maximum degree at most $d$.

For $0\le\beta,\gamma\le 1$, the spin system shows the following monotone property:
the uniqueness on infinite $d$-regular tree implies the uniqueness on all infinite regular trees of smaller degree. This can be verified by the following reasoning: due to Theorem \ref{theorem-regular-decay}, on infinite $d$-regular tree the uniqueness implies the strong spatial mixing, which for $0\le\beta,\gamma\le1$, implies the strong spatial mixing (including the uniqueness) on all infinite regular trees of smaller degree. 

There exist two-state anti-ferromagnetic spin systems which are non-monotone. 
We can choose anti-ferromagnetic $(\beta,\gamma,\lambda)$ satisfying that $\gamma>1$ and $\lambda>\lambda_c^{\mathsf{upper}}(\beta, \gamma)$, where $\lambda_c^{\mathsf{upper}}(\beta, \gamma)$ is the upper threshold for universal uniqueness given in Item~\ref{lemma-threshold-infty-exist} of Lemma~\ref{lemma-threshold}. Due to Item~\ref{lemma-threshold-non-monotone} of Lemma~\ref{lemma-threshold}, for $\gamma>1$ the uniqueness holds on $d$-regular trees for all sufficiently large $d$. On the other hand, due to Item~\ref{lemma-threshold-infty-exist} of Lemma~\ref{lemma-threshold}, $(\beta,\gamma,\lambda)$ cannot be universally unique when $\lambda>\lambda_c^{\mathsf{upper}}(\beta, \gamma)$, which means that there exists a finite $d'$ such that the system is non-unique on $d'$-regular tree.

For such non-monotone systems, due to Theorem \ref{theorem-regular-decay}, the uniqueness implies the strong spatial mixing on $d$-regular tree for sufficiently large $d$, but the strong spatial mixing does not hold on a regular tree of smaller degree (because of the non-uniqueness on the smaller tree). Therefore the implication between the strong spatial mixing on $d$-regular tree and on graphs of maximum degree at most $d$ does not hold for general two-state spin systems.



\paragraph{Acknowledgment.}  We would like to thank Alistair Sinclair and  Piyush Srivastava for the helpful comments to an early version of this paper.

\paragraph{Acknowledgment for the 2021 revision.}  We would like to thank Xiaoyu Chen, Weiming Feng and Xinyuan Zhang for locating an error in Lemma~\ref{lemma-threshold} in the original (2013) version of the paper and suggesting its fix.

\ifarxiv{

}
{
\bibliographystyle{abbrv}
\bibliography{paper}
}

\appendix

\section{The Uniqueness Thresholds}\label{app-threshold}
The following lemma translates the uniqueness condition into its various threshold forms. The lemma (except for item \ref{lemma-threshold-infty}) is not used in the proofs of the main results but is used in the interpretation of the main results and comparisons to the previous results which are mostly stated in threshold forms. 

\begin{lemma}\label{lemma-threshold}
Let  $(\beta,\gamma,\lambda)$ be anti-ferromagnetic.
\begin{enumerate}
\item\label{lemma-threshold-2-unique}
$(\beta,\gamma,\lambda)$ is \unique{2}.

\item \label{lemma-threshold-infty}
If  $\gamma\le 1$, then the uniqueness does not hold on infinite $d$-regular tree for all sufficiently large~$d$.

\item \label{lemma-threshold-non-monotone}
If $\gamma>1$, then the uniqueness holds on infinite $d$-regular tree for all sufficiently large~$d$.

\item \label{lemma-threshold-gamma}
For any $\Delta$ (including $\Delta=\infty$), there exists a critical threshold $\gamma_c=\gamma_c(\beta,\lambda,\Delta)$ such that  $(\beta, \gamma, \lambda)$ is \unique{$\Delta$} if and only if $\gamma\in (\gamma_c,\frac{1}{\beta})$. In particular, $\gamma_c(\beta,\lambda,\infty)>1$ and $\gamma_c(\beta,\lambda,\infty)=\gamma_c(\beta,\lambda,\Delta)$ for some finite $\Delta$.

\item \label{lemma-threshold-hardcore}
If $\beta = 0$, for any $\Delta$ (including $\Delta=\infty$), there exists a critical threshold $\lambda_c=\lambda_c(\gamma,\Delta)=\min_{1<d<\Delta} \frac{\gamma^{d+1}d^d}{(d-1)^{d+1}}$ such that $(\beta, \gamma, \lambda)$ is \unique{$\Delta$} if and only if $\lambda\in(0,\lambda_c)$. 

\item \label{lemma-threshold-free-external}
If  $\sqrt{\beta \gamma} > \frac{\Delta-2}{\Delta}$, then for any external field $\lambda$, $(\beta, \gamma, \lambda)$ is \unique{$\Delta$}.

\item  \label{lemma-threshold-external}
If $\beta > 0$, for any $\Delta$ (including $\Delta=\infty$) that $\sqrt{\beta \gamma} \le \frac{\Delta-2}{\Delta}$, the regime for the $\lambda$ satisfying up-to-$\Delta$ uniqueness is  as follows.

Let $\overline{\Delta}\triangleq\frac{1+\sqrt{\beta\gamma}}{1-\sqrt{\beta\gamma}}$.
For any integer $\overline{\Delta}\le d<\Delta$, let $x_1(d)\le x_2(d)$ be the two  positive roots of equation $\frac{d(1 - \beta\gamma)x}{(\beta x + 1)(x + \gamma)} = 1$, more specifically,
\begin{eqnarray*}
    x_1(d)= \frac{-1-\beta \gamma +d (1-\beta \gamma) - \sqrt{(-1-\beta \gamma +d (1-\beta \gamma))^2-4 \beta \gamma} }{2 \beta}, \\
    x_2(d)= \frac{-1-\beta \gamma +d (1-\beta \gamma) + \sqrt{(-1-\beta \gamma +d (1-\beta \gamma))^2-4 \beta \gamma} }{2 \beta}.
\end{eqnarray*}
Let $\lambda_i(d)\triangleq x_i(d) \left(\frac{x_i(d) +\gamma}{\beta x_i(d) +1}\right)^d$, where $i = 1, 2$, be defined for integers $\overline{\Delta}\le d<\Delta$.

Then $(\beta, \gamma, \lambda)$ is \unique{$\Delta$} if and only if $\lambda$ belongs to the following regime
\begin{align}
\bigcap_{\overline{\Delta}\le d<\Delta}\left((0,\lambda_1(d))\cup(\lambda_2(d),\infty)\right).\label{eq:uniqueness-regime-external-field}
\end{align}

And if furthermore $\gamma\le 1$, then $(\beta, \gamma, \lambda)$ is \unique{$\Delta$} if and only if $\lambda \in(0,\lambda_c)\cup(\bar{\lambda}_c,\infty)$, where 
\begin{align*}
\lambda_c
&=\lambda_c(\beta, \gamma, \Delta) \triangleq \min_{\overline{\Delta} \leqslant d < \Delta}\lambda_1(d),\\
\bar{\lambda}_c
&= \bar{\lambda}_c(\beta,\gamma,\Delta) \triangleq\max_{\overline{\Delta} \leqslant d < \Delta}\lambda_2(d)=\lambda_2(\Delta-1).
\end{align*}
In particular, when $\beta=\gamma$, it holds that $\lambda_c\cdot\bar{\lambda}_c=1$, and thus $(\beta, \beta, \lambda)$ is \unique{$\Delta$} if and only if $|\log \lambda|>\log\bar{\lambda}_c$.

\item \label{lemma-threshold-infty-exist}
$(\beta,\gamma,\lambda)$ can be universally unique only when $\gamma>1$. If $\gamma >1$,
there exist finite positive constants $\lambda_c^{\mathsf{lower}}=\lambda_c^{\mathsf{lower}}(\beta, \gamma)$ and $\lambda_c^{\mathsf{upper}}=\lambda_c^{\mathsf{upper}}(\beta, \gamma)$ where $\lambda_c^{\mathsf{lower}}\le \lambda_c^{\mathsf{upper}}$, such that $(\beta,\gamma,\lambda)$ is universally unique if $\lambda<\lambda_c^{\mathsf{lower}}$ and $(\beta,\gamma,\lambda)$ is universally unique only if $\lambda<\lambda_c^{\mathsf{upper}}$.

\end{enumerate}
\end{lemma}

\begin{remark*}[\textbf{An error in the SODA'13 version}]
In the earlier version of the paper, there was an error in Item~\ref{lemma-threshold-external} of Lemma~\ref{lemma-threshold} (which was Lemma~3.1 in the SODA'13 conference version~\cite{LLY13}).
In that version, the uniqueness regime~\eqref{eq:uniqueness-regime-external-field} was unconditionally and wrongly simplified to the bi-interval simple form $(0,\lambda_c)\cup(\bar{\lambda}_c,\infty)$, which actually holds more restrictively when both $\beta,\gamma\le 1$. 
However, when $\gamma>1$, there exist such $\beta\gamma<1$ that the uniqueness regime~\eqref{eq:uniqueness-regime-external-field} may become more complicated than consisting of at most two intervals.
For example, when $\beta=\frac{1}{122}$, $\gamma=100$, and $\Delta=23$, the regime for the $\lambda$ such that $(\beta,\gamma,\lambda)$ is up-to-$\Delta$ unique, consists of three intervals $(0,102.664)\cup(104.339,106.967)\cup(109.444,\infty)$.

The current version corrects the error and changes the statements of Item~\ref{lemma-threshold-external} and Item~\ref{lemma-threshold-infty-exist} of Lemma~\ref{lemma-threshold}.
We further remark that the error does not affect the main result of the paper, namely the implication from the up-to-$\Delta$ uniqueness to strong spatial mixing and FPTAS, because Lemma~\ref{lemma-threshold} is not used in the proofs of the main results (except for Item~\ref{lemma-threshold-infty}), but is only used for the interpretation of the uniqueness regime in various threshold forms and  comparison with other known results.
\end{remark*}

\begin{proof}[Proof of Lemma~\ref{lemma-threshold}]
Let $f_d(x)=\lambda\left(\frac{\beta x_d+1}{x_d+\gamma}\right)^d$ and $\hat{x}_d=f_d(\hat{x}_d)$ be the positive fixed point of $f_d(x)$.
Then
\begin{align*}
|f_d(\hat{x}_d)|=\frac{d(1-\beta\gamma)\hat{x}_d}{(\beta\hat{x}_d+1)(\hat{x}_d+\gamma)}.
\end{align*}
Let $(\beta,\gamma,\lambda)$ be anti-ferromagnetic. That is, $0\le \beta\le\gamma $, $\gamma>0$, and $\beta\gamma < 1$, thus $\beta < 1$. 
\begin{enumerate}
\item
It is easy to verify that for $\beta\gamma < 1$, $(\beta x + 1)(x + \gamma) - (1 - \beta\gamma) x>0$ for any positive $x>0$. Therefore, when $d=1$, we have $|f_1(\hat{x}_1)|=\frac{(1-\beta\gamma)\hat{x}_1}{(\beta\hat{x}_1+1)(\hat{x}_1+\gamma)}<1$, which means that $(\beta,\gamma,\lambda)$ is \unique{2}.

\item
For all sufficiently large $d$, it holds that
    \begin{align*}
        \lambda\beta^d\exp\left(\frac{d}{(1-\beta\gamma)d-3}\right)
        &<\frac{d(1-\beta\gamma)-3}{\beta}, \text{ and}\\
        \lambda\exp\left(-\frac{\gamma d}{d(1-\beta\gamma)-3}\right)
        &>\frac{\gamma}{d(1-\beta\gamma)-3}.
    \end{align*}
By contradiction, suppose that  $|f_d(\hat{x}_d)|=\frac{d(1-\beta\gamma)\hat{x}_d}{(\beta\hat{x}_d+1)(\hat{x}_d+\gamma)}\le1$. Then,
    \begin{align*}
        1\ge \frac{d(1-\beta\gamma)\hat{x}_d}{(\beta \hat{x}_d+1)(\hat{x}_d+\gamma)} = \frac{d(1-\beta\gamma)}{\beta \hat{x}_d+\frac{\gamma}{\hat{x}_d}+(1+\beta\gamma)}\ge  \frac{d(1-\beta\gamma)}{\beta \hat{x}_d+\frac{\gamma}{\hat{x}_d}+2}.
    \end{align*}

    \begin{list}{}{}
    \item[Case.1:] $\hat{x}_d\ge\gamma$. Then $\frac{\gamma}{\hat{x}_d}\le 1$. Thus,
    \begin{align*}
        1\ge  \frac{d(1-\beta\gamma)}{\beta \hat{x}_d+\frac{\gamma}{\hat{x}_d}+2} \ge \frac{d(1-\beta\gamma)}{\beta \hat{x}_d+3},
    \end{align*}
    which implies that $\hat{x}_d\ge \frac{d(1-\beta\gamma)-3}{\beta}$. However, it holds that
    \begin{align*}
        \hat{x}_d &=\lambda\left(\frac{\beta \hat{x}_d+1}{\hat{x}_d+\gamma}\right)^d
        \le\lambda\left(\frac{\beta \hat{x}_d+1}{\hat{x}_d}\right)^d
        \le\lambda\left(\beta+\frac{\beta}{d(1-\beta\gamma)-3}\right)^d\\
        &\le\lambda\beta^d\exp\left(\frac{d}{(1-\beta\gamma)d-3}\right)
        <\frac{d(1-\beta\gamma)-3}{\beta},
    \end{align*}
    a contradiction.

    \item[Case.2:] $\hat{x}_d<\gamma$. Then $\beta \hat{x}_d\le\beta\gamma<1$. Thus,
    \begin{align*}
        1\ge  \frac{d(1-\beta\gamma)}{\beta \hat{x}_d+\frac{\gamma}{\hat{x}_d}+2} \ge \frac{d(1-\beta\gamma)}{\frac{\gamma}{\hat{x}_d}+3},
    \end{align*}
    which implies that $\hat{x}_d\le \frac{\gamma}{d(1-\beta\gamma)-3}$. However, it holds that
    \begin{align*}
        \hat{x}_d &=\lambda\left(\frac{\beta \hat{x}_d+1}{\hat{x}_d+\gamma}\right)^d
        \ge\frac{\lambda}{(\hat{x}_d+1)^d}
        \ge\lambda\left(1+\frac{\gamma}{d(1-\beta\gamma)-3}\right)^{-d}\\
        &\ge\lambda\cdot\exp\left(-\frac{\gamma d}{d(1-\beta\gamma)-3}\right)
        >\frac{\gamma}{d(1-\beta\gamma)-3},
    \end{align*}
    a contradiction.
    \end{list}

\item
Let $\gamma>1$.
The fixed point $\hat{x}_d=\lambda\left(\frac{\beta\hat{x}_d+1}{\hat{x}_d+\gamma}\right)^d\le\frac{\lambda}{\gamma^d}$, and
\begin{align*}
|f_d'(\hat{x}_d)|=\frac{d(1-\beta\gamma)\hat{x}_d}{(\beta\hat{x}_d+1)(\hat{x}_d+\gamma)}\le\frac{d\lambda}{\gamma^d},
\end{align*}
which is strictly less than 1 for all sufficiently large $d$. Thus the uniqueness holds on infinite $d$-regular tree for all sufficiently large $d$.

\item
We first show that there exists a critical threshold $\gamma_c=\gamma_c(\beta,\lambda,\Delta)$ such that  $(\beta, \gamma, \lambda)$ is \unique{$\Delta$} if and only if $\gamma\in (\gamma_c,\frac{1}{\beta})$.
It is sufficient to show that if an anti-ferromagnetic $(\beta,\gamma,\lambda)$ is \unique{$\Delta$} then $(\beta,\gamma',\lambda)$ is \unique{$\Delta$} for any $\gamma' > \gamma$ and $\beta\gamma' < 1$. 

Recall that $\hat{x}_d$ is the positive fixed point of $f_d(x)=\lambda\left(\frac{\beta x+1}{x+\gamma}\right)^d$. 
Also let $\hat{x}_d'$ denote the positive solution to $x=\lambda\left(\frac{\beta x+1}{x+\gamma'}\right)^d$. 

We first show that $\hat{x}'_d< \hat{x}_d$. By contradiction, assume that $\hat{x}'_d\geq \hat{x}_d$. Since for anti-ferromagnetic $(\beta,\gamma,\lambda)$, $f_d(x)$ is monotonically decreasing, we have
\[ \hat{x}_d= \lambda\left(\frac{\beta \hat{x}_d +1}{\hat{x}_d+\gamma}\right)^d
\ge \lambda\left(\frac{\beta \hat{x}_d' +1}{\hat{x}_d'+\gamma}\right)^d
> \lambda\left(\frac{\beta \hat{x}_d' +1}{\hat{x}_d'+\gamma'}\right)^d
=\hat{x}'_d,\]
a contradiction.

Since
\[\lambda\left(\beta + \frac{(1-\beta\gamma')}{\hat{x}'_d+\gamma'}\right)^d=\hat{x}'_d<\hat{x}_d=\lambda\left(\beta + \frac{(1-\beta\gamma)}{\hat{x}_d+\gamma}\right)^d ,\]
we have 
\[\frac{(1-\beta\gamma')}{\hat{x}'_d+\gamma'} < \frac{(1-\beta\gamma)}{\hat{x}_d+\gamma}.\]
For $\hat{x}'_d<\hat{x}_d$, it also holds that
\[\frac{ \hat{x}'_d }{\beta \hat{x}'_d +1 }= \frac{1  }{\beta  +\frac{1}{\hat{x}'_d } }
<\frac{1}{\beta  +\frac{1}{\hat{x}_d } }=\frac{ \hat{x}_d }{\beta \hat{x}_d +1 }.\]
Multiplying the above two inequalities together, we have
\[\frac{ d(1-\beta\gamma')\hat{x}'_d }{ (\beta\hat{x}'_d+1) (\hat{x}_d+\gamma')}< \frac{ d(1-\beta\gamma)\hat{x}_d }{ (\beta\hat{x}_d+1) (\hat{x}_d+\gamma)}.\]
Note that these are the absolute derivatives at the respective fixed points when the parameters are $(\beta,\gamma,\lambda)$ and $(\beta,\gamma',\lambda)$.
Therefore if $(\beta,\gamma,\lambda)$ is \unique{$\Delta$} then $(\beta,\gamma',\lambda)$ is \unique{$\Delta$}.

Due to Part \ref{lemma-threshold-infty} of the lemma, if $\gamma\le 1$, $|f'_d(\hat{x}_d)|>1$ for all sufficiently large $d$, thus $\gamma_c(\beta,\lambda,\infty)>1$. And due to Part \ref{lemma-threshold-non-monotone} of the lemma, for any $\gamma\ge\gamma_c(\beta,\lambda,\infty)>1$, $|f'_d(\hat{x}_d)|$ is arbitrarily close to 0 as $d$ grows to infinity, thus $\gamma_c(\beta,\lambda,\infty)=\gamma_c(\beta,\lambda,\Delta)$ for a finite $\Delta$.

\item




When $\beta=0$, 
$|f_d'(\hat{x}_d)|=\frac{d(1-\beta\gamma)\hat{x}_d}{(\beta\hat{x}_d+1)(\hat{x}_d+\gamma)}=\frac{d\hat{x}_d}{\hat{x}_d + \gamma}$,
the uniqueness condition $|f'_d(\hat{x}_d)|<1$ is equivalent to that $\hat{x}_d< \frac{\gamma}{d-1}$ (here we assume $d>1$ since due to Part \ref{lemma-threshold-2-unique} of the lemma, for $d=1$ the uniqueness always holds). Recall that $ \hat{x}_d=\lambda\left(\frac{1}{\hat{x}_d + \gamma}\right)^d$. Then $|f_d'(\hat{x}_d)|<1$ if and only if
\begin{align*}
\lambda=\hat{x}_d (\hat{x}_d+\gamma)^d < \frac{\gamma^{d+1} d^d}{(d-1)^{d+1}}.
\end{align*}
Let $\lambda_c=\lambda_c(\gamma,\Delta)=\min_{1<d<\Delta} \frac{\gamma^{d+1}d^d}{(d-1)^{d+1}}$. It holds that $(\beta, \gamma, \lambda)$ is \unique{$\Delta$} if and only if $\lambda\in(0,\lambda_c)$.

\item
We note that $|f_d'(\hat{x}_d)|=\frac{d(1-\beta\gamma)\hat{x}_d}{(\beta\hat{x}_d+1)(\hat{x}_d+\gamma)}$ is not monotone in $\hat{x}_d$. It achieves its maximum value at $\hat{x}_d=\sqrt{\frac{\beta}{\gamma}}$.
Therefore, if for any $1\le d< \Delta$, $\frac{d(1-\beta\gamma)x}{(\beta x+1)(x+\gamma)}< 1$ for $x=\sqrt{\frac{\beta}{\gamma}}$, then $(\beta,\gamma,\lambda)$ is \unique{$\Delta$} for any $\lambda$. This condition holds when $\sqrt{\beta \gamma} > \frac{d - 1}{d + 1}$ for all $1\le d< \Delta$, i.e.~when $\sqrt{\beta \gamma} >\frac{\Delta-2}{\Delta}$.

\item \label{lemma-item-beta-larger-than-0}
Let $\theta(d)\triangleq-1-\beta \gamma +d (1-\beta \gamma)$.
It holds that $\theta(d)\ge2\sqrt{\beta \gamma}$ for all $d\ge\overline{\Delta}=\frac{1+\sqrt{\beta\gamma}}{1-\sqrt{\beta\gamma}}$.
Thus, $x_1(d),x_2(d)$ are well-defined and can be expressed as:
  \begin{eqnarray*}
    x_1(d)= \frac{\theta(d) - \sqrt{\theta(d)^2-4 \beta \gamma} }{2 \beta} \quad\text{ and }\quad
    x_2(d)= \frac{\theta(d) + \sqrt{\theta(d)^2-4 \beta \gamma} }{2 \beta}.
  \end{eqnarray*}
Recall that $\lambda_i(d)\triangleq x_i(d) \left(\frac{x_i(d) +\gamma}{\beta x_i(d) +1}\right)^d$ for $i = 1, 2$. 

The following monotonicity and unboundedness of $\lambda_2(d)$ is easy to verify.
\begin{claim}\label{claim-monotonicity-lambda2}
$ \lambda_2(d)$ is monotonically increasing in $d$ and goes to infinity as $d$ grows.
\end{claim}
\begin{proof}

Observe that for  $0\le\beta\le\gamma$, $\beta\gamma<1$ and $d\ge\overline{\Delta}$, it holds that  $\beta \le\sqrt{\beta\gamma}\le \frac{d - 1}{d + 1}< 1$ and $\frac{x + \gamma}{\beta x + 1}$ is increasing in $x$.  Moreover,
\begin{eqnarray*}
\frac{x_2(d) + \gamma}{\beta x_2(d) + 1} &=& \frac{1}{\beta}\cdot\frac{(d-1)(1-\beta\gamma)+\sqrt{\theta(d)^2 - 4\beta\gamma}}{(d + 1)(1 - \beta\gamma) + \sqrt{\theta(d)^2 - 4\beta\gamma}}\\
&\geq& \frac{d + 1}{d - 1}\cdot\frac{(d - 1)(1 - \beta\gamma)}{(d + 1)(1 - \beta\gamma)} = 1.
\end{eqnarray*}

It is easy to verify that $x_2(d)$ is increasing in $d$ and unbounded as $d$ grows. 
Therefore, we can conclude that 
$\lambda_2(d) = x_2(d)\left(\frac{x_2(d) + \gamma}{\beta x_2(d) + 1}\right)^d$ is increasing in $d$ and is unbounded as $d$ grows.
\end{proof}

We proceed to analyze the uniqueness regime.
For $1\le d< \overline{\Delta} $, it holds that $\sqrt{\beta\gamma} > \frac {d - 1}{d + 1}$, in which case it always holds that $|f'_d(\hat{x}_d)|<1$ due to Part \ref{lemma-threshold-free-external} of this lemma.

Recall that $\sqrt{\beta\gamma} \le \frac{\Delta - 2}{\Delta}$. 
It remains to analyze the $|f'_d(\hat{x}_d)|$ for those $\overline{\Delta}\le d<\Delta$. 
For such $d$'s, we have $\sqrt{\beta\gamma}\le \frac{d - 1}{d + 1}$,
and since $x_1(d)$ and $x_2(d)$ are the roots of equation $\frac{d(1-\beta\gamma)x}{(\beta x+1)(x+\gamma)}=1$,
it holds  that $|f'_d(\hat{x}_d)|=\frac{d(1 - \beta\gamma)\hat{x}_d}{(\beta\hat{x}_d + 1)(\hat{x}_d + \gamma)}<1$ if and only if  $\hat{x}_d<x_1(d)$ or $\hat{x}_d>x_2(d)$. 
Note that $x\left(\frac{x + \gamma}{\beta x + 1}\right)^d$ is monotonically increasing in $x$. Thus:
\begin{align*}
\hat{x}_d<x_1(d) 
\quad&\Longleftrightarrow\quad \lambda=\hat{x}_d\left(\frac{\hat{x}_d + \gamma}{\beta\hat{x}_d + 1}\right)^d<x_1(d) \left(\frac{x_1(d) +\gamma}{\beta x_1(d) +1}\right)^d=\lambda_1(d),\\
\hat{x}_d>x_2(d)
\quad&\Longleftrightarrow\quad \lambda=\hat{x}_d\left(\frac{\hat{x}_d + \gamma}{\beta\hat{x}_d + 1}\right)^d>x_2(d) \left(\frac{x_2(d) +\gamma}{\beta x_2(d) +1}\right)^d=\lambda_2(d),
\end{align*}
which means for any $\overline{\Delta}\le d<\Delta$, $|f'_d(\hat{x}_d)|<1$ if and only if $\lambda\in(0,\lambda_1(d))\cup(\lambda_2(d),\infty)$.

Therefore, $(\beta, \gamma, \lambda)$ is \unique{$\Delta$} if and only if $\lambda$ belongs to the regime
\begin{align*}
\bigcap_{\overline{\Delta}\leqslant d<\Delta}\left((0,\lambda_1(d))\cup(\lambda_2(d),\infty)\right), 
\end{align*}
which is exactly the uniqueness regime~\eqref{eq:uniqueness-regime-external-field}.
Recall that
\begin{align*}
\lambda_c
=\lambda_c(\beta, \gamma, \Delta) \triangleq \min_{\overline{\Delta} \leqslant d < \Delta}\lambda_1(d)\quad\text{ and }\quad
\bar{\lambda}_c
= \bar{\lambda}_c(\beta,\gamma,\Delta) \triangleq \max_{\overline{\Delta} \leqslant d < \Delta}\lambda_2(d)=\lambda_2(\Delta-1),
\end{align*}
where the last equation is due to the monotonicity of $\lambda_2(d)$. 


We then show that for $\gamma\le 1$, the uniqueness regime~\eqref{eq:uniqueness-regime-external-field} becomes $(0,\lambda_c)\cup(\bar{\lambda}_c,\infty)$. 
We need  the following claims, whose proofs are postponed to the end of this section.
\begin{claim}\label{claim-uniqueness-regime-2-interval}
For $\overline{\Delta}\le d_0< d_1\le\infty$, it always holds that 
\[
\left(0, \min_{d_0\leqslant d< d_1}\lambda_1(d)\right)\cup\left( \max_{d_0\leqslant d< d_1}\lambda_2(d),\infty\right)
\subseteq \bigcap_{d_0\leqslant d< d_1}\left((0,\lambda_1(d))\cup(\lambda_2(d),\infty)\right);
\]
furthermore, if $\lambda_1(d+1)\le \lambda_2(d)$ for all integers $d_0\le d<d_1-1$, then
\[
\left(0, \min_{d_0\leqslant d< d_1}\lambda_1(d)\right)\cup\left( \max_{d_0\leqslant d< d_1}\lambda_2(d),\infty\right)=\bigcap_{d_0\leqslant d< d_1}\left((0,\lambda_1(d))\cup(\lambda_2(d),\infty)\right).
\]
\end{claim}

\begin{claim}\label{claim-uniqueness-regime-sandwich}
The followings hold:
\begin{itemize}
\item if $\gamma\le 1$, then $\lambda_1(d+1)\le \lambda_2(d)$ for all integers $d\ge\overline{\Delta}$;
\item if $\gamma>1$, there is a finite $d_0=d_0(\beta,\gamma)\ge\overline{\Delta}$ such that $\lambda_1(d+1)\le \lambda_2(d)$ for all $d\ge d_0$.
\end{itemize}
\end{claim}

Now assume that $\gamma\le 1$.
By Claim~\ref{claim-uniqueness-regime-sandwich},  $\lambda_1(d+1)\le \lambda_2(d)$ for all integers $\overline{\Delta}\le d<\Delta-1$.
Then due to Claim~\ref{claim-uniqueness-regime-2-interval}, the uniqueness regime~\eqref{eq:uniqueness-regime-external-field} becomes 
\[
\bigcap_{\overline{\Delta}\leqslant d< \Delta}\left((0,\lambda_1(d))\cup(\lambda_2(d),\infty)\right)=\left(0, \lambda_c\right)\cup\left( \bar{\lambda}_c,\infty\right).
\]

In particular, consider $\beta=\gamma$, the anti-ferromagnetic Ising model.
It is easy to verify that $x_1(d)\cdot x_2(d)=\frac{\gamma}{\beta}=1$, and thus 
\[
\lambda_1(d)\cdot\lambda_2(d)=x_1(d)\cdot x_2(d) \left(\frac{x_1(d) +\gamma}{\beta x_1(d) +1}\cdot\frac{x_2(d) +\gamma}{\beta x_2(d) +1}\right)^d=1.
\] 
Therefore, $\lambda_1(d)=1/\lambda_2(d)$ is monotonically decreasing, due to the monotonicity of $\lambda_2(d)$ guaranteed by Claim~\ref{claim-monotonicity-lambda2}.
As a result, $\lambda_c=\lambda_1(\Delta-1)$, $\bar{\lambda}_c=\lambda_2(\Delta-1)$, and $\lambda_c\cdot\bar{\lambda}_c=\lambda_1(\Delta-1)\cdot\lambda_2(\Delta-1)=1$.
Then $(\beta,\beta,\lambda)$ is \unique{$\Delta$} if and only if $|\log\lambda|>\log\bar{\lambda}_c$.

\item
The case of $\beta=0$ is taken cared of in Part~\ref{lemma-threshold-hardcore} of this lemma. In such case, due to Part~\ref{lemma-threshold-hardcore} of the lemma, $(0, \gamma, \lambda)$ is \unique{$\Delta$} if and only if $\lambda\in(0,\lambda_c(\gamma,\Delta))$ where $\lambda_c(\gamma,\Delta)=\min_{1<d<\Delta} \frac{\gamma^{d+1}d^d}{(d-1)^{d+1}}$.
In particular, $\lambda_c(\gamma,\infty)=\min_{d>1} \frac{\gamma^{d+1}d^d}{(d-1)^{d+1}}$ becomes 0 if $\gamma\le 1$ and becomes a positive and finite $\lambda_c(\gamma)>0$ if $\gamma>1$. Therefore, $(0, \gamma, \lambda)$ is universally unique only if $\gamma>1$, and when $\gamma>1$, $(0, \gamma, \lambda)$ is universally unique if and only if $\lambda\in(0,\lambda_c(\gamma))$. We can let $\lambda_c^{\mathsf{lower}}(\gamma)=\lambda_c^{\mathsf{upper}}(\gamma)=\lambda_c(\gamma)$ and the lemma follows.

For the case of $\beta>0$, due to Part~\ref{lemma-threshold-external} of the lemma, for all $\Delta\ge \overline{\Delta}\triangleq\frac{1+\sqrt{\beta\gamma}}{1-\sqrt{\beta\gamma}}$, the tuple $(\beta, \gamma, \lambda)$ is \unique{$\Delta$} if and only if  $\lambda$ belongs to the following regime:
\begin{align*}
\bigcap_{\overline{\Delta}\leqslant d<\Delta}\left((0,\lambda_1(d))\cup(\lambda_2(d),\infty)\right). 
\end{align*}
When $\gamma\le 1$, 
this regime becomes  $(0,\lambda_c(\beta,\gamma,\Delta))\cup (\bar{\lambda}(\beta,\gamma,\Delta),\infty)$, where $\lambda_c(\beta, \gamma, \Delta) = \min_{\overline{\Delta} \leqslant d < \Delta}\lambda_1(d)$ and $\bar{\lambda}_c(\beta,\gamma,\Delta) = \max_{\overline{\Delta} \leqslant d < \Delta}\lambda_2(d)=\lambda_2(\Delta-1)$. 

\begin{claim}\label{claim-uniqueness-regime-lambda1-limit}
The followings hold:
\begin{itemize}
\item if $\gamma\le 1$, then $\lambda_1(d)$ approaches 0 as $d$ grows to infinity;
\item if $\gamma> 1$, then $\lambda_1(d)$ is unbounded as $d$ grows to infinity.
\end{itemize}
\end{claim}
\begin{proof}
For integers $d\ge\overline{\Delta}$, let $\theta(d)\triangleq-1-\beta \gamma +d (1-\beta \gamma)$, and let 
\begin{align}
\hat{\lambda}_1(d)\triangleq \frac{2\gamma^{d+1}}{\theta(d)}\left(\frac{d+1}{d-1}\right)^d.\label{eq:uniqueness-regime-lambda-1-hat}
\end{align}
When $\gamma\le 1$, it is easy to verify that $\hat{\lambda}_1(d)$ approaches 0 as $d$ grows to infinity.

Moreover, $\lambda_1(d)$ is upper bounded by $\hat{\lambda}_1(d)$. Specifically, for $d\ge\overline{\Delta}$,
\begin{align*}
x_1(d)
&=
\frac{\theta(d) - \sqrt{\theta(d)^2-4 \beta \gamma} }{2 \beta}
=
\frac{2\gamma}{\theta(d) +\sqrt{\theta(d)^2-4 \beta \gamma} }
\le
\frac{2\gamma}{\theta(d)}.
\end{align*}
Since $\frac{x+\gamma}{\beta x+1}$ is increasing in $x$ when $\beta\gamma<1$, we have
\[
\frac{x_1(d)+\gamma}{\beta x_1(d)+1}
\le
\frac{\frac{2\gamma}{\theta(d)}+\gamma}{\frac{2\beta\gamma}{\theta(d)}+1}
=
\gamma\cdot\frac{d+1}{d-1}.
\]
Thus, for all  $d\ge\overline{\Delta}$,
\begin{align}
\lambda_1(d) 
&= x_1(d)\left(\frac{x_1(d) + \gamma}{\beta x_1(d) + 1}\right)^d
\le
\frac{2\gamma^{d+1}}{\theta(d)}\left(\frac{d+1}{d-1}\right)^d
=\hat{\lambda}_1(d).\label{eq:uniqueness-regime-lambda-1-upper-bound}
\end{align}
Therefore, when $\gamma\le 1$, $\lambda_1(d)$ approaches 0 as $d$ grows to infinity.

On the other hand, when $\gamma > 1$, for $d\ge\overline{\Delta}$,
  \begin{align*}
    \lambda_1(d) 
    = x_1(d)\left(\frac{x_1(d) + \gamma}{\beta x_1(d) + 1}\right)^d
    \geq x_1(d)\cdot\gamma^d
    = \frac{2\gamma^{d+1}}{\theta(d) + \sqrt{\theta(d)^2-4 \beta \gamma}},
  \end{align*}
which is obviously unbounded as $d$ grows.
\end{proof}

When $\lambda\le 1$, due to above claim, $\lambda_1(d)$ approaches 0 as $d$ grows to infinity, which means $\lambda_c(\beta,\gamma,\infty)=0$; and by Claim~\ref{claim-monotonicity-lambda2},  $\lambda_2(d)$ goes to infinity as $d$ grows to infinity, which means $\bar{\lambda}_c(\beta,\gamma,\infty)=\infty$. Therefore, $(\beta, \gamma, \lambda)$ can be universally unique only when $\gamma>1$. 

Now assume $\gamma>1$. By Claim~\ref{claim-uniqueness-regime-lambda1-limit}, $\lambda_1(d)$ is unbounded as $d$ grows.
Thus, $\lambda_1(d)$ achieves its minimum value at a finite $d=d(\beta,\gamma)\ge\overline{\Delta}$.
And it is obvious that $\lambda_1(d)$ is always finite and positive for any finite $d\ge \overline{\Delta}$.
Therefore, $\lambda_c(\beta, \gamma, \infty) = \lambda_c(\beta, \gamma, d(\beta,\gamma))$ is finite and positive.
Let $\lambda_c^{\mathsf{lower}}(\beta, \gamma)=\lambda_c(\beta, \gamma, \infty)>0$ denote this finite positive threshold.

On the other hand, by Claim~\ref{claim-uniqueness-regime-sandwich}, for $\gamma>1$, there is a finite $d_0=d_0(\beta,\gamma)\ge\overline{\Delta}$ such that $\lambda_1(d+1)\le\lambda_2(d)$ for all $d\ge d_0$.
Let $\lambda_c^{\mathsf{upper}}(\beta, \gamma)=\min_{d\ge d_0}\lambda_1(d)$.
Obviously, $\lambda_c^{\mathsf{upper}}(\beta, \gamma)\ge \lambda_c^{\mathsf{lower}}(\beta, \gamma)>0$. And $\lambda_c^{\mathsf{upper}}(\beta, \gamma)\le \lambda_1(d_0(\beta,\gamma))$ is also finite.

Due to Claim~\ref{claim-uniqueness-regime-2-interval} and the monotonicity and unboundedness of $\lambda_2(d)$ by Claim~\ref{claim-monotonicity-lambda2}, the regime for $\infty$-uniqueness, namely $\bigcap_{ d\geqslant\overline{\Delta}}\left((0,\lambda_1(d))\cup(\lambda_2(d),\infty)\right)$, is sandwiched as follows:
\begin{align*}
\left(0,\lambda_c^{\mathsf{lower}}(\beta, \gamma)\right)
&\subseteq
\bigcap_{ d\geqslant\overline{\Delta}}\left((0,\lambda_1(d))\cup(\lambda_2(d),\infty)\right)\\
&\subseteq 
\bigcap_{d\geqslant d_0}\left((0,\lambda_1(d))\cup(\lambda_2(d),\infty)\right)\\
&=
\left(0,\lambda_c^{\mathsf{upper}}(\beta, \gamma)\right)\cup(\lambda_2(\infty),\infty)\\
&=\left(0,\lambda_c^{\mathsf{upper}}(\beta, \gamma)\right).
\end{align*}
Therefore, 
$(\beta, \gamma, \lambda)$ is universally unique, if $\lambda<\lambda_c^{\mathsf{lower}}(\beta, \gamma)$
and only if $\lambda<\lambda_c^{\mathsf{upper}}(\beta, \gamma)$.
\end{enumerate}
\end{proof}

It now remains to prove Claim~\ref{claim-uniqueness-regime-2-interval} and Claim~\ref{claim-uniqueness-regime-sandwich}.
\begin{proof}[Proof of Claim~\ref{claim-uniqueness-regime-2-interval}]
First, it is easy to verify that for $\overline{\Delta}\le d_0< d_1\le\infty$, it always holds that
\[
\left(0, \min_{d_0\leqslant d< d_1}\lambda_1(d)\right)\cup\left( \max_{d_0\leqslant d< d_1}\lambda_2(d),\infty\right)
\subseteq \bigcap_{d_0\leqslant d< d_1}\left((0,\lambda_1(d))\cup(\lambda_2(d),\infty)\right),
\]
since for all $d_0\le d< d_1$, it always holds that 
\[
\left(0, \min_{d_0\leqslant d< d_1}\lambda_1(d)\right)\subseteq (0,\lambda_1(d))
\quad\text{ and }\quad\left( \max_{d_0\leqslant d< d_1}\lambda_2(d),\infty\right)\subseteq (\lambda_2(d),\infty).
\]

Next, we prove that the two regimes are equal if $\lambda_1(d+1)\le \lambda_2(d)$ for all integers $d_0\le d<d_1-1$.
This is proved by induction.

The equality in the claim is trivial to hold when $d_0< d_1\le d_0+1$, because in this case there is at most one integer $d$ satisfying $d_0\le d<d_1$. Then both sides of the equality becomes
\[
\left((0,\lambda_1(d))\cup(\lambda_2(d),\infty)\right).
\]
This establishes the induction basis.

For the induction hypothesis, assume that the equality holds for $d_1>d_0$:
\begin{align}
\bigcap_{d_0\leqslant d< d_1}\left((0,\lambda_1(d))\cup(\lambda_2(d),\infty)\right)=\left(0, \lambda_1^{< d_1}\right)\cup\left( \lambda_2^{< d_1},\infty\right),\label{eq:uniqueness-regime-2-interval}
\end{align}
where we denote 
\[
\lambda_1^{< d_1}\triangleq\min_{d_0\leqslant d< d_1}\lambda_1(d)\quad\text{ and }\quad\lambda_2^{< d_1}\triangleq\max_{d_0\leqslant d< d_1}\lambda_2(d).
\]

For the induction step, we want to establish the equality when $d_1$ is increased by 1.
Thus in addition, we assume that $\lambda_1(d+1)\le \lambda_2(d)$ for the unique integer $d\in[d_1-1,d_1)$, that is
\begin{align}
\lambda_1(\left\lceil d_1\right\rceil)\le \lambda_2(\left\lceil d_1\right\rceil-1). \label{eq:uniqueness-regime-2-interval-assumption}
\end{align}
To finish the induction, it is then sufficient to verify that 
\[
\bigcap_{d_0\leqslant d< d_1+1}\left((0,\lambda_1(d))\cup(\lambda_2(d),\infty)\right)=\left(0, \lambda_1^{< d_1+1}\right)\cup\left( \lambda_2^{< d_1+1},\infty\right),
\]
assuming the induction hypothesis~\eqref{eq:uniqueness-regime-2-interval}. Therefore, it is sufficient to verify that 
\begin{align}
&\left(\left(0, \lambda_1^{< d_1}\right)\cup\left( \lambda_2^{< d_1},\infty\right)\right)\cap
\left((0,\lambda_1(\left\lceil d_1\right\rceil))\cup(\lambda_2(\left\lceil d_1\right\rceil),\infty)\right)\notag\\
=&\left(0, \lambda_1^{< d_1+1}\right)\cup\left( \lambda_2^{< d_1+1},\infty\right).\label{eq:uniqueness-regime-2-interval-induction}
\end{align}
Note that $\left\lceil d_1\right\rceil$ is the unique integer in $[d_1,d_1+1)$.
And due to the monotonicity of $\lambda_2(d)$, we have $\lambda_2(\left\lceil d_1\right\rceil)\ge  \lambda_2^{< d_1}=\lambda_2(\left\lceil d_1\right\rceil-1)$.
Therefore, $(\lambda_2(\left\lceil d_1\right\rceil),\infty)\subseteq \left( \lambda_2^{< d_1},\infty\right)$, and thus~\eqref{eq:uniqueness-regime-2-interval-induction} holds as long as $\lambda_1(\left\lceil d_1\right\rceil)\le \lambda_2^{< d_1}=\lambda_2(\left\lceil d_1\right\rceil-1)$, which is guaranteed by the assumption~\eqref{eq:uniqueness-regime-2-interval-assumption}.
\end{proof}

\begin{proof}[Proof of Claim~\ref{claim-uniqueness-regime-sandwich}]
Recall function $\hat{\lambda}_1(d)$ defined in~\eqref{eq:uniqueness-regime-lambda-1-hat}: For integers $d\ge\overline{\Delta}$, let $\hat{\lambda}_1(d)\triangleq \frac{2\gamma^{d+1}}{\theta(d)}\left(\frac{d+1}{d-1}\right)^d$, where $\theta(d)\triangleq-1-\beta \gamma +d (1-\beta \gamma)$.
By~\eqref{eq:uniqueness-regime-lambda-1-upper-bound}, for all $d\ge\overline{\Delta}$, we have 
$\lambda_1(d)\le \hat{\lambda}_1(d)$.

It also holds that $\hat{\lambda}_1(d)\le\lambda_2(d)$.
For all $d\ge\overline{\Delta}$, we have
$x_2(d)=\frac{\theta(d) + \sqrt{\theta(d)^2-4 \beta \gamma} }{2 \beta}\ge\frac{\theta(d)}{2\beta}$.
Since $\frac{x+\gamma}{\beta x+1}$ is monotonically increasing in $x$ when $\beta\gamma<1$, it holds that
$\frac{x_1(d)+\gamma}{\beta x_1(d)+1}\ge
\frac{\frac{\theta(d)}{2\beta}+\gamma}{\frac{\theta(d)}{2}+1}=
\frac{1}{\beta}\cdot\frac{d-1}{d+1}$.
Meanwhile, for all  $d\ge\overline{\Delta}=\frac{1+\sqrt{\beta\gamma}}{1-\sqrt{\beta\gamma}}$, it holds that $\beta\gamma\le\left(\frac{d-1}{d+1}\right)^2$ and $\theta(d)^2\ge4\beta\gamma$.
Therefore, 
\begin{align}
\lambda_2(d) 
&= x_2(d)\left(\frac{x_2(d) + \gamma}{\beta x_2(d) + 1}\right)^d
\ge
\frac{\theta(d)}{2\beta^{d+1}}\left(\frac{d-1}{d+1}\right)^d
\ge
\frac{2\gamma^{d+1}}{\theta(d)}\left(\frac{d+1}{d-1}\right)^d
=
\hat{\lambda}_1(d).\label{eq:uniqueness-regime-hat-lambda-1-lambda-2}
\end{align}
Assume $\gamma\le 1$.
It is also not hard to observe that $\hat{\lambda}_1(d)$ is decreasing in $d\ge\overline{\Delta}>1$.
Indeed, if $\gamma\le 1$, then $\frac{2\gamma^{d+1}}{\theta(d)}>0$ is monotonically decreasing in $d\ge\overline{\Delta}$; and we can verify the monotonicity of function $h(d)\triangleq\left(\frac{d+1}{d-1}\right)^d$ for $d\ge\overline{\Delta}>1$ as follows:
\begin{align*}
\left(\ln h(d)\right)'
&= 
-\frac{2d}{d^2-1}+\ln\left(\frac{d+1}{d-1}\right) \to 0 \text{ as } d\to\infty,\\
\left(\ln h(d)\right)''
&=
\frac{4}{d^2-1}>0,
\end{align*}
which guarantees that $h(d)$ is monotonically decreasing in $d$, and hence $\hat{\lambda}_1(d)=\frac{2\gamma^{d+1}}{\theta(d)}\cdot h(d)$ is decreasing in $d$ for $d\ge\overline{\Delta}$, since both $\frac{2\gamma^{d+1}}{\theta(d)}$ and $h(d)$ are positive and decreasing in $d$.


Therefore, when $\gamma\le 1$, due to~\eqref{eq:uniqueness-regime-lambda-1-upper-bound}, the monotonicity of $\hat{\lambda}_1(d)$, and~\eqref{eq:uniqueness-regime-hat-lambda-1-lambda-2}, 
for all $d\ge\overline{\Delta}$,
\[
\lambda_1(d+1)\le \hat{\lambda}_1(d+1)\le \hat{\lambda}_1(d)\le \lambda_2(d).
\]
This proves that $\lambda_1(d+1)\le \lambda_2(d)$ for all $d\ge\overline{\Delta}$ when $\gamma\le 1$.

Finally, when $\gamma> 1$, by contradiction suppose that $\lambda_1(d+1)> \lambda_2(d)$ for some $d\ge\overline{\Delta}$. We then deduce a finite upper bound $d_0(\beta,\gamma)$ on such bad $d$. 
By~\eqref{eq:uniqueness-regime-lambda-1-upper-bound}, we have $\lambda_1(d+1)\le\hat{\lambda}_1(d+1)=\frac{2\gamma^{d+2}}{\theta(d+1)}\cdot h(d+1)\le \frac{2\gamma^{d+2}}{\theta(d)}\cdot h(d)$, where the last inequality is due to that $\theta(d)$ is increasing and $h(d)$ is decreasing; and by~\eqref{eq:uniqueness-regime-hat-lambda-1-lambda-2}, we have $\lambda_2(d)\ge\frac{\theta(d)}{2\beta^{d+1}}\cdot \frac{1}{h(d)}$.
Hence  $\lambda_1(d+1)> \lambda_2(d)$ implies that
\begin{align}
\left(\frac{1}{\beta\gamma}\right)^{d}\frac{\theta(d)^2}{h(d)^2}
<4\beta\gamma^2.\label{eq:uniqueness-regime-bad-d-upper-bound}
\end{align}
Recall that for $d\ge\overline{\Delta}=\frac{1+\sqrt{\beta\gamma}}{1-\sqrt{\beta\gamma}}$, it always holds that $\theta(d)^2\ge 4\beta\gamma$. And without loss of generality, we only consider large enough $d\ge 2$, and thus $h(d)\le h(2)=9$ since $h(d)$ is decreasing. Therefore inequality~\eqref{eq:uniqueness-regime-bad-d-upper-bound} implies that
\[
\left(\frac{1}{\beta\gamma}\right)^{d}<81\gamma,
\]
which gives us a necessary condition $d<\log_{{1}/{\beta\gamma}}(81\gamma)$  for $\lambda_{1}(d+1)>\lambda_2(d)$, among the $d$'s that $d\ge\overline{\Delta}$ and $d\ge 2$.
Let 
\[
d_0(\beta,\gamma)\triangleq\max\left\{2, \overline{\Delta}, \log_{\frac{1}{\beta\gamma}}(81\gamma)\right\}.
\]
It then must hold that $\lambda_{1}(d+1)\le \lambda_2(d)$ for all $d\ge d_0(\beta,\gamma)$.
\end{proof}

\section{Proof of  Lemma \ref{lemma-epsilon} (a Mean Value Theorem approach)}\label{app-MVT}
We prove Lemma \ref{lemma-epsilon}. The notations are defined in Section \ref{section-correlation-decay}.


\ProofLemmaEpsilon

\section{Finding a good potential function heuristically}\label{section-potential}
Perhaps the most mysterious step in our proof is the choice of the potential function $\Phi(x)=\frac{1}{\sqrt{x (\beta x +1)(x+\gamma)}}$. 
As in many cases where potential analysis is applied, there is no standard routine for searching for a suitable potential function. On the other hand, it is quite unlikely that we can just guess such a fairly complicated formula without any hints. Here we present a heuristic approach which leads us to the discovery of a good potential function. This part is not rigorous and logically unnecessary for the soundness of our result. Nevertheless, this heuristic approach is general and interesting enough and may find its applications in other scenarios, thus deserves an exposition here.

The heuristics consists of three steps:
\begin{enumerate}
  \item Find a necessary condition for the potential function, which is an equation related to the potential function at one point.
  \item (heuristic step) Enhance the condition by assuming that the equation holds for the whole range of the variable, which gives us a differential equation.
  \item Solve the differential equation and get a potential function.
\end{enumerate}


We first assume that the system is at the boundary of uniqueness. This means that $d$ is the  critical degree and
$f'(\hat{x})= -1$, where $f(x)=\lambda\left(\frac{\beta x+1}{x+\gamma}\right)^d$ and $\hat{x}=f(\hat{x})$ is the positive fixed point.
%

We have the following identities:
\begin{align*}
\hat{x}
=\lambda\left(\frac{\beta\hat{x}+1}{\hat{x}+\gamma}\right)^d
\text{ and }1=\lambda d(1-\beta\gamma)\frac{(\beta\hat{x}+1)^{d-1}}{(\hat{x}+\gamma)^{d+1}},
\end{align*}
which together implies another identity
\begin{align}
d(1-\beta\gamma)\hat{x}
&=(\beta\hat{x}+1)(\hat{x}+\gamma).\label{eq:heuristic-identity}
\end{align}

The goal is to find a potential function such that $\alpha(x)=|f'(x)|\frac{\Phi\left(f(x)\right)}{\Phi(x)}\leq 1$ for all $x$.
On the other hand, we have
$\alpha(\hat{x})=|f'(\hat{x})|\frac{\Phi(f(\hat{x}))}{\Phi(\hat{x})}=1\cdot \frac{\Phi(\hat{x})}{\Phi(\hat{x})}=1$.
So $\alpha(x)$ achieves its maximum when $x=\hat{x}$.
As a differentiable function, it holds that $\alpha'(\hat{x})=0$, that is
\[\left.\left[f'(x)\cdot\frac{\Phi\left(f(x)\right)}{\Phi(x)}\right]'\right|_{x=\hat{x}}=0.\]

We have the following calculation:
\begin{eqnarray*}
& & \left[f'(x)\cdot\frac{\Phi(f(x))}{\Phi(x)}\right]'_{x=\hat{x}}=0 \\
&\Leftrightarrow&\left.\left[f'(x)\Phi(f(x))\right]'\Phi(x)\right|_{x=\hat{x}}=\left.f'(x)\Phi(f(x))\Phi'(x)\right|_{x=\hat{x}}\nonumber\\
&\Leftrightarrow&\left[f''(\hat{x})\Phi(f(\hat{x}))+f'(\hat{x})\Phi'(f(\hat{x}))f'(\hat{x})\right]\Phi(\hat{x})=f'(\hat{x})\Phi(f(\hat{x}))\Phi'(\hat{x})\\
&\Leftrightarrow&f''(\hat{x})\Phi(\hat{x})+\Phi'(\hat{x})=-\Phi'(\hat{x})\\
&\Leftrightarrow&-\frac{f''(\hat{x})}{2}=\frac{\Phi'(\hat{x})}{\Phi(\hat{x})}=\left(\ln(\Phi(\hat{x}))\right)',
\end{eqnarray*}
where we use the fact that $\hat{x}=f(\hat{x})$ and $f'(\hat{x})=-1$.

Taking the second derivative of $f(x)$, we have
\begin{eqnarray*}
f''(x)=\lambda d(\beta\gamma-1)\frac{(\beta x+1)^{d-2}}{(x+\gamma)^{d+2}}\left((d-1)\beta(x+\gamma)-(d+1)(\beta x+1)\right).
\end{eqnarray*}
This equation already gives an identity for $\Phi$. But it is too complicated. We may further simplify  the above formula at the point $x=\hat{x}$.
\begin{eqnarray*}
f''(\hat{x}) &=&\lambda d(\beta\gamma-1)\frac{(\beta \hat{x}+1)^{d-2}}{(\hat{x}+\gamma)^{d+2}}\left((d-1)\beta(\hat{x}+\gamma)-(d+1)(\beta \hat{x}+1)\right) \\
&=&\frac{1}{(\beta \hat{x}+1)(\hat{x}+\gamma)}\left((d+1)(\beta \hat{x}+1)-(d-1)\beta(\hat{x}+\gamma)\right)\\
&=& \frac{d+1}{\hat{x}+\gamma}-\frac{(d-1)\beta}{\beta \hat{x}+1}\\
&=& \frac{d(1-\beta \gamma)}{(\beta \hat{x}+1)(\hat{x}+\gamma)}+ \frac{1}{\hat{x}+\gamma}+\frac{\beta}{\beta \hat{x}+1}\\
&=& \frac{1}{\hat{x}}+ \frac{1}{\hat{x}+\gamma}+\frac{\beta}{\beta \hat{x}+1}.
 \end{eqnarray*}

So we have
\begin{align}
\left(\ln\left(\Phi(\hat{x})\right)\right)' = -\frac{f''(\hat{x})}{2} =-\frac{1}{2}(\frac{1}{\hat{x}}+ \frac{1}{\hat{x}+\gamma}+\frac{\beta}{\beta \hat{x}+1}).\label{eq:heuristic}
\end{align}
We apply the heuristics and assume that \eqref{eq:heuristic} simply holds for all $x$. This gives us a differential equation
  \[\left(\ln(\Phi(x))\right)'  =-\frac{1}{2}(\frac{1}{x}+ \frac{1}{x+\gamma}+\frac{\beta}{\beta x+1}).\]
The solution of this differential equation is
\[\ln(\Phi(x))=-\frac{1}{2} \ln (x (x+\gamma) (\beta x+1))+C_1,\]
which gives that
\[\Phi(x)=\frac{C_2}{\sqrt{x (\beta x +1)(x+\gamma)}},\]
where $C_1,C_2$ are some constants. This gives the potential function we used in the paper.

There are also other equations which hold for the fixed point $\hat{x}$. Choosing which equation for $\hat{x}$ to heuristically extend to all $x$ may affect the potential function we obtained. For example, \eqref{eq:heuristic-identity} implies that
\[\frac{1}{d \hat{x}}=\frac{1-\beta\gamma}{(\beta\hat{x}+1)(\hat{x}+\gamma)}=\frac{1}{\hat{x}+\gamma}-\frac{\beta}{\beta \hat{x}+1},\]
thus we can rewrite $f''(\hat{x})$ as
\[f''(\hat{x})=\frac{1}{\hat{x}}+ \frac{1}{\hat{x}+\gamma}+\frac{\beta}{\beta \hat{x}+1}= \frac{d+1}{d \hat{x}}+\frac{2\beta}{\beta \hat{x}+1}.\]
This gives that
\[ \left(\ln(\Phi(\hat{x}))\right)'= -\frac{d+1}{2d \hat{x}}-\frac{\beta}{\beta \hat{x}+1}.\]
We can also treat this equation as a differential equation for variable $x$ and solving it gives us
\[\Phi(x)=\frac{c}{x^{\frac{d+1}{2d}} (\beta x +1)},\]
which is the potential function used in~\cite{LLY}.

\end{document}